\providecommand{\tabularnewline}{\\}
\theoremstyle{plain}
\newtheorem{thm}{\protect\theoremname}
\theoremstyle{definition}
\newtheorem{defn}[thm]{\protect\definitionname}
\theoremstyle{plain}
\newtheorem{conjecture}[thm]{\protect\conjecturename}
\theoremstyle{definition}
\newtheorem{example}[thm]{\protect\examplename}
\theoremstyle{plain}
\newtheorem{lem}[thm]{\protect\lemmaname}
\theoremstyle{plain}
\newtheorem{prop}[thm]{\protect\propositionname}
\theoremstyle{theorem}
\newtheorem{claim}[thm]{\protect\claimname}
\theoremstyle{plain}
\newtheorem{cor}[thm]{\protect\corollaryname}
\theoremstyle{plain}
\newtheorem{fact}[thm]{\protect\factname}
\providecommand{\claimname}{Claim}
\providecommand{\conjecturename}{Conjecture}
\providecommand{\corollaryname}{Corollary}
\providecommand{\definitionname}{Definition}
\providecommand{\examplename}{Example}
\providecommand{\factname}{Fact}
\providecommand{\lemmaname}{Lemma}
\providecommand{\propositionname}{Proposition}
\providecommand{\theoremname}{Theorem}
\begin{document}
\global\long\def\poly{\mathrm{{poly}}}%
\global\long\def\polylog{\mathrm{{polylog}}}%

\global\long\def\zo{\mathrm{\{0,1\}}}%

\global\long\def\mo{\mathrm{{-1,1}}}%

\global\long\def\e{\mathrm{\epsilon}}%

\global\long\def\d{\mathrm{\delta}}%

\global\long\def\a{\alpha}%

\global\long\def\b{\beta}%

\global\long\def\eps{\mathrm{\epsilon}}%

\global\long\def\z{\mathrm{\zeta}}%

\global\long\def\E{\mathrm{\mathbb{E}}}%

\global\long\def\F{\mathrm{\mathbb{F}}}%

\global\long\def\Z{\mathrm{\mathbb{Z}}}%

\global\long\def\R{\mathrm{\mathbb{R}}}%

\global\long\def\C{\mathrm{\mathbb{C}}}%

\global\long\def\P{\mathrm{\mathbb{P}}}%

\global\long\def\sign{\mathrm{{Sign}}}%

\global\long\def\maj{\mathrm{{Maj}}}%

\title{On correlation bounds against polynomials}
\author{Peter Ivanov{\normalsize{}{*}} \hspace{2cm} Liam Pavlovic$^{*}$
\hspace{2cm} Emanuele Viola\thanks{Supported by NSF grant CCF-2114116. Liam Pavlovic supported by Research
Experience for Undergraduates (REU) supplement.}}
\maketitle
\begin{abstract}
We study the fundamental challenge of exhibiting explicit functions
that have small correlation with low-degree polynomials over $\F_{2}$.
Our main contributions include:

1. In STOC 2020, CHHLZ introduced a new technique to prove correlation
bounds. Using their technique they established new correlation bounds
for low-degree polynomials. They conjectured that their technique
generalizes to higher degree polynomials as well. We give a counterexample
to their conjecture, in fact ruling out weaker parameters and showing
what they prove is essentially the best possible. 

2. We propose a new approach for proving correlation bounds with the
central ``mod functions,'' consisting of two steps: (I) the polynomials
that maximize correlation are symmetric and (II) symmetric polynomials
have small correlation. Contrary to related results in the literature,
we conjecture that (I) is true. We argue this approach is not affected
by existing ``barrier results.''

3. We prove our conjecture for quadratic polynomials. Specifically,
we determine the maximum possible correlation between quadratic polynomials
modulo 2 and the functions $(x_{1},\dots,x_{n})\to z^{\sum x_{i}}$
for any $z$ on the complex unit circle; and show that it is achieved
by symmetric polynomials. To obtain our results we develop a new proof
technique: we express correlation in terms of directional derivatives
and analyze it by slowly restricting the direction.

4. We make partial progress on the conjecture for cubic polynomials,
in particular proving tight correlation bounds for cubic polynomials
whose degree-3 part is symmetric.
\end{abstract}

\thispagestyle{empty}
\addtocounter{page}{-1}
\newpage

\section{Introduction and our results}

Exhibiting explicit functions that have small \emph{correlation} with
low-degree polynomials modulo 2 is a fundamental challenge in complexity
theory, cf. the recent survey \cite{corr-survey}. This challenge
is generally referred to as ``proving correlation bounds'' and progress
on it is a prerequisite for progress on a striking variety of other
long-standing problems: circuit lower bounds \cite{viola-FTTCS09,Viola-map},
Valiant's rigidity challenge \cite{Viola-diago}, number-on-forehead
communication complexity \cite{Viola-diago,Viola-map}, and even recently-made
conjectures on the Fourier spectrum of low-degree polynomials \cite{Viola-L12requiresCor}.

After many years, the state-of-the-art on this challenge has not changed
much since seminal works from at least thirty years ago. Two bounds
are known for degree $d$ polynomials. First, the results by Razborov
and Smolensky from the 80's give correlation $O(d/\sqrt{n})$ \cite{Raz87,Smo87,Smolensky93};
second, the result by Babai, Nisan, and Szegedy \cite{BNS92} on number-on-forehead
communication protocols yields correlation $\exp(-\Omega(n/d2^{d}))$.
A slight improvement to $\exp(-\Omega(n/2^{d}))$ appears in \cite{ViolaGF2}.
Thus, the first bound applies to large degrees but yields weak correlation,
while the second bound yields exponentially small correlation, but
only applies to degrees less than $\log n$. Achieving correlation
less than $1/\sqrt{n}$ for polynomials of degree $\log n$ remains
open, for any explicit function. Remarkably, solving this specific
setting of parameters is required for long-sought progress on any
of the challenges mentioned in the previous paragraph.

\subsection{The conjecture and our first result}

In STOC 2020, Chattopadhyay, Hatami, Hosseini, Lovett, and Zuckerman
\cite{DBLP:conf/stoc/ChattopadhyayHH20}. introduced a novel technique
which they established new correlation bounds for low-degree polynomials.
The key ingredient in their approach is a structural result about
the Fourier spectrum of low-degree polynomials over $\F_{2}$. They
show that for any $n$-variate polynomial $p$ over $\F_{2}$ of degree
$\le d$, there is a set $S$ of variables such that almost all of
the Fourier mass of $p$ lies on Fourier coefficients that intersect
with $S$, and the size of $S$ is exponential in $d$. Further, they
conjecture that the size of $S$ needs to be just polynomial in $d$.

We give a counterexample to their conjecture. In fact, we shall rule
out weaker parameters and show what they prove is essentially the
best possible. This appears in Section \ref{sec:CHHLZ-conjecture}.

\subsection{Mod functions}

A natural candidate for achieving small correlation are the $Mod_{\phi}$
functions which map inputs of Hamming weight $w$ to the complex point
on the unit circle with angle $w\phi$. These $Mod_{\phi}$ are closely
related to the boolean mod $m$ functions which indicate if the input
Hamming weight is divisible by $m$. Specifically, one can bound the
correlation with mod $m$ for odd $m$ by the correlations with the
$Mod_{\phi}$ functions for $\phi=2\pi k/m$ for $k=1,2,.\ldots,(m-1)/2$
(see Lemma \ref{prop:B-expr-general}). In turn, as discussed below,
an early motivation for studying the correlation with mod $m$ was
proving circuit lower bounds.

We now formally define these notions and then discuss previous results.
\begin{defn}
\label{def:c_phi}For any angle $\phi\in[0,2\pi]$ the function $Mod_{\phi}\colon\zo^{n}\to\mathbb{C}$
is defined as 
\[
Mod_{\phi}(x):=e^{\phi\sqrt{-1}\sum_{i}x_{i}}.
\]
The correlation of a polynomial $p:\zo^{n}\to\zo$ with $Mod_{\phi}$
is
\[
C_{\phi}(p):=\left|\E_{x\in\zo^{n}}(-1)^{p(x)}Mod_{\phi}(x)\right|.
\]
For any integer $m$ we define the boolean Mod $m$ function $BMod_{m}:\zo^{n}\to\mathbf{\zo}$
as

\[
BMod_{m}(x):=\begin{cases}
1 & \text{if }\sum_{i=1}^{n}x_{i}\neq0\bmod m\\
0 & \text{if }\sum_{i=1}^{n}x_{i}=0\bmod m.
\end{cases}
\]
The correlation between a polynomial $p:\zo^{n}\to\zo$ and $BMod_{m}$
is:

\[
B_{m}(p):=\left|\E_{x:BMod_{m}(x)=0}(-1)^{p(x)}-\E_{x:BMod_{m}(x)=1}(-1)^{p(x)}\right|.
\]
\end{defn}

Most or all of the works in this area, including this paper, is concerned
with the $Mod_{\phi}$ functions. And most of the works use correlation
bounds with $Mod_{\phi}$ functions for various $\phi$ to obtain
corresponding correlation bounds with the mod $m$ functions. In particular,
the two correlation bounds stated above hold for $Mod_{2\pi/3}$.
The first bound essentially appears in Smolensky's paper. For the
second bound, Bourgain first proved \cite{Bou05} correlation $\exp(-\Omega(n/c^{d}))$
with $Mod_{2\pi/m}$, with a correction in \cite{GRS05}. Nisan later
pointed out that such bounds also follow from \cite{BNS92}. The constant
$c$ is optimized to $4$ in \cite{ViolaGF2}. For more discussion
and background we refer to the survey \cite{corr-survey}, where the
reader may find proofs of both bounds, including Nisan's derivation
from \cite{BNS92}.

\subsubsection{Exact results }

Unlike other models of computation such as circuits, polynomials seem
simple enough that one may try to obtain \emph{exact} results\emph{.
}That is, one may try to precisely characterize the polynomials that
achieve the maximum correlation. Twenty years ago, a remarkable paper
by Green \cite{Gre04}, which is an inspiration for this work, took
precisely such a step. Green, and the subsequent work \cite{GreenRoy07},
precisely characterized the quadratic polynomials modulo \emph{three}
that achieve the maximum correlation with the $Mod_{2\pi/2}$ function,
i.e., parity. Compared to our discussion above, the moduli in \cite{Gre04}
are swapped. Green considers polynomials modulo 3 instead of 2, and
bounds the correlation with $Mod_{2\pi/2}$ instead of $Mod_{2\pi/3}$.
Extending Green's result to other moduli has resisted attacks, see
\cite{Gre04,DuenezMRS06}. While these works do not explicitly consider
polynomials modulo 2, difficulties also arise trying to port Green's
proof to our setting. In fact, jumping ahead, we will show that the
answer is different, arguably explaining the difficulties.

\subsubsection{Are symmetric polynomials optimal?}

Aiming for exact results, a natural question to ask is whether, for
some fixed degree, the polynomials modulo $2$ that have maximum correlation
with $Mod_{\phi}$ are \emph{symmetric. }Indeed, this question has
been asked by many authors; it appears explicitly for example in the
2001 paper by Alon and Beigel \cite{AlB01}. A positive answer would
have dramatic consequences since symmetric polynomials modulo 2, even
of large degree, have exponentially small correlation with, say, $Mod_{2\pi/3}$.
Thus, if one could prove that symmetric polynomials correlate best,
one would obtain long-sought correlation bounds.

However, until now the evidence for this has been negative. The maximizing
polynomials in \cite{Gre04,GreenRoy07} are \emph{not }symmetric.
Moreover, the work \cite{GreenKV-blocksym} has shown that for a large
range of parameters, symmetric polynomials modulo 3 do \emph{not}
correlate best with parity (and are not even close). One of the families
of polynomials that are shown to outperform symmetric in these works
is that of \emph{block-symmetric} polynomials, which are sums of symmetric
polynomials on disjoint sets of variables. However, naive conjectures
regarding the optimality of block-symmetric or other families of polynomials
fail, and we are not aware of any natural family of polynomials modulo
3 that is a candidate to maximizing correlation with parity. The only
available evidence that symmetric polynomials correlate best with
mod functions are computer experiments up to 10 variables reported
in \cite{GreenKV-blocksym}.

\subsection{A new approach}

Departing from previous proofs, in this work we propose the following
approach to proving correlation bounds with mod functions. It consists
of two steps:

(I) Prove that symmetric polynomials correlate best with mod functions,
and

(II) Prove that symmetric polynomials have exponentially small correlation
with mod functions.

Regarding (I), we put forth the following conjecture:
\begin{conjecture}
\label{conj:sym-best}For every $d,n,\phi$ degree-$d$ symmetric
polynomials correlate best with the $Mod_{\phi}$ function on $n$
bits.
\end{conjecture}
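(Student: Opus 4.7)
The plan is to attempt Conjecture~\ref{conj:sym-best} by induction on the degree $d$, adapting the directional-derivative technique that works for the quadratic case (contribution~3 in the abstract). The base case $d=1$ is immediate: an affine polynomial $c_{0}+\sum_{i}c_{i}x_{i}$ over $\F_{2}$ has correlation with $Mod_{\phi}$ depending only on the number $k$ of nonzero $c_{i}$, since $Mod_{\phi}$ is invariant under permutations of the input, so the symmetric polynomial $c_{0}+x_{1}+\cdots+x_{k}$ (with the remaining variables appearing as dummies) matches this correlation.

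For the inductive step I would pass to the squared correlation, which drops the degree by one via
\[
C_{\phi}(p)^{2} \;=\; \E_{y\in\zo^{n}}\, e^{-\phi\sqrt{-1}\,|y|}\,\E_{x}\,(-1)^{\partial_{y}p(x)}\, e^{2\phi\sqrt{-1}\,|x\wedge y|},
\]
where $\partial_{y}p(x):=p(x+y)+p(x)$ has degree at most $d-1$ and the last phase factor depends only on the coordinates of $x$ in the support $S_{y}$ of $y$. For each $y$ and each fixing of the coordinates outside $S_{y}$, the inner expectation is the (possibly complex) correlation of a polynomial of degree $\le d-1$ in $|y|$ variables against a $Mod_{2\phi}$ function; the inductive hypothesis on magnitudes then furnishes a symmetric polynomial of degree $\le d-1$ achieving at least as large a correlation. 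The goal would be to reassemble these pointwise symmetric replacements, averaged over $y$ and over the outside-$S_{y}$ fixings, into the squared correlation of a single symmetric degree-$d$ polynomial with $Mod_{\phi}$.

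The main obstacle is exactly this reassembly step. The family of degree-$(d-1)$ symmetric polynomials produced by the induction is parameterized by $y$ and by the fixing of $x_{\bar{S_{y}}}$, and there is no a~priori reason that the resulting pieces should coherently be the directional derivatives of one symmetric antiderivative of degree $d$; moreover, replacing magnitudes loses phase information that is needed when averaging over $y$. This is precisely what goes right in the quadratic case, where $\partial_{y}p$ is affine and the $y$-dependence can be tracked with a small number of scalars. A realistic intermediate target would be to prove the conjecture when $p$ is block-symmetric (the main known competitor to symmetric polynomials), or more softly to resolve the cubic case in full, extending contribution~4. Failing these, I would try a local-move argument: starting from any purported maximizer $p^{*}$, show that any single asymmetric coefficient of $p^{*}$ can be symmetrized without strictly decreasing $C_{\phi}$, iterating toward the symmetric subspace; the challenge here is that the natural swap operations on monomials do not preserve $|C_{\phi}|$ exactly, and one must argue via a carefully chosen potential function.
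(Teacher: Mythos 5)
The statement you are addressing is Conjecture \ref{conj:sym-best}, which the paper does not prove: it is posed as an open conjecture, established only for $d=2$ (Theorem \ref{thm:main-general}) and, partially, for cubics whose degree-$3$ part is symmetric (Section \ref{sec:structured-cubic}). Your text is likewise not a proof but a plan, and the gap you yourself flag in the inductive step is the whole problem, not a technicality. After squaring, the inner expectation for a fixed direction $y$ is indeed a correlation of the degree-$(d-1)$ derivative with a $Mod_{2\phi}$-type function on the support of $y$; but invoking the conjecture for $d-1$ only lets you replace each $|c_{y}(p)|$ by the maximum attainable by \emph{some} symmetric degree-$(d-1)$ polynomial, chosen independently for each $y$. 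That pointwise-maximized sum is in general strictly larger than $C_{\phi}(s)^{2}$ for every symmetric degree-$d$ polynomial $s$: already for $d=2$ the symmetric maximizer has $c_{y}(s)=0$ on all odd-weight directions, so the bound $\sum_{y}\max$ overshoots, and the paper's quadratic proof succeeds only by exploiting exactly this cancellation (the weight bound of Claim \ref{claim:trivial-maximum} is sharpened via handshaking, and the slack is recovered through the buffer of Lemma \ref{lem:buffer-general} while slowly restricting the direction). So "maximize each directional contribution by induction, then reassemble" cannot work as stated even at $d=2$, and nothing in your outline supplies the coherence constraint tying the derivatives $\partial_{y}p$ across different $y$ to a single degree-$d$ antiderivative.

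Two smaller points. Your base case is off as written: for $0<k<n$ the polynomial $c_{0}+x_{1}+\cdots+x_{k}$ is \emph{not} symmetric in $n$ variables, so it cannot witness the conjecture for $d=1$; the correct argument is the product formula $C_{\phi}=\left|\tfrac{1+\omega}{2}\right|^{n-k}\left|\tfrac{1-\omega}{2}\right|^{k}$, which is monotone in $k$ and hence maximized at $k=0$ or $k=n$, both of which are symmetric. Finally, be aware that a full proof of Conjecture \ref{conj:sym-best}, combined with Section \ref{sec:symm_vs_mod}, would yield exponentially small correlation bounds for $\log n$-degree polynomials, a long-standing open problem; the paper's own contribution is to propose this two-step program and to verify the symmetric-optimality step only in the quadratic case, so any claimed inductive resolution should be expected to require a genuinely new idea at the reassembly step, for instance the local symmetrization/potential-function route you mention, which is currently unsubstantiated.
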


We verify (II) in Section \ref{sec:symm_vs_mod}. The result is folklore.
We remark that \cite{CaiGT96} proves a similar result, but in the
case of symmetric polynomials mod $m$ and the mod 2 function. However,
changing moduli can yield different results, as shown by this paper.

\subsubsection{Our approach vs.~``barriers'' to lower bounds}

Over the years many ``barriers'' have been proposed for progress
on lower bounds. Barriers based on oracles or relativization \cite{BGS75,AaronsonW08}
are not known to apply -- they mostly concern uniform models of computation.
The Natural Proofs barrier \cite{RaR97} (see also \cite{NaorRR02,MilesV-aes})
is also not known to apply since we do not have candidate pseudorandom
functions that correlate with low-degree polynomials.

More recently, Bhowmick and Lovett \cite{BhowmickL15} proposed a
new barrier specifically for proving correlation bounds. They consider
an extension of polynomials called \emph{non-classical polynomials,
}an object first introduced in \cite{TZ12}\emph{.} In short, in a
non-classical polynomial of degree $d$ monomials can have rational
coefficients (with denominators depending on the degree) and the output
of the polynomial is considered as an element in the torus $[0,1]$.
The work \cite{BhowmickL15} shows that the proofs of most correlation
bounds (such as those mentioned at the beginning of this introduction)
also apply to non-classical polynomials. Moreover, for non-classical
polynomials these bounds are actually tight! For example, there are
non-classical polynomials of degree just $O(\log n)$ that correlate
well with mod functions.

We argue that non-classical polynomials do not constitute an obstacle
for our approach above. The main reason is that the non-classical
polynomials in \cite{BhowmickL15} -- including those for mod functions
-- are actually symmetric. Hence, one could conceivably prove (I)
above without distinguishing classical from non-classical polynomials.
Moreover, the proof of (II) above already distinguishes classical
from non-classical polynomials.

\subsection{Our second result: Proof of Conjecture \ref{conj:sym-best} for $d=2$}

A main technical contribution of this work is a proof of our Conjecture
\ref{conj:sym-best} in the case of degree two. That is, in contrast
with the previous proofs discussed above, we show that, among quadratic
polynomials modulo $2$, those that correlate best with the $Mod_{\phi}$
functions are symmetric. Let us first define the elementary symmetric
polynomials of degree 1 and $2$.
\begin{defn}[Elementary symmetric polynomials]
 Let
\begin{align*}
e^{1}(x_{1},\ldots,x_{n}) & :=\sum_{i=1}^{n}x_{i},\\
e^{2}(x_{1},\ldots,x_{n}) & :=\sum_{i<j}^{n}x_{i}x_{j}.
\end{align*}
\end{defn}

\begin{example}
\label{exa:cor} Let $\phi=2\pi/3$ and $\omega=e^{\phi\sqrt{-1}}$.
We have:
\begin{align*}
C_{\phi}(0) & =\left|\mathop{\E}\limits _{x\in\zo^{n}}\omega^{\sum_{i}x_{i}}\right|=\left|\mathop{\E}\limits _{x_{1}\in\zo}\omega^{x_{1}}\right|^{n}=\left|\frac{1+\omega}{2}\right|^{n}=\left(\frac{1+\cos\phi}{2}\right)^{n/2}=\left(\frac{1}{2}\right)^{n},\\
C_{\phi}(e^{1}) & =\left|\mathop{\E}\limits _{x\in\zo^{n}}(-1)^{\sum_{i}x_{i}}\omega^{\sum_{i}x_{i}}\right|=\left|\mathop{\E}\limits _{x_{1}\in\zo}(-1)^{x_{1}}\omega^{x_{1}}\right|^{n}=\left|\frac{1-\omega}{2}\right|^{n}=\left(\frac{1-\cos\phi}{2}\right)^{n/2} \\ & =\left(\frac{\sqrt{3}}{2}\right)^{n},\\
C_{\phi}(BMod_{3}) & \ge1/2,
\end{align*}
where the last inequality follows because the absolute value of the
real component of \\ $(-1)^{BMod_{3}(x)}\omega^{\sum_{i}x_{i}}$ is $\ge1/2$
for every $x$.
\end{example}

We next state our result. Henceforth all polynomials in this paper
have coefficients in $\zo$ and operate modulo two. We characterize
the quadratic polynomials that maximize $C_{\phi}$ \emph{for any
angle} $\phi\in[0,2\pi]$. Additionally, we show the correlation of
other quadratic polynomials is a multiplicative factor smaller.

It is in fact sufficient to restrict our attention to angles $\phi\in[0,\pi/2]$
thanks to a simple symmetry argument presented in Section \ref{sec:Derivatives}.
When $\phi\in[0,\pi/4]$ then the constant zero polynomial maximizes
correlation. Our main contribution is that when $\phi\in(\pi/4,\pi/2]$
the correlation is maximized by either $e^{2}$ or $e^{2}+e^{1}$,
depending on the value of $n\bmod4$.

We define the quantity
\[
v_{\phi}:=2^{-n-1}\cdot\left((1+\sin\phi)^{n}+(1-\sin\phi)^{n}\right)
\]
which plays a key role in this paper.
\begin{thm}
\label{thm:main-general} Fix any angle $\phi\in[0,\pi/2].$ For all
large enough $n$, the maximum $C_{\phi}(p)$ over quadratic polynomials
$p$ is attained by a symmetric polynomial. In more detail:
\begin{enumerate}
\item Suppose $\phi\in(\pi/4,\pi/2]$.
\begin{enumerate}
\item For $n$ even we have $C_{\phi}(e^{2})=C_{\phi}(e^{2}+e^{1})=\sqrt{v_{\phi}}$.
\item For $n\equiv1\bmod4$ we have $C_{\phi}(e^{2})=\sqrt{v_{\phi}+(\cos(\phi)/2)^{n}}$,
$C_{\phi}(e^{2}+e^{1})=\sqrt{v_{\phi}-(\cos(\phi)/2)^{n}}$.
\item For $n\equiv3\bmod4$ we have $C_{\phi}(e^{2})=\sqrt{v_{\phi}-(\cos(\phi)/2)^{n}}$,
$C_{\phi}(e^{2}+e^{1})=\sqrt{v_{\phi}+(\cos(\phi)/2)^{n}}$.
\item For any quadratic polynomial $p$ besides $e^{2}$, $e^{2}+e^{1}$
we have \\
 $C_{\phi}(p)\le\sqrt{1-\Omega(\sin\phi-\cos\phi)}\cdot\sqrt{v_{\phi}}.$
\end{enumerate}
\item Suppose $\phi\in[0,\pi/4]$. Then $C_{\phi}(0)=\left(\frac{1+\cos\phi}{2}\right)^{n/2}$
and for any quadratic polynomial $p\ne0$ we have $C_{\phi}(p)\le(1-\Omega(1))\cdot C_{\phi}(0).$
\end{enumerate}
\end{thm}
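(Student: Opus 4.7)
The plan is to express $|C_\phi(p)|^2$ as an expectation over a ``direction'' $z \in \zo^n$, use the fact that the discrete derivative $D_z p(x) := p(x)+p(x\oplus z)$ is affine in $x$ for quadratic $p$ to tensorise the inner expectation over $x$, and then show that the resulting sum is maximised by $e^2$ or $e^2+e^1$.

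\textbf{Step 1: Expansion via derivatives.} Squaring the correlation and substituting $y = x \oplus z$, I would rewrite
\[
|C_\phi(p)|^2 \;=\; \E_z\!\left[\omega^{-|z|}(-1)^{p(z)-p(0)}\prod_{i=1}^{n}F_\phi\bigl(z_i,\ell_i(z)\bigr)\right],
\]
where $\omega = e^{\phi\sqrt{-1}}$, $\ell_i(z) \in \zo$ is the coefficient of $x_i$ in $D_z p$, and $F_\phi$ is an explicit $2\times 2$ table obtained by averaging over a single bit $x_i$. Since $p$ is quadratic, each $\ell_i(z)$ is a linear function of $z$ given by the $i$-th row of the symmetric bilinear form $B_p$ of $p$'s quadratic part; and since $F_\phi(0,1)=0$, the sum is supported on $z$ satisfying the linear constraint $(B_p z)_i = 0$ on all coordinates $i$ where $z_i = 0$.

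\textbf{Step 2: Slow restriction of the direction.} Rather than analysing $\ker B_p$ in one shot, I plan to reveal the coordinates $z_1,z_2,\ldots$ one at a time, in an order adapted to $B_p$. At each stage, the remaining conditional expectation is again the correlation of a lower-dimensional, affinely modified quadratic polynomial with a twisted $Mod_\phi$ character. The point of revealing coordinates slowly is that the effect of each new coordinate on the expansion is a single $2\times 2$ factor $F_\phi$, so one can track how the coefficients of $p$ influence the final sum without having to compute $\ker B_p$ globally.

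\textbf{Step 3: The symmetric case and the upper bound.} For $p = e^2$ (and likewise $e^2 + e^1$) the matrix $B_p = J - I$, whose kernel over $\F_2$ is $\{0\}$ when $n$ is even and $\{0,\mathbf{1}\}$ when $n$ is odd. A direct substitution into the Step~1 expansion, together with the identity $(1-\omega^2)/2 = -i\sin(\phi)\,\omega$, collapses the sum to the announced value $v_\phi$ plus a correction $\pm(\cos\phi/2)^n$ whose sign is governed by $(-1)^{p(\mathbf{1})}$ and by the parity of $\binom{n}{2}$, which accounts for the dependence on $n \bmod 4$. For any other quadratic $p$, I plan to isolate a coordinate $i^\ast$ where $p$ differs from both symmetric candidates and use the tensorised form of Step~1 to show that this difference costs a constant multiplicative factor in the expansion, uniformly over the remaining coordinates. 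This would yield the slack $\sqrt{1-\Omega(\sin\phi-\cos\phi)}$ for $\phi\in(\pi/4,\pi/2]$ and the slack $1-\Omega(1)$ in the regime $\phi\in[0,\pi/4]$, where the comparison is with the constant polynomial $0$ rather than $e^2$ and where $\cos\phi \geq \sin\phi$ tames the cancellations.

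The main obstacle is Step~3: the space of quadratic polynomials has many distinct types (different ranks of $B_p$, different linear parts, isolated variables), and each could in principle require its own estimate. The hope is that the slow-restriction viewpoint of Step~2 lets every case reduce to the same one- or two-dimensional computation about $F_\phi$, so that a short menu of coordinate-level inequalities suffices to handle all quadratic $p$ simultaneously.
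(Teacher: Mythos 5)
Your Steps 1 and 2 coincide with the paper's framework: squaring, passing to derivatives $p_y$, tensorizing the inner expectation via the four-entry table for a single bit, and ``slowly restricting'' the direction is exactly the paper's Section on derivatives and its proof strategy, and your computation for $e^2$, $e^2+e^1$ (support on even-weight directions plus the all-ones direction when $n$ is odd, giving $v_\phi\pm(\cos\phi/2)^n$ according to $n\bmod 4$) matches the paper's lemma on contributions of symmetric polynomials. The genuine gap is Step 3, which is the actual content of Items 1(d) and 2 and which you leave as a hope. Two concrete problems. First, ``isolate a coordinate $i^\ast$ where $p$ differs from both symmetric candidates and show the difference costs a constant factor uniformly over the remaining coordinates'' cannot work as stated for $p=e^2+\ell$ with $\ell\notin\{0,e^1\}$: there the quadratic part, hence the bilinear form $B_p$, is identical to that of $e^2$, and since the linear part only contributes a sign $(-1)^{\ell(y)}$ to $c_y(p)$, one has $\E_y|c_y(p)|=\E_y|c_y(e^2)|$; no bound that proceeds through absolute values of contributions can separate this case, and the paper must instead compute the signed sum $\E_y c_y(p)$ and exhibit an explicit cancellation of order $\Omega(v_\phi)$ (its Lemma \ref{lem:e2-upper-bound}). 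Second, even for polynomials missing a degree-two monomial, conditioning on one coordinate only kills contributions on the half-space $y_{i^\ast}=0$; to handle the other half one must keep opening coordinates, and the naive comparison term-by-term with the symmetric value fails because $\sum_{y'\in O}\sigma^{w(y)}\not\le\sum_{y'\in O}\sigma^{w(y)+1}$. The paper closes this with the handshaking-lemma refinement $|c_y(p)|\le\sigma^{w(y)-1}\gamma$ for odd-weight $y$, with a quantitative ``buffer'' $\Omega((\sigma-\gamma)v_\phi)$ gained at the first opened bit, and with two separate endgame lemmas (an opened-majority bound that is lossy but absorbed by the buffer, and an independent-set argument when few bits were opened); none of these ingredients, nor substitutes for them, appear in your plan, and without them the ``short menu of coordinate-level inequalities'' you hope for does not assemble into a proof.
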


Note that $\sqrt{v_{\phi}-(\cos(\phi)/2)^{n}}\ge(1-o(1))\sqrt{v_{\phi}}$
and so the theorem shows that the correlation of non-symmetric polynomials
is a constant-factor smaller than optimal.

An important message of this paper is that $C_{\phi}$ is maximized
by \emph{symmetric} polynomials. This contrasts with previous works,
and gives hope that this may hold for larger degrees as well. If that
is the case one would obtain long-sought correlation bounds, as discussed
previously.

\subsubsection{Results and directions for $d=3$}

We conjecture that Theorem \ref{thm:main-general} can be extended
to show that for any cubic polynomial $p$ and any $\phi$, $C_{\phi}(p)\leq\max_{s\in\{0,e^{1},e^{2},e^{2}+e^{1}\}}C_{\phi}(s)$.
In other words, the correlation over all cubic polynomials is still
maximized by a quadratic symmetric. This would prove Conjecture \ref{conj:sym-best}
for $d=3$ as well. 

We make progress on this conjecture by proving this indeed holds when
$p$ is the sum of an arbitrary quadratic polynomial and a symmetric
degree-3 polynomial. This is done in Section \ref{sec:structured-cubic}.

\subsection{Boolean correlation}

We now turn our attention to the boolean $BMod_{m}$ function. As
mentioned earlier, most or all papers bounding the corresponding correlation
$B_{m}$, including this one, proceed by first bounding $C_{\phi}$
for several corresponding values of $\phi$ and then using that information
to bound $B_{m}$. Indeed, $C_{\phi}$ is a better-behaved quantity
to work with. In turn, an early motivation for studying $B_{m}$ is
the so-called \emph{discriminator lemma} \cite{HMPST93}. The lemma
implies that if there is a circuit consisting of a majority of $s$
functions that computes $BMod_{m}$ then one of those functions $p$
has $B_{m}(p)\ge1/s$. Thus, one can use upper bounds on $B_{m}$
to obtain lower bounds for such circuits.

In this paper we determine up to constant factors the maximum of $B_{m}$
over quadratic polynomials. This is Item 1 in the next theorem. In
fact, we obtain more precise information. Item 2 determines (exactly)
the maximum value when $n$ is congruent to $m,3m\bmod4m$: either
$e^{2}$ or $e^{2}+e^{1}$ maximizes $B_{m}$, and moreover it will
achieve the upper bound on $B_{m}$ from Item 1. Our inability to
determine the maximum value of $B_{m}$ for every $n$ is reflected
in Item 3, which shows when $n$ is congruent to $0,2m\bmod4m$ this
maximum is not achieved by symmetric polynomials.
\begin{thm}
\label{thm:boolean-theorem-modm}Fix any odd $m\geq3$, let $\phi:=2\pi/m$,
$\ell_{1}\in\{\frac{m-1}{4},\frac{m+1}{4}\}$ denote the integer closest
to $\frac{m}{4}$, and set $\Psi:=2m/(m-1)\sqrt{v_{\ell_{1}\phi}}$.
The following holds for large enough $n$. Let $B_{m}^{*}$ denote
the maximum $B_{m}(p)$ over all quadratic $p$.
\begin{enumerate}
\item For any $n$,
\[
\Psi(1/\sqrt{2}-o(1))\leq B_{m}^{*}\leq\Psi(1+o(1)).
\]
\item If $n\equiv m,3m\bmod4m$ then
\[
B_{m}^{*}=\max_{s\in\{e^{2},e^{2}+e^{1}\}}B_{m}(s)=\Psi(1-o(1)).
\]
\item If $n\equiv0,2m\bmod4m$ then
\[
(1+\Omega(1))\max_{s\in\{0,e^{1},e^{2},e^{2}+e^{1}\}}B_{m}(s)<\max_{s'\in\{e^{2},e^{2}+e^{1}\}}B_{m}(x_{1}+s'(x_{2},\dots,x_{n})).
\]
\end{enumerate}
\end{thm}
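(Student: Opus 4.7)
The plan is a Fourier-analytic reduction of $B_m(p)$ to sums of the complex correlations $F_k(p):=\E_x[(-1)^{p(x)}\omega^{k\sum_i x_i}]$ with $\omega=e^{2\pi i/m}$ and $\phi_k:=2\pi k/m$, followed by an explicit calculation of $F_k$ on a small family of candidate polynomials. Expanding $\mathbf{1}[\sum x_i\equiv 0\bmod m]=\tfrac{1}{m}\sum_{k=0}^{m-1}\omega^{k\sum x_i}$ and using $F_{m-k}=\overline{F_k}$ together with $\Pr[\sum x_i\equiv 0\bmod m]=1/m+2^{-\Omega(n)}$ (valid for odd $m$), a short calculation yields
\[
B_m(p)=\tfrac{2m}{m-1}\Bigl|\sum_{k=1}^{(m-1)/2}\mathrm{Re}\,F_k(p)\Bigr|\cdot(1+o(1)).
\]
For the upper bound in Item~1, bound $|\mathrm{Re}\,F_k(p)|\le|F_k(p)|=C_{\phi_k}(p)\le\sqrt{v_{\phi_k}}(1+o(1))$ via Theorem~\ref{thm:main-general} (after the $\phi\mapsto\pi-\phi$ symmetry reduces to $[0,\pi/2]$). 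Since $v_\phi\sim((1+\sin\phi)/2)^n$ is exponentially maximized at the $\phi_k$ closest to $\pi/2$, namely $k=\ell_1$, the sum collapses to $\sqrt{v_{\ell_1\phi}}(1+o(1))$, yielding $B_m(p)\le\Psi(1+o(1))$.

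For the matching lower bounds (Items~1, 2, and 3), I would compute $F_k$ in closed form for the four candidate polynomials $\{e^2,\ e^2+e^1,\ x_1+e^2(x_{\ge 2}),\ x_1+(e^2+e^1)(x_{\ge 2})\}$. Using the period-4 identity $(-1)^{\binom{w}{2}}=\tfrac{1}{\sqrt 2}(e^{-i\pi/4}i^w+e^{i\pi/4}(-i)^w)$ and the binomial theorem,
\[
F_k(e^2)=\tfrac{1}{\sqrt 2}\bigl(e^{-i\pi/4}A^n+e^{i\pi/4}B^n\bigr),\qquad A=\tfrac{1+i\omega^k}{2},\ B=\tfrac{1-i\omega^k}{2},
\]
and the polar forms $A=s_k\,e^{i(\pi/4+\phi_k/2)}$, $B=c_k\,e^{-i(\pi/4-\phi_k/2)}$ with $s_k^2=(1-\sin\phi_k)/2$, $c_k^2=(1+\sin\phi_k)/2$ give
\[
F_k(e^2)=e^{in\phi_k/2}\tfrac{1}{\sqrt 2}\bigl[(s_k^n+c_k^n)\cos\tfrac{(n-1)\pi}{4}+i(s_k^n-c_k^n)\sin\tfrac{(n-1)\pi}{4}\bigr],
\]
whose modulus recovers Theorem~\ref{thm:main-general}. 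The substitution $\omega^k\mapsto-\omega^k$ sends $e^2$ to $e^2+e^1$ (swapping $A\leftrightarrow B$ and replacing $(n-1)\pi/4$ by $(n+1)\pi/4$), while $F_k(x_1+q)=\tfrac{1-\omega^k}{2}\,F_k^{(n-1)}(q)$ multiplies the magnitude by $\sin(\phi_k/2)$ and shifts the phase by $\phi_k/2-\pi/2$.

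Items~2 and 3 then reduce to phase arithmetic modulo $4m$. For Item~2, $n$ is odd (since $m$ is odd and $n\equiv m\bmod 4m$), the bracket is real, and $n\phi_{\ell_1}/2=\pi\ell_1(n/m)\in\pi\Z$, so $F_{\ell_1}(e^2)$ is real with $|F_{\ell_1}(e^2)|=\sqrt{v_{\ell_1\phi}}(1+o(1))$, delivering $\Psi(1-o(1))$; the residue $n\equiv 3m$ is handled analogously with $e^2+e^1$. For the generic $\Psi/\sqrt 2$ lower bound in Item~1, the four candidate phases for $F_{\ell_1}$ are roughly $\pi/4$ apart on the unit circle (adding $e^1$ shifts by $\pi$ and adding $x_1$ shifts by $\phi_{\ell_1}/2-\pi/2\approx-\pi/4$), so for any $n$ some candidate lands within $\pi/8$ of the real axis, yielding $|\mathrm{Re}\,F_{\ell_1}|\ge|F_{\ell_1}|\cos(\pi/8)$; combined with the magnitude estimates this produces $B_m\ge\Psi/\sqrt 2\,(1-o(1))$. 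For Item~3, in the residues $n\equiv 0,2m\bmod 4m$ a direct computation pins the phase of every symmetric candidate at $\pm\pi/4\bmod\pi$, so they lose the full $\sqrt 2$ factor, while one of the $x_1+s'$ candidates becomes phase-aligned at $0$ or $\pi\bmod 2\pi$ and retains enough magnitude to exceed the symmetric value.

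The main obstacle will be establishing the \emph{strict} $(1+\Omega(1))$ separation claimed in Item~3. At leading order both families produce $\sqrt{v_{\ell_1\phi}/2}$ — the symmetric through its $\pi/4$ phase loss, the perturbed through the magnitude factor $\sin(\phi_{\ell_1}/2)\sqrt{2/(1+\sin\phi_{\ell_1})}$, which equals $1/\sqrt 2$ exactly at $\phi_{\ell_1}=\pi/2$ and deviates from it as $\phi_{\ell_1}=2\pi\ell_1/m$ moves off $\pi/2$. The $\Omega(1)$ gap must therefore be traced to this deviation together with the $(\cos(\ell_1\phi)/2)^n$ corrections to $|F_k(e^2)|^2$ and the contributions of $F_k$ for $k$ neighbouring $\ell_1$; carrying out this comparison carefully, with the exact phase arithmetic of $n\phi_{\ell_1}/2\bmod 2\pi$ in each of the two residues $0,2m\bmod 4m$, will be the technical heart of Item~3.
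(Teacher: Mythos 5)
Your reduction and calculations follow the same route as the paper (expand the mod-$m$ indicator, discard $k\neq\ell_1$ and the bias error, compute $E_{k\phi}$ exactly on the candidates, then do phase arithmetic mod $4m$), and Items 1 and 2 are essentially complete modulo two small points: for the exact equality $B_m^*=\max_s B_m(s)$ in Item 2 you must also rule out non-symmetric competitors, which your framework does give via $|\mathrm{Re}\,E_{\ell_1\phi}(p)|\le C_{\ell_1\phi}(p)\le\sqrt{1-\Omega(\sin(\ell_1\phi)-\cos(\ell_1\phi))}\sqrt{v_{\ell_1\phi}}$ from Theorem \ref{thm:main-general} Item 1(d) (and the exponentially small bound for $0,e^1$), but you should say so explicitly.

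The genuine gap is Item 3, which you yourself flag as unexecuted: you stop at ``at leading order both families produce $\sqrt{v_{\ell_1\phi}/2}$'' and defer the $(1+\Omega(1))$ separation to a future careful comparison. That comparison is the content of the item, and it is cleaner than you anticipate: the worries about neighbouring frequencies $k\neq\ell_1$, the $(\cos(\ell_1\phi)/2)^n$ correction, and the $|1+i\omega^{\ell_1}|^n$ term are all exponentially small and irrelevant, so everything reduces to one exact trigonometric identity. Writing $q=x_1+e^{2}(x_2,\dots,x_n)$ one has $E_{\ell_1\phi}(q)=\frac{1-\omega^{\ell_1}}{2}E_{\ell_1\phi}(e^{2})$ (with $e^2$ on $n-1$ variables), and when $n\equiv0\bmod4m$ the phase of this product sits at $\frac{n\pi}{4m}\equiv0\bmod\pi$, so its real part is, up to $1\pm o(1)$, its full modulus, namely $\frac{|1-\omega^{\ell_1}|}{|1-i\omega^{\ell_1}|}\sqrt{v_{\ell_1\phi}}$; since $|1-i\omega^{\ell_1}|=2\cos(\pi/4m)$ and $|1-\omega^{\ell_1}|=2\cos(\pi/4-\pi/4m)=\sqrt{2}\left(\cos(\pi/4m)+\sin(\pi/4m)\right)$, this ratio equals $\frac{1+\tan(\pi/4m)}{\sqrt{2}}$. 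Thus the perturbed candidate attains $(1+\tan(\pi/4m))\sqrt{v_{\ell_1\phi}/2}\,(1-o(1))$, strictly beating the symmetric candidates, which in these residues are capped at $\sqrt{v_{\ell_1\phi}/2}\,(1+o(1))$ because $\cos(n\pi/4m\pm\pi/4)=\pm1/\sqrt{2}$ (and $0,e^1$ are exponentially smaller). In other words, for fixed odd $m$ the deviation of $\ell_1\phi$ from $\pi/2$ is exactly $\pm\pi/2m$, a constant, and it converts directly into the constant factor $1+\tan(\pi/4m)$; your phrase ``at leading order both produce $\sqrt{v_{\ell_1\phi}/2}$'' is only true in the limit $m\to\infty$. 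Until this computation (or an equivalent one) is carried out, Item 3 is not proven.
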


Note that the polynomial in the right-hand side of Item 3 is not symmetric.
We conjecture that this polynomial is in fact optimal (for the corresponding
values of $n$). Our techniques yield slightly stronger results for
specific $m$ and $n$, but for simplicity we only state the above
theorem that applies for any odd $m\geq3$. In particular, when $m=3$,
it is possible to determine for every value of $n$ whether symmetric
polynomials maximize $B_{3}$.

Previous techniques could at best determine this maximum up to polynomial
factors. Hence we also improve polynomially the corresponding circuit
lower bounds obtained via the discriminator lemma -- this is a straightforward
application that we do not state formally.

Green's work \cite{Gre04} also determines exactly the maximum correlation
between quadratic polynomials modulo 3 and the parity function. Our
setting appears somewhat complicated by the fact that the $BMod_{m}$
functions are not balanced for odd $m$.

\subsection{Proof sketch of Theorem \ref{thm:main-general}}

We begin by rewriting the correlation in a more convenient form, involving
derivatives of the polynomial and of the mod function. Bounding the
correlation in terms of derivatives is natural and done in several
previous works, see e.g. discussion of the `squaring trick' in \cite[Chapter 1]{viola-SIGACT09}.
However, these works take repeated derivatives until the polynomial
becomes constant, use the Cauchy-Schwartz inequality, and are lossy. 

By contrast, we take a single derivative, avoid Cauchy-Schwartz, and
give an exact expression. In other words, previous works provide asymptotic
correlation bounds for larger degree polynomials, while we provide
an \emph{exact bound }for quadratic polynomials. 

For concreteness consider the complex mod 3 function $Mod_{\phi}:=e^{\phi\sqrt{-1}\sum_{i}x_{i}}:=\omega^{\sum_{i}x_{i}}$
where $\phi:=2\pi/3$, and fix some quadratic $p$. Let $p_{y}$ denote
the derivative $p(x+y)+p(x)$ of $p$ in the direction $y\in\zo^{n}$.
Analogously we let $Mod_{\phi,y}(x):=\omega^{\sum_{i}x_{i}-\sum_{i}(x_{i}\oplus y_{i})}$.
We can express the correlation squared as
\begin{align*}
C_{\phi}^{2}(p) & =\E_{y}\E_{x}(-1)^{p_{y}(x)}Mod_{\phi,y}(x).
\end{align*}
Writing $c_{y}(p)$ for the inner expectation -- where $c$ stands
for \emph{contribution} in direction $y$ -- we express the above
as $\E_{y}c_{y}(p)$. In this language, our goal now is to prove the
following for any quadratic $p$ and $s=e^{2},e^{2}+e^{1}$:
\begin{equation}
\E_{y}|c_{y}(p)|\leq\E_{y}c_{y}(s).\label{eq:disc-open-first-1}
\end{equation}

\subsubsection{Computing $\protect\E_{y}c_{y}(s)$ and bounding $|c_{y}(p)|$}

We begin by deriving a clean expression for $\E_{y}c_{y}(s)$. Let
$w(y)$ denote the Hamming weight of $y$ and let $E,O$ denote the
set of even, odd weight strings respectively. Supposing $n$ is even
for simplicity we have:
\begin{align}
\E_{y}c_{y}(s) & =2^{-n}\sum_{y\in E}(\sin\phi){}^{w(y)}.\label{eq:disc-symm-corr}
\end{align}
To see this, observe that $s_{y}=\sum_{i:y_{i}=1}x_{i}$ if $y\in E$
and $s_{y}=\sum_{i:y_{i}=0}x_{i}$ if $y\in O$. On the other hand,
$Mod_{\phi,y}=\omega^{\sum_{i:y_{i}=1}(2x_{i}-1)}$ which only depends
on the variables indexed by the 1 bits of $y$ \emph{for every }$y$. 

This means that for any $y\in O$, $c_{y}(s)=0$ and for any $y\in E$, $c_{y}(s)=(\sin\phi)^{w(y)}$. Together this implies (\ref{eq:disc-symm-corr}).

Moreover, by observing that $p_y(x)$ is linear one can show that $(\sin\phi)^{w(y)}$  is in fact an upper bound on $|c_{y}(p)|$. In other words, for any quadratic $p$
and direction $y$ we have
\begin{equation}
|c_{y}(p)|\leq(\sin\phi)^{w(y)}.\label{eq:disc-trivial-max-1}
\end{equation}
This is an important fact we will use throughout the proof.

\subsubsection{Structure on $p$ and slowly restricting $y$}
To deal with $\sum_{y}|c_{y}(p)|$, we will first illustrate how we can bound $\sum_{y:y_1=0}|c_{y}(p)|$. Looking ahead, we are able to deal with any partial sum where at least one bit in $y$ is restricted to
0, as long as $p$ possesses certain structure. This idea, combined with one more ingredient we discuss in the next section, is the heart of the main proof.  

For the sake of simplicity, suppose that $p=x_{1}x_{2}+q(x_{3},\dots,x_{n})$
for some quadratic $q$. With this structure on $p$, it turns out
we gain something after conditioning on $y_{1}=0$:
\begin{equation}
\sum_{y:y_{1}=0}|c_{y}(p)|\leq\sum_{y:y_{1}=0,y_{2}=0}(\sin\phi)^{w(y)}.\label{eq:first-easy-bound}
\end{equation}
We gain since this improves on the the bound which follows by only using  (\ref{eq:disc-trivial-max-1}):
\begin{equation}
\sum_{y:y_{1}=0}|c_{y}(p)|\leq\sum_{y:y_{1}=0}(\sin\phi)^{w(y)}.
\end{equation}

To prove (\ref{eq:first-easy-bound}) we condition on $y_{2}$. If
$y_{2}=1$ then we show $c_{y}(p)=0$ by mimicking the proof that
$c_{y}(s)=0$ for any $y\in O$. By assumption on $p$ we have $p_{y}(x)=x_{1}+q_{y'}(x')$
for any $y=01y'$. And recall $Mod_{\phi,y}=\omega^{\sum_{i:y_{i}=1}(2x_{i}-1)}$
does not depend on $x_{1}$ since $y_{1}=0$. If $y_{2}=0$ then we
use the bound from (\ref{eq:disc-trivial-max-1}). Combining the two
cases implies (\ref{eq:first-easy-bound}). 

In the next step, we would ideally like to bound $\sum_{y:y_{1}=1}|c_{y}(p)|$.
However, it is not clear how to repeat the previous step, where
the assumption on $p$ and restricting $y_{1}=0$ crucially allowed
us to observe that $c_{y}(p)=0$ for half the directions.

To overcome this, we instead condition on $y_{1}=1,y_{2}=0$. Now
$p_{y}(x)=x_{2}+q_{y'}(x')$ for any $y=10y'$, but $Mod_{\phi,y}$
does not depend on $x_{2}$ since $y_{2}=0$. Hence
\[
\sum_{y:y_{1}=1,y_{2}=0}|c_{y}(p)|=0.
\]

To summarize, we can make progress on the partial sum $\sum_{y:y_{1}=1,\dots,y_{j-1}=1}|c_{y}(p)|$
by conditioning on $y_{j}=0$, as long $x_{j}$ has certain structure
in $p$. This argument gives a non-trivial bound on $\sum_y |c_y(p)|$,
but is still not enough to prove (\ref{eq:disc-open-first-1}). We
strengthen it in the next section.

\subsubsection{Bounding $\protect\E_{y}|c_{y}(p)|$}

We are almost ready to prove our initial goal:
\[
\E_{y}|c_{y}(p)|\leq\E_{y}c_{y}(s).
\]
The last ingredient we need is that (\ref{eq:disc-trivial-max-1})
can be improved to
\begin{equation}
|c_{y}(p)|\leq(\sin\phi)^{w(y)-1}(\cos\phi)\label{eq:disc-handshaking-max-1}
\end{equation}
whenever $y\in O$, which we sketch in the next section.

Our proof strategy is similar to that of the previous section. We restrict
the direction one bit at a time, but now,
we will directly compare $\sum|c_{y}(p)|$ to $\sum c_{y}(s)$. In
the first step we show that
\begin{equation}
\sum_{y:y_{1}=0}|c_{y}(p)|\leq\sum_{y:y_{1}=0}c_{y}(s).\label{eq:second-harder-bound}
\end{equation}
We bound $\sum_{y:y_{1}=0}|c_{y}(p)|$ by applying (\ref{eq:disc-handshaking-max-1})
for the odd weight directions, which allows us to improve the bound
on $\sum_{y:y_{1}=0}|c_{y}(p)|$ from (\ref{eq:first-easy-bound})
to the following:
\[
\sum_{y:y_{1}=0}|c_{y}(p)|\leq\sum_{y:y_{1}=0,y_{2}=0,y'\in E}(\sin\phi)^{w(y)}+\sum_{y:y_{1}=0,y_{2}=0,y'\in O}(\sin\phi)^{w(y)-1}\cos\phi.
\]
To compare this to $\sum_{y:y_{1}=0}c_{y}(s)$, we recall the expression
from (\ref{eq:disc-symm-corr}) which implies
\[
\sum_{y:y_{1}=0}c_{y}(s)=\sum_{y:y_{1}=0,y_{2}=0,y'\in E}(\sin\phi){}^{w(y)}+\sum_{y:y_{1}=0,y_{2}=1,y'\in O}(\sin\phi){}^{w(y)}.
\]
Now we can conclude the proof of (\ref{eq:second-harder-bound}) as
\[
\sum_{y:y_{1}=0,y_{2}=0,y'\in O}(\sin\phi)^{w(y)-1}\cos\phi\leq\sum_{y:y_{1}=0,y_{2}=1,y'\in O}(\sin\phi){}^{w(y)}.
\]
We remark the improvement from (\ref{eq:disc-handshaking-max-1})
is crucial since if we just used (\ref{eq:first-easy-bound}) then
we would need 
\[
\sum_{y:y_{1}=0,y_{2}=0,y'\in O}(\sin\phi)^{w(y)}\leq\sum_{y:y_{1}=0,y_{2}=1,y'\in O}(\sin\phi){}^{w(y)}
\]
which is clearly false as $\sin\phi<1$.

For the next step, assuming that $x_{2}$ appears in at least a few
quadratic terms (for the precise conditions see Lemmas \ref{lem:odd-even-general},
\ref{lem:gap-general}), we can similarly show that 
\[
\sum_{y:y_{1}=1,y_{2}=0}|c_{y}(p)|\leq\sum_{_{y:y_{1}=1,y_{2}=0}}c_{y}(s).
\]

We continue this process until there are no more suitable direction
bits to condition on. When this happens, we conclude by reasoning
on the remaining structure of the polynomial (see Lemmas \ref{lemma:done-if-most-open},
\ref{lemma:low-degree-loses}).

\subsubsection{The proof of (\ref{eq:disc-handshaking-max-1}) via handshaking}

For any $p$ and $y$, we can determine $p_{y}(x)$ by examining the
graph $G_{p,y}$, which is defined with $w(y)$ nodes that correspond
to the variables indexed by the 1 bits of $y$, and edges that represent
the quadratic terms of $p$ on those $w(y)$ variables. Observe that
$x_{i}$ appears in $p_{y}(x)$ iff $x_{i}$ has odd degree in $G_{p,y}$. 

Now fix some $y\in O$. The number of nodes in $G_{p,y}$ is odd,
and the handshaking lemma implies the number of nodes in $G_{p,y}$
with odd degree must be even. Together this implies $p_{y}(x)$ contains
at most $w(y)-1$ variables which in turn implies (\ref{eq:disc-handshaking-max-1})
after a calculation. For the formal proof see Claim \ref{claim:contribution-is-even}.

\subsubsection{Slackness}

Although we get exact results in the end, we emphasize that some steps
in the proof do not yield exact bounds, but are approximate. For example,
after we open the first bit we in fact show a strict inequality between
$\E_{y:y_{1}=0}|c_{y}(p)|$ and $\E_{y:y_{1}=0}c_{y}(s)$ when $p$
is non-symmetric (Lemma \ref{lem:buffer-general}). This gives us
a ``buffer'' between $\E_{y}|c_{y}(p)|$ and $\E_{y}c_{y}(s)$,
which is reflected in the statement of Item 1(d) in Theorem \ref{thm:main-general}.

This extra factor is not just additional information, but is in fact
critical for the proof since the final step might be lossy (this occurs
when Lemma \ref{lemma:done-if-most-open} is applied). The buffer
gained will be much larger than the loss from Lemma \ref{lemma:done-if-most-open}
which allows us to conclude the proof.

\section{The CHHLZ conjecture \label{sec:CHHLZ-conjecture}}

In this section we present the new technique in \cite{DBLP:conf/stoc/ChattopadhyayHH20},
their conjecture, and our counterexample. The key ingredient in the
approach in \cite{DBLP:conf/stoc/ChattopadhyayHH20} is a structural
result about the Fourier spectrum of low-degree polynomials over $\F_{2}$.
They show that for any $n$-variate polynomial $p$ over $\F_{2}$
of degree $\le d$, there is a set $S$ of variables such that almost
all of the Fourier mass of $p$ lies on Fourier coefficients that
intersect with $S$, and the size of $S$ is exponential in $d$.
This remarkable result allows them to prove new correlation bounds.
Further, they conjecture that the size of $S$ needs to be just polynomial
in $d$.

Next we present their conjecture in more detail, and then our results.
The main quantity used in \cite{DBLP:conf/stoc/ChattopadhyayHH20}
is ``local correlation'' which they define as follows:
\begin{defn}[Local correlation, \cite{DBLP:conf/stoc/ChattopadhyayHH20}]
 For any $F:\zo^{n}\to\{-1,1\}$,
\[
\Delta_{S}(F):=\E_{x^{\overline{S}}}\left[(\E_{x^{S}}[F(x)]-\E[F])^{2}\right].
\]
\end{defn}

For a polynomial $p:\F_{2}^{n}\to\F_{2}$ we write $e(p)$ for $(-1)^{p}$
which takes values in $\{-1,1\}$. Next we state their conjecture:
\begin{conjecture}[{\cite[Conjecture 1.14]{DBLP:conf/stoc/ChattopadhyayHH20}}]
\label{conj:chhlz-1}For every polynomial $p$ of degree $d$ there
exists a set $S$ of $\leq poly(d,\log(1/\e))$ variables such that
$\Delta_{S}(e(p))\leq\eps$.
\end{conjecture}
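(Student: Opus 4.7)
The plan is to prove that for every degree-$d$ polynomial $p$ and every $\epsilon>0$ there is a set $S$ of size $\poly(d,\log(1/\epsilon))$ with $\Delta_{S}(e(p))\leq\epsilon$. First I would rewrite the quantity in Fourier form: a direct computation gives
\[
\Delta_{S}(e(p))=\sum_{\emptyset\ne T\subseteq\overline{S}}\widehat{e(p)}(T)^{2},
\]
so the task reduces to finding a small $S$ that hits every Fourier set carrying appreciable mass. Equivalently, I want a short ``certificate'' of variables that together intersect all but $\epsilon$ of the non-constant Fourier weight of $e(p)$.

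My second step is a greedy construction. Set $S_{0}=\emptyset$ and, at step $i$, add to $S_{i}$ the variable $j\in\overline{S_{i}}$ that absorbs the largest share of the residual mass $R_{i}:=\Delta_{S_{i}}(e(p))$, namely $\sum_{T\subseteq\overline{S_{i}},\,T\ni j}\widehat{e(p)}(T)^{2}$. If I could prove a head-concentration lemma of the form ``some $j$ always absorbs at least a $1/\poly(d)$ fraction of $R_{i}$'', then $R_{i}$ would decay geometrically at rate $1-1/\poly(d)$, yielding $|S|\leq\poly(d)\cdot\log(1/\epsilon)$ as required.

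The third, and critical, step is that lemma. A naive averaging bound $\sum_{j}(\text{mass absorbed by }j)=\sum_{T\subseteq\overline{S_{i}}}|T|\,\widehat{e(p)}(T)^{2}$ only guarantees a $1/n$ fraction, because $|T|$ can be as large as $n$, so one must exploit the special structure of $e(p)=(-1)^{p}$ with $\deg p\leq d$. The natural route is to use the derivative / random-restriction machinery behind the CHHLZ structure theorem itself: restrict $\overline{S_{i}}$ to a random assignment (making $p$ a degree-$d$ polynomial in the free variables) and show via a handshaking or repeated-derivative argument that in expectation some $\poly(d)$-size ``head'' of variables captures almost all the remaining Fourier mass, independent of $n$.

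The hard part will be precisely this amplification. CHHLZ essentially execute the outline above and obtain $|S|\leq 2^{O(d)}$, so the whole gap between their theorem and the conjecture lies in sharpening the head-concentration bound from exponential to polynomial in $d$. The natural obstructions are polynomials whose Fourier mass is spread evenly across many disjoint ``blocks'', for instance sums of many disjoint symmetric degree-$d$ polynomials on separate variable sets: each block contributes weight that no single coordinate can absorb a nontrivial fraction of, which suggests that if the conjecture is to hold at all, coordinate-greedy must be replaced by a rule that adds an entire $\poly(d)$-size block at each step. Ruling out such block constructions as genuine counterexamples is what I expect to be the crux, and failing to rule them out would be strong evidence that the conjecture is in fact false.
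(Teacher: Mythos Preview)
The statement you are attempting to prove is a \emph{conjecture}, and the paper does not prove it---it \emph{disproves} it. Theorem~\ref{thm:counterex-CHLLZ} exhibits a polynomial $p$ of degree $d=O(\log n)$ such that $\Delta_{S}(e(p))\ge\Omega(1)$ for every $S$ of size at most $cn/\log^{2}n$. So no proof of Conjecture~\ref{conj:chhlz-1} can exist, and your outline must fail.

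The failure is exactly at your ``critical step'': the head-concentration lemma is false. Your closing instinct is correct---there are polynomials whose residual Fourier mass is spread so evenly that no $\poly(d)$-size block absorbs a nontrivial fraction---but the paper's counterexample is not a disjoint-block construction. It is a degree-$O(\log n)$ polynomial that approximates the $\mathrm{TRIBES}$ function (via Razborov-style random XORs of the DNF terms). The key property of $\mathrm{TRIBES}$ is resilience: after restricting any $O(n/\log^{2}n)$ coordinates, the function remains unfixed with probability $\ge 1/2-o(1)$ (Lemma~\ref{lem:tribes-resilient}). Combined with the high correlation between $p$ and $\mathrm{TRIBES}$ (Lemma~\ref{cor:lem-degree-resilient}), this forces $\Delta_{S}(e(p))$ to stay bounded away from zero no matter which near-linear-size $S$ you choose. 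In particular, not only does the $\poly(d)$ bound fail, but what CHHLZ prove---$|S|$ exponential in $d$---is essentially tight.
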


In fact CHHLZ make a stronger conjecture (Conjecture 1.15 in \cite{DBLP:conf/stoc/ChattopadhyayHH20}),
where a single set $S$ is found that works for an entire space of
dimension $k$ of polynomials. This generality is critical in proving
their new correlation bounds. However, we shall give a counterexample
even for $k=1$. In fact, we shall rule out even much weaker parameters
and show that what they prove is essentially the best possible. Specifically,
we show that for $d=O(\log n)$ and constant $\e$, one needs $|S|\ge n/\log^{O(1)}n$.
\begin{thm}
\label{thm:counterex-CHLLZ}There exists a polynomial $p$ of degree
$d=O(\log n)$ such that $\Delta_{S}(e(p))\ge\Omega(1)$ for any $S$
of size $\leq c\cdot n/\log^{2}n$, where $c>0$ is an absolute constant.
\end{thm}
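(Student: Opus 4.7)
I would take $p$ to be an XOR of many disjoint $\mathrm{AND}_{d}$'s on disjoint blocks. Concretely, set $d := \lfloor \log n - 2 \log \log n \rfloor$, so that $2^{d} = \Theta(n/\log^{2} n)$ and $d = \Theta(\log n)$; partition $[n]$ into $k := \lfloor n/d \rfloor$ disjoint blocks $B_{1}, \ldots, B_{k}$ of size $d$; and define
\[
p(x) \;:=\; \bigoplus_{j=1}^{k} \prod_{i \in B_{j}} x_{i}.
\]
Then $\deg p = d = O(\log n)$ as required. Writing $F := (-1)^{p} = \prod_{j} F_{j}$ with $F_{j} := (-1)^{\prod_{i \in B_{j}} x_{i}}$, a direct calculation gives the Fourier expansion $\widehat{F_{j}}(\emptyset) = 1 - 2^{1-d}$ and $\widehat{F_{j}}(T)^{2} = 4^{1-d}$ for every nonempty $T \subseteq B_{j}$. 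Because the blocks are disjoint, Fourier coefficients multiply, so writing $s_{j} := |S \cap B_{j}|$ and applying Parseval within each block,
\[
\sum_{T \cap S = \emptyset} \widehat{F}(T)^{2} \;=\; \prod_{j=1}^{k} \beta(s_{j}), \qquad \beta(s) \;:=\; 1 - 4 \cdot 2^{-d}\bigl(1 - 2^{-s}\bigr).
\]

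To lower-bound $\Delta_{S}(F) = \prod_{j} \beta(s_{j}) - \widehat{F}(\emptyset)^{2}$ uniformly over $|S| \le K := c\,n/\log^{2} n$, I would first argue $\prod_{j} \beta(s_{j}) \ge \Omega(1)$. Since $s \mapsto 1 - 2^{-s}$ is concave on $\Z_{\ge 0}$, minimizing $\prod_{j} \beta(s_{j})$ amounts to maximizing $\sum_{j}(1 - 2^{-s_{j}})$ subject to $\sum s_{j} \le K$; for $K \le k$ the optimal integer allocation is $s_{j} = 1$ on $K$ blocks, yielding $\sum_{j}(1 - 2^{-s_{j}}) \le K/2$ and hence
\[
\prod_{j} \beta(s_{j}) \;\ge\; (1 - 2 \cdot 2^{-d})^{K} \;\ge\; \exp\bigl(-(2 + o(1))\, K/2^{d}\bigr) \;=\; e^{-O(c)}.
\]
Meanwhile $\widehat{F}(\emptyset)^{2} = (1 - 2^{1-d})^{2k} \le \exp(-4 k/2^{d}) = \exp(-\Omega(\log n)) = o(1)$, because the same choice of $d$ forces $k/2^{d} = (n/d)(\log^{2} n/n) = \Theta(\log n) \to \infty$. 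Combining, $\Delta_{S}(F) \ge e^{-O(c)} - o(1) \ge \Omega(1)$ uniformly over $|S| \le c\, n/\log^{2} n$, for any sufficiently small fixed $c > 0$ and all $n$ large.

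The main delicacy, and the step I would focus on getting right, is choosing $d$ to balance two opposing requirements: one needs $2^{d} \gtrsim K$ so that adding any variable to $S$ only shrinks $\prod_{j}\beta(s_{j})$ by a $1 - O(1/2^{d})$ factor, yet simultaneously $k/2^{d} \to \infty$ so that the XOR of $k$ independent ANDs is approximately balanced (forcing $\widehat{F}(\emptyset)^{2} = o(1)$). The choice $d = \log n - 2\log \log n$ comfortably satisfies both; everything else reduces to routine Fourier bookkeeping and the concavity argument above.
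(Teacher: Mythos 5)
Your proof is correct, and it takes a genuinely different route from the paper. You exhibit one explicit polynomial, the XOR of $k=\lfloor n/d\rfloor$ ANDs of width $d=\lfloor\log n-2\log\log n\rfloor$ on disjoint blocks, and compute $\Delta_S(e(p))$ essentially exactly via Fourier analysis: since $\Delta_S(F)=\sum_{T\neq\emptyset,\,T\cap S=\emptyset}\widehat F(T)^2$ and the blocks are disjoint, the quantity factors into the per-block expressions $\beta(s_j)=1-4\cdot2^{-d}(1-2^{-s_j})$, and the two requirements ($2^d\gtrsim|S|$ so that deleting $|S|$ variables costs only a constant factor, and $k/2^d\to\infty$ so that $\widehat F(\emptyset)^2=o(1)$) are balanced by your choice of $d$. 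The paper instead argues indirectly: it takes TRIBES, shows it remains unfixed with probability about $1/2$ under any restriction of $O(n/\log^2 n)$ variables, approximates it by a degree-$O(\log n)$ polynomial in the style of Razborov, and then derives a contradiction (via Markov's inequality) between small local correlation and the correlation of that polynomial with TRIBES. Your argument is more elementary and self-contained, avoids the approximation step and the resilience lemma, and yields an explicit polynomial with explicit constants; the paper's argument is existential about the polynomial but shows the slightly more general fact that \emph{any} polynomial correlating with TRIBES has large local correlation, which connects directly to the resilient-function framework of CHHLZ. One small remark: your appeal to concavity to claim that the worst allocation is $s_j=1$ on $|S|$ blocks is stated loosely (you are minimizing a product, not maximizing a sum), but the conclusion $\prod_j\beta(s_j)\ge(1-2\cdot2^{-d})^{|S|}$ does hold (e.g.\ by the per-block inequality $\beta(s)\ge(1-2\cdot2^{-d})^{s}$, provable by induction on $s$), and in any case the cruder bound $\beta(s_j)\ge1-4\cdot2^{-d}$ for each of the at most $|S|$ intersected blocks already gives $\prod_j\beta(s_j)\ge e^{-O(c)}$, which is all you need.
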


The rest of this section is devoted to the proof of this theorem.
The idea behind it is quite natural in hindsight, and highlights the
expressive power of polynomials of degree $O(\log n)$. 
\begin{defn}
\cite{Ben-OrLi85} (cf. \cite{ODonnell-boolean2007}, Proposition
4.12) We define $\mathrm{TRIBES}:\zo^{n}\to\zo$ to be a read-once
monotone DNF where every term has size $w$ so that $|\E_{x}[\mathrm{TRIBES}(x)]-1/2|\leq O(\log n)/n$.
This makes $w=\log n-\log\log n+O(1)$.
\end{defn}

The next result shows the probability $\mathrm{TRIBES}$ is fixed
to 1 after a uniform assignment to $x^{\overline{S}}$ is approximately
the same as after a uniform assignment to $x$, where $S\subset[n]$
is a subset of nearly linear size. This property was also used in
\cite{IP-AC0Parity} to show separations between DNFs composed with
parity gates and parity decision trees.
\begin{lem}
\label{lem:tribes-resilient}Fix any $S\subset[n]$ such that $|S|\leq O(n/\log^{2}n)$.
Then
\[
\P_{x^{\overline{S}}}[\mathrm{\mathrm{TRIBES}}(x)\text{ not fixed }]\leq1/2+o(1).
\]
\end{lem}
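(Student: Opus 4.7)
The plan is to show that the event ``TRIBES is not fixed'' is contained in the event ``no term is satisfied by the restriction'', and that the latter event already has probability at most $1/2 + o(1)$, with most of the probability slack coming from the many terms that lie entirely inside $\overline{S}$.

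More concretely, since $\mathrm{TRIBES}$ is a monotone DNF, its output is not fixed by an assignment to $x^{\overline{S}}$ only if every term has either been killed (contains a $0$-literal in $\overline{S}$) or is still alive but contains at least one variable from $S$. In particular, \emph{no term} can be fully satisfied by the $\overline{S}$-assignment alone; otherwise the output would be fixed to $1$. So $\P_{x^{\overline{S}}}[\text{TRIBES not fixed}] \leq \P_{x^{\overline{S}}}[\text{no term is satisfied by }x^{\overline{S}}]$. I will split the $n/w$ terms into two groups: $A := \{T_j : T_j \cap S = \emptyset\}$ and $B := \{T_j : T_j \cap S \neq \emptyset\}$. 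Because $\mathrm{TRIBES}$ is read-once, each variable of $S$ lies in at most one term, so $|B| \leq |S| \leq cn/\log^2 n$ and therefore $|A| \geq n/w - cn/\log^2 n$.

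Next, for each term $T_j \in A$, all $w$ of its variables are in $\overline{S}$, so the event ``$T_j$ is satisfied by $x^{\overline{S}}$'' has probability exactly $2^{-w}$, and these events are mutually independent across $j \in A$ since the terms are on disjoint variable sets. Hence
\[
\P_{x^{\overline{S}}}[\text{no term in }A\text{ is satisfied}] = (1 - 2^{-w})^{|A|}.
\]
By the choice $w = \log n - \log\log n + O(1)$ we have $(1 - 2^{-w})^{n/w} = \tfrac{1}{2} + O(\log n / n)$ (this is exactly the calibration that makes $\mathrm{TRIBES}$ roughly balanced). The discrepancy between the exponents $|A|$ and $n/w$ contributes a factor of at most
\[
(1 - 2^{-w})^{-cn/\log^2 n} \leq \exp\!\bigl(O(\log n/n)\cdot cn/\log^2 n\bigr) = \exp(O(1/\log n)) = 1 + o(1).
\]
Multiplying, $(1 - 2^{-w})^{|A|} \leq \tfrac{1}{2} + o(1)$, and combined with the containment of events this gives the lemma.

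I do not expect a serious obstacle: the entire argument is elementary once one notices that ``not fixed'' is dominated by ``no fully-$\overline{S}$ term gets satisfied''. The only point that needs a bit of care is the quantitative comparison $|A| \geq n/w - O(n/\log^2 n)$ and checking that the $(1 - 2^{-w})^{-O(n/\log^2 n)}$ slack is only $1 + o(1)$; this is why the hypothesis $|S| \leq cn/\log^2 n$ (rather than, say, $|S| \leq cn/\log n$) is exactly the right quantitative threshold.
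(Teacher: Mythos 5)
Your proposal is correct and follows essentially the same route as the paper: both arguments reduce "not fixed" to the event that none of the at least $n/w-|S|$ terms untouched by $S$ is satisfied, giving the bound $(1-2^{-w})^{n/w-|S|}$, and both then use $|S|\cdot 2^{-w}=O(1/\log n)$ to show this is $\P_x[\mathrm{TRIBES}=0]\cdot(1+o(1))\le 1/2+o(1)$. The only difference is presentational (you bound $\P[\text{not fixed}]$ directly, the paper lower-bounds the probability of being fixed to $1$).
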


\begin{proof}
The set $S$ can intersect at most $|S|$ $\mathrm{AND}$ terms. The
probability over a uniform assignment to $x^{\overline{S}}$ that
$\mathrm{TRIBES}(x)$ is fixed to 1 is at least the probability one
of the untouched $\mathrm{AND}$ terms is set to 1. Hence,
\begin{align*}
\P_{x^{\overline{S}}}[\mathrm{TRIBES}(x)=1] & \geq1-(1-2^{-w})^{n/w-|S|}.\\
 & =1-\frac{\P_{x}[\mathrm{TRIBES}(x)=0]}{(1-2^{-w})^{|S|}}\\
 & \geq1-(1/2+O(\log n)/n)(1+1/\Omega(\log n))\\
 & \geq1/2-1/\Omega(\log n).
\end{align*}
where the second $\geq$ follows since $(1-2^{-w})^{|S|}\geq1-|S|/2^{w}\geq1-1/\Omega(\log n)$
and the fact $1/(1-x)\geq1+x$.
\end{proof}
We next show that $\mathrm{TRIBES}$ can be approximated by a low-degree
polynomial. This can be seen as a special case of Razborov's classical
approximation \cite{Raz87}.
\begin{lem}
\label{cor:lem-degree-resilient}There exists a $O(\log n)$ degree
polynomial $p$ such that 
\[
\E_{x}[e(\mathrm{TRIBES}(x))e(p(x))]\geq1/2+\Omega(1).
\]
\end{lem}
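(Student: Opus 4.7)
The plan is to approximate $\mathrm{TRIBES}$ by a low-degree $\F_2$ polynomial using Razborov's probabilistic method specialized to an $\mathrm{OR}$ over the AND-terms. Write $\mathrm{TRIBES} = T_1 \vee T_2 \vee \cdots \vee T_t$ where $t = n/w$ and each $T_i$ is an AND of $w = \log n - \log\log n + O(1)$ literals. Over $\F_2$, each $T_i$ is a single degree-$w$ monomial.

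Next I would introduce independent uniformly random subsets $S_1, S_2 \subseteq [t]$ and set
\[
q_j(x) := \sum_{i \in S_j} T_i(x) \bmod 2, \qquad p(x) := q_1(x) + q_2(x) + q_1(x) q_2(x) \bmod 2.
\]
Each $q_j$ is $\F_2$-linear in the $T_i$'s, so $\deg q_j \le w$, and hence $\deg p \le 2w = O(\log n)$. By construction, $p = q_1 \vee q_2$ as a Boolean function.

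The key case analysis (for fixed $x$, random $S_1,S_2$) is clean. If $\mathrm{TRIBES}(x) = 0$, then every $T_i(x) = 0$, so $q_1 = q_2 = 0$ and $p(x) = 0$ with probability $1$. If $\mathrm{TRIBES}(x) = 1$, let $A(x) = \{i : T_i(x) = 1\} \ne \emptyset$; then $q_j(x) = |S_j \cap A(x)| \bmod 2$ is an unbiased bit independently for $j = 1,2$, so $\Pr[p(x) = 1] = 3/4$. Combining with the near-balance $\Pr_x[\mathrm{TRIBES}(x) = 0] = 1/2 + O(\log n / n)$ from the definition of $\mathrm{TRIBES}$,
\[
\E_{S_1,S_2}\E_x\bigl[e(\mathrm{TRIBES}(x))\, e(p(x))\bigr] \;\ge\; \Pr_x[\mathrm{TRIBES}=0] \cdot 1 + \Pr_x[\mathrm{TRIBES}=1]\cdot\tfrac{1}{2} \;\ge\; \tfrac{3}{4} - o(1).
\]
An averaging argument then fixes a specific pair $(S_1,S_2)$ that achieves correlation at least $3/4 - o(1) \ge 1/2 + \Omega(1)$, giving the desired $p$.

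There is no real obstacle here; the only thing to be slightly careful about is that $p$ is a genuine $\F_2$ polynomial of the claimed degree (immediate from the XOR/AND formula above) and that the "contribution $1/2$" from the $\mathrm{TRIBES}=1$ case is unconditional on $|A(x)|$, which is why a uniformly random $S_j$ — as opposed to, say, a random sparse subset — is the right choice. This is exactly the $k=2$ instance of Razborov's approximation of $\mathrm{OR}_t$, and could be pushed to error $2^{-k}$ with degree $kw = O(k\log n)$ if stronger agreement were needed, but constant advantage is all that is required here.
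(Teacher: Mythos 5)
Your construction coincides with the paper's: taking $p=q_{1}\vee q_{2}$ for two uniformly random XORs of the AND-monomials is exactly the distribution $D$ in the paper's proof (Razborov's approximation with two random parities), with the same case analysis over $\mathrm{TRIBES}(x)=0$ versus $\mathrm{TRIBES}(x)=1$ and the same averaging step, so this is essentially the same argument. If anything, your direct computation of the expected correlation, $3/4-o(1)$ using the near-balance of $\mathrm{TRIBES}$, is slightly more careful than the paper's phrasing, which extracts only a fixed $p$ with pointwise error at most $1/4$ and hence by itself would give correlation only $1/2$ rather than $1/2+\Omega(1)$.
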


\begin{proof}
We will construct a distribution $D$ of $O(\log n)$ degree polynomials
such that for any $x$, $\P_{q\sim D}[q(x)\neq\mathrm{TRIBES}(x)]\leq1/4$.
This would allow us to conclude, since by averaging there must a polynomial
$p\in D$ such that $\P_{x}[p(x)\neq\mathrm{TRIBES}(x)]\leq1/4$. 

To construct $D$, first note the $n/w$ $AND$ terms can be computed
by degree $w$ monomials $m_{1}(x),\dots,m_{n/w}(x)$. To sample $q\sim D$,
we uniformly sample $T_{1},T_{2}\subseteq[n/w]$ and set
\[
q(x):=1-(1-\bigoplus_{i\in T_{1}}m_{i}(x))\wedge(1-\bigoplus_{i\in T_{2}}m_{i}(x)).
\]

Since $T_{1},T_{2}$ are chosen uniformly, for any $x$ such that
$(m_{1}(x),\dots m_{n/w}(x))\neq0$ we have $\P_{q\sim D}[q(x)=0]=1/4$
. And for any $x$ such that $(m_{1}(x),\dots m_{n/w}(x))=0$ we have
$\P_{q\sim D}[q(x)=1]=0$. Together this implies for any $x$, $\P_{q\sim D}[q(x)\neq\mathrm{TRIBES}(x)]\leq1/4$. 
\end{proof}
We are now ready to prove the main result.
\begin{proof}[Proof of Theorem \ref{thm:counterex-CHLLZ}]
 First we note that if $\Delta_{S}(e(p))\leq\e$ then by Markov's
inequality

\begin{equation}
\P_{x^{\overline{S}}}\left[\left|\E_{x^{S}}[e(p(x))]-\E[e(p)]\right|>\e^{1/4}\right]\leq\e^{1/2}.\label{eq:low-degree-property}
\end{equation}
Then, using $T(x)$ to denote $\mathrm{TRIBES}(x)$ for brevity, we
can write
\begin{align*}
\E_{x}\left[e(T(x))e(p(x))\right] & =\E_{x}\left[e(T(x))\cdot(e(p(x))-\E[e(p)])\right]+\E[e(T)]\E[e(p)]\\
 & \leq\E_{x}\left[e(T(x))\cdot(e(p(x))-\E[e(p)])\right]+O(\log n)/n
\end{align*}
where the $\leq$ follows since $|\E[e(T)]|\leq O(\log n)/n$ by the
definition of $\mathrm{TRIBES}$. 

After a uniform assignment to $x^{\overline{S}}$, let $E_{1}$ denote
the event $\left|\E_{x^{S}}[e(p(x))]-\E[e(p)]\right|\leq\e^{1/4}$
and let $E_{2}$ denote the event that $\mathrm{TRIBES}(x)$ is fixed.
Then we have
\begin{align*}
\E_{x}\left[e(T(x))\cdot(e(p(x))-\E[e(p)])\right] & \leq\E_{x^{\overline{S}}}\bigg[\bigg|\E_{x^{S}}\big[e(T(x))\cdot(e(p(x))-\E[e(p)])\big]\bigg|\bigg]\\
 & \leq\E_{x^{\overline{S}}}\bigg[\bigg|\E_{x^{S}}\big[e(T(x))\cdot(e(p(x))-\E[e(p)])\big]\bigg||E_{1}E_{2}\bigg]+ \\ &  \qquad  \P[\neg E_{1}]+\P[\neg E_{2}]\\
 & \leq\e^{1/4}+\e^{1/2}+1/2+o(1).
\end{align*}
For the last inequality, note that $\E_{x^{S}}[e(T(x))\cdot(e(p(x))-\E[e(p)])|E_{1}E_{2}]=\E_{x^{S}}[e(p(x))-\E[e(p)]|E_{1}]$
since $\mathrm{TRIBES}(x)$ is fixed conditioned on $E_{2}$. We bound
$\P[\neg E_{1}]$ by (\ref{eq:low-degree-property}) and $\P[\neg E_{2}]$
by Lemma \ref{lem:tribes-resilient}. Setting $\e$ to a small enough
constant contradicts Lemma \ref{cor:lem-degree-resilient} and concludes
the proof of Theorem \ref{thm:counterex-CHLLZ}.
\end{proof}

\section{Derivatives \label{sec:Derivatives}}

In this section we rewrite $C_{\phi}(p)^{2}$ in terms of the correlation
of the derivatives of $p$ with $Mod_{\phi}$, and use this viewpoint
to derive several basic facts which will be used later. Fix any $\phi\in[0,2\pi]$,
let $\omega=e^{\phi\sqrt{-1}}$, and from here on we let $\sigma:=\sin\phi,\gamma:=\cos\phi$.

We begin by using the fact that $|z|^{2}=z\overline{z}$ for any complex
number, where $\overline{z}$ is the complex conjugate, to rewrite
the correlation square $C_{\phi}^{2}(p)$ as
\[
\E_{x}(-1)^{p(x)}\omega^{\sum_{i}x_{i}}\cdot\overline{\E_{y}(-1)^{p(y)}\omega^{\sum_{i}y_{i}}}.
\]
Replacing $y$ with $x\oplus y$ and noting that $\overline{(-1)^{p(y)}\omega^{\sum_{i}y_{i}}}=(-1)^{p(y)}\omega^{-\sum_{i}y_{i}}$
we can rewrite the correlation square with the following expression:

\[
\E_{y}\E_{x}(-1)^{p(x)+p(x\oplus y)}\omega^{\sum_{i}x_{i}-\sum_{i}(x_{i}\oplus y_{i})}.
\]
The inner expectation over $x$ plays an important role and so we
introduce a definition.
\begin{defn}
\label{def:The-contribution-of} The \emph{contribution }of polynomial
$p$ in the \emph{direction $y$, }or the \emph{$y$-contribution}
of $p$, is $c_{y}(p):=\E_{x}(-1)^{p(x)+p(x\oplus y)}\omega^{\sum_{i}x_{i}-\sum_{i}(x_{i}\oplus y_{i})}$.
\end{defn}

Note $c_{y}(p)$ is always defined with respect to an angle $\phi$,
which will always be clear from context. Repeating what was said above,
\[
C_{\phi}(p)^{2}=\E_{y}c_{y}(p).
\]
The polynomial $p(x)+p(x\oplus y)$ that appears in $c_{y}(p)$ is
the \emph{derivative} of $p$ in \emph{direction} $y$, denoted $p_{y}$.
When $p$ is quadratic, this derivative is linear. Hence, $p_{y}(x)=\sum_{i\le n}p_{y,i}x_{i}+p_{y,0}$
where for every $y$, $p_{y,i}\in\zo$ are is the coefficient of $x_{i}$,
and $p_{y,0}$ is the constant. 

Because $p_{y}(x)$ is linear, for fixed $y$ the expectation over
$x$ is actually the expectation of \emph{independent} functions of
the $x_{i}$ and so the $y$-contribution can be written as
\[
(-1)^{p_{y,0}}\prod_{i=1}^{n}\E_{x_{i}}(-1)^{p_{y,i}x_{i}}\omega^{x_{i}-(x_{i}\oplus y_{i})}.
\]
Each of the expectations $\E_{x_{i}}(-1)^{p_{y,i}x_{i}}\omega^{x_{i}-(x_{i}\oplus y_{i})}$
above takes one of four different values, depending on the four possibilities
for $p_{y,i}$ and $y_{i}$. These values play a crucial role in this
paper and we present them next. Note that if $y_{i}=0$ then $x_{i}-(x_{i}\oplus y_{i})=0$
and so the $\omega$ factor disappears.
\begin{prop}
\label{prop:table}We have the following four possible values for
$\E_{x_{i}}(-1)^{p_{y,i}x_{i}}\omega^{x_{i}-(x_{i}\oplus y_{i})}$:
\end{prop}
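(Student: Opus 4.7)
The proposition is purely a case analysis over the four possibilities $(y_i, p_{y,i}) \in \zo^2$, so the plan is to evaluate the one-variable expectation directly in each case from the definitions, with no inequalities or structural arguments involved.

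First I would split on $y_i$. When $y_i = 0$, the exponent $x_i - (x_i \oplus y_i)$ vanishes identically, so $\omega^{x_i - (x_i \oplus y_i)} = 1$ and the expression collapses to $\E_{x_i}(-1)^{p_{y,i} x_i}$. This gives $1$ when $p_{y,i} = 0$ (summand is constantly $1$) and $0$ when $p_{y,i} = 1$ (cancellation between $x_i = 0$ and $x_i = 1$).

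When $y_i = 1$, one has $x_i \oplus y_i = 1 - x_i$, so $x_i - (x_i \oplus y_i) = 2x_i - 1$. Averaging over $x_i \in \zo$ yields
\[
\tfrac{1}{2}\bigl(\omega^{-1} + (-1)^{p_{y,i}} \omega\bigr).
\]
Using Euler's identities $\omega + \omega^{-1} = 2\cos\phi$ and $\omega - \omega^{-1} = 2\sqrt{-1}\sin\phi$, this evaluates to $\gamma = \cos\phi$ when $p_{y,i} = 0$ and to $-\sqrt{-1}\,\sigma = -\sqrt{-1}\sin\phi$ when $p_{y,i} = 1$.

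Collecting the four values into a table completes the proof. There is no real obstacle here: the proposition is essentially fixing the notation for four elementary quantities that will recur throughout the derivative-based analysis in the rest of the paper, and the only work is four straightforward one-variable expectations. The one minor point worth flagging for the reader is the sign in the last case (the $-\sqrt{-1}$ rather than $+\sqrt{-1}$), which comes from the conjugation that was used to pass from $C_\phi(p)^2$ to the $\E_y c_y(p)$ form in Section~\ref{sec:Derivatives}.
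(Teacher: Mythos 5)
Your proof is correct and follows essentially the same route as the paper, which simply notes that the $\omega$ factor disappears when $y_i=0$ and records the four one-variable expectations (e.g.\ $\frac{1}{2}(\omega^{-1}+\omega)=\gamma$ and $\frac{1}{2}(\omega^{-1}-\omega)=-\sqrt{-1}\,\sigma$) directly in the table. The case analysis and the resulting values match exactly, so there is nothing to add.
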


\begin{tabular}{|c|c|l|}
\hline 
$p_{y,i}$ & $y_{i}$ & $\E_{x_{i}}(-1)^{p_{y,i}x_{i}}\omega^{x_{i}-(x_{i}\oplus y_{i})}$\tabularnewline
\hline 
\hline 
0 & 0 & $=1$\tabularnewline
\hline 
1 & 0 & $=\E_{x_{i}}(-1)^{x_{i}}=0$\tabularnewline
\hline 
0 & 1 & = $\E_{x_{i}}\omega^{x_{i}-(x_{i}\oplus1)}=\frac{1}{2}\left(\omega^{-1}+\omega\right)=\gamma$\tabularnewline
\hline 
1 & 1 & = $\E_{x_{i}}(-1)^{x_{i}}\omega^{x_{i}-(x_{i}\oplus1)}=\frac{1}{2}\left(\omega^{-1}-\omega\right)=\text{\ensuremath{-\sqrt{-1}\cdot\sigma}}$\tabularnewline
\hline 
\end{tabular}

\paragraph{Restricting to $\phi\in[0,\pi/2]$.}

We now justify our previous assertion that we can restrict our attention
to angles $\phi\in[0,\pi/2]$. First, if $\phi\in[\pi/2,3\pi/2]$
then we can sum $e^{1}$ to $p$. Then $C_{\phi}(p+e^{1})=C_{\pi+\phi}(p)$
and $\pi+\phi\in[-\pi/2,\pi/2]$. Next, if $\phi\in[-\pi/2,0]$ then
$C_{\phi}(p)=C_{-\phi}(p)$ and now $\phi\in[0,\pi/2]$.
\begin{defn}
We denote the Hamming weight of $x\in\zo^{n}$ by $w(x)$.
\end{defn}

Looking at the table above we can obtain the following bound on $c_{y}(p)$
in terms of the weight of the derivative.
\begin{claim}[Weight bound on contribution]
\label{claim:trivial-maximum} For any $y\in\zo^{n}$ and any $\phi$
we have $|c_{y}(p)|\leq\max\{\sigma,\gamma\}{}^{w(y)}.$
\end{claim}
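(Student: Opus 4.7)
The plan is to prove this bound by taking absolute values in the product expression for $c_y(p)$ derived just before the claim, namely
\[
c_{y}(p) = (-1)^{p_{y,0}}\prod_{i=1}^{n}\E_{x_{i}}(-1)^{p_{y,i}x_{i}}\omega^{x_{i}-(x_{i}\oplus y_{i})},
\]
and bounding each factor individually using the table in Proposition \ref{prop:table}.

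First I would note that the constant sign $(-1)^{p_{y,0}}$ has modulus $1$, so it contributes nothing. Then I would split the product according to whether $y_i = 0$ or $y_i = 1$. Inspecting the table, for each index $i$ with $y_i = 0$ the corresponding factor is either $1$ or $0$, and hence has magnitude at most $1$. For each index $i$ with $y_i = 1$ the factor is either $\gamma$ or $-\sqrt{-1}\cdot \sigma$, each of magnitude at most $\max\{|\sigma|,|\gamma|\}$. Since there are exactly $w(y)$ indices with $y_i = 1$, taking absolute values through the product yields
\[
|c_y(p)| \le \prod_{i : y_i = 0} 1 \cdot \prod_{i : y_i = 1} \max\{|\sigma|,|\gamma|\} = \max\{|\sigma|,|\gamma|\}^{w(y)}.
\]
Finally I would remark that by the reduction to $\phi \in [0,\pi/2]$ already justified in this section, $\sigma$ and $\gamma$ are both nonnegative, so the absolute values can be dropped, giving the stated bound.

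There is really no obstacle here: the claim is essentially a bookkeeping consequence of Proposition \ref{prop:table} together with the fact that $p_y$ is linear (which is what justified writing $c_y(p)$ as a product of independent one-variable expectations in the first place). The only thing to be mindful of is the trivial case where some factor vanishes (which only tightens the inequality), and the sign/phase factors, which are absorbed by passing to absolute values. I would therefore keep the write-up to a few lines.
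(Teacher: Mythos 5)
Your proof is correct and follows exactly the route the paper intends: the paper derives the claim by "looking at the table" in Proposition \ref{prop:table} applied to the product form of $c_y(p)$ (valid since $p_y$ is linear for quadratic $p$), which is precisely your factor-by-factor bound, with the $y_i=0$ factors of magnitude at most $1$ and the $y_i=1$ factors of magnitude at most $\max\{\sigma,\gamma\}$. Your remark that the restriction to $\phi\in[0,\pi/2]$ makes $\sigma,\gamma\ge 0$ (so the stated bound without absolute values is meaningful) is the right reading of "any $\phi$" in the claim.
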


We conclude this section by giving a quick illustration of how this
framework can be used to compute the maximum correlation for $\phi\in[0,\pi/4]$.
Note that Theorem \ref{thm:main-general} proves a stronger result,
showing that non-symmetric polynomials have correlation a constant-factor
smaller than optimal. For such $\phi$ we are going to show that the
constant polynomial, which is symmetric, maximizes $C_{\phi}$. By
Example \ref{exa:cor},
\[
C_{\phi}^{2}(0)=2^{-n}\left(1+\gamma\right)^{n}.
\]
We show this is an upper bound for any quadratic polynomial $p$.
We have
\[
C_{\phi}^{2}(p)\le\E_{y}|c_{y}(p)|,
\]
where $c_{y}$ is as in Definition \ref{def:The-contribution-of}.
By Claim \ref{claim:trivial-maximum}, since $\gamma>\sigma$, we
have 
\[
|c_{y}(p)|\le\gamma^{w(y)}.
\]

Hence,
\[
C_{\phi}^{2}(p)\le2^{-n}\sum_{i=0}^{n}\binom{n}{i}\gamma^{i}=2^{-n}(1+\gamma)^{n},
\]
by the binomial theorem.

\section{Correlation of symmetric polynomials \label{sec:symmetry}}

We use the information from Section \ref{sec:Derivatives} to compute
the maximal correlation of symmetric quadratic polynomials, and note
an important \textquotedblleft no-cancellation\textquotedblright{}
property which will guide the rest of the proof.

We first apply Proposition \ref{prop:table} to determine the \emph{contributions
of symmetric polynomials.} The derivatives of $e^{1}$ are simply
the constant term $e_{y,0}^{1}=\sum_{i}y_{i}$. We now analyze the
derivatives of $e^{2}$. The coefficient $e_{y,i}^{2}$ for $i\ge1$
equals to $\sum_{j\ne i}y_{j}$ and the constant term $e_{y,0}^{2}$
equals $\sum_{i<j}y_{i}y_{j}$. Combining this information with the
above we can characterize the $y$-contributions of symmetric polynomials.
\begin{lem}[Contributions of symmetric polynomials]
\label{lem:=00005BContributions-of-symmetric} For any $\phi\in[0,\pi/2]$
and any $y\in\zo^{n}$ we have:
\begin{enumerate}
\item If $w(y)$ is even then $c_{y}(s)=\sigma{}^{w(y)}$ for either $s=e^{2}+e^{1}$
or $s=e^{2}$.
\item If $w(y)$ is odd and $w(y)<n$ then $c_{y}(s)=0$ for either $s=e^{2}+e^{1}$
or $s=e^{2}$.
\item If $w(y)=n$ and $n\equiv1\bmod4$ then $c_{y}(s)=+\gamma^{n}$ for
$s=e^{2}$ and $c_{y}(s)=-\gamma{}^{n}$ for $s=e^{1}+e^{2}$.
\item If $w(y)=n$ and $n\equiv3\bmod4$ then $c_{y}(s)=-\gamma^{n}$ for
$s=e^{2}$ and $c_{y}(s)=+\gamma$ for $s=e^{1}+e^{2}$.
\end{enumerate}
\end{lem}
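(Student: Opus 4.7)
The plan is to directly compute $c_y(s)$ using the product formula from Proposition \ref{prop:table}, after explicitly describing the derivative $s_y(x)$. Expanding over $\F_2$ gives
\[
e^2_y(x) \;=\; e^2(x+y)+e^2(x) \;=\; \sum_{i=1}^n x_i\Bigl(\sum_{j\neq i}y_j\Bigr) \;+\; e^2(y),
\]
so for $s=e^2$ the coefficient of $x_i$ in $s_y$ is $p_{y,i}=w(y)-y_i\bmod 2$ and the constant is $p_{y,0}=\binom{w(y)}{2}\bmod 2$. Since $e^1_y(x)=\sum_i y_i=w(y)$ is a constant, the case $s=e^2+e^1$ has exactly the same coefficients $p_{y,i}$ and only a shifted constant term $p_{y,0}=\binom{w(y)+1}{2}\bmod 2$. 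So the per-coordinate factors in Proposition \ref{prop:table} are identical for the two polynomials, and only an overall sign may differ.

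Let $w:=w(y)$. Next I would case on the parity of $w$. If $w$ is odd and $w<n$, pick any $i$ with $y_i=0$; then $p_{y,i}=w\bmod 2=1$, which is the $(1,0)$ row of Proposition \ref{prop:table}, contributing the factor $0$ and killing the whole product. This is part 2. If $w$ is even, every $i$ with $y_i=0$ gives $p_{y,i}=0$ and factor $1$, while every $i$ with $y_i=1$ gives $p_{y,i}=1$ and factor $-\sqrt{-1}\,\sigma$; the product of the $x_i$-factors is therefore $(-\sqrt{-1}\,\sigma)^w=(\sqrt{-1})^w\sigma^w$ (using $(-1)^w=1$).

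The only thing left is a sign bookkeeping against $(-1)^{p_{y,0}}$. For $w$ even, write $w=2m$; then $\binom{w}{2}=m(2m-1)\equiv m\pmod 2$, and similarly $\binom{w+1}{2}\equiv m\pmod 2$ so both polynomials pick up the same sign $(-1)^m$. Combined with $(\sqrt{-1})^{2m}=(-1)^m$ we get $(-1)^m\cdot(-1)^m\sigma^w=\sigma^w$, yielding part 1. For $w=n$ odd (part 3 or 4), every $i$ has $y_i=1$ and $p_{y,i}=w-1\equiv0$, so each coordinate contributes the factor $\gamma$ and the product of $x_i$-factors is $\gamma^n$; the only remaining task is to evaluate $(-1)^{\binom{n}{2}}$ and $(-1)^{\binom{n+1}{2}}$ for $n\equiv 1,3\pmod 4$, which gives $+\gamma^n$ and $-\gamma^n$ (in the appropriate matching) exactly as stated.

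The only mildly delicate step is the mod-$4$ sign reconciliation in the even-$w$ case (matching $(\sqrt{-1})^w$ against $(-1)^{\binom{w}{2}}$), but this is a one-line parity check rather than a real obstacle; everything else is direct substitution into the table of Proposition \ref{prop:table}.
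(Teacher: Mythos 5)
Your proposal is correct and follows essentially the same route as the paper's proof: compute the derivative coefficients $s_{y,i}=\sum_{j\neq i}y_j$ and the constant term, plug into the table of Proposition \ref{prop:table}, split on the parity of $w(y)$, and reconcile the signs modulo $4$. The only differences are cosmetic (writing the $e^{2}+e^{1}$ constant as $\binom{w(y)+1}{2}$ and the even-weight sign as $(-1)^{m}$ with $w(y)=2m$), so there is nothing to add.
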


\begin{proof}
Refer to Proposition \ref{prop:table}. 

If $w(y)$ is even, the expectations over $x_{i}$ with $y_{i}=0$
contribute $1$ since the corresponding coefficient $s_{y,i}$ (the
coefficient of $x_{i}$ in the derivative polynomial $s_{y}$) is
$0$. This corresponds to the first row of Proposition \ref{prop:table}.
The other expectations contribute $(-\sqrt{-1})\sigma$. This corresponds
to the last row of Proposition \ref{prop:table}. In addition, we
have the constant term. For $e^{2}$ this term is $(-1)^{\binom{w(y)}{2}}=(-1)^{w(y)^{2}/2-w(y)/2}=(-1)^{-w(y)/2}$
using that $w(y)$ is even. For $e^{2}+e^{1}$ the constant term is
$(-1)^{\binom{w(y)}{2}+w(y)}$ which again equals $(-1)^{-w(y)/2}$
because $w(y)$ is even. Hence the $y$-contribution equals
\[
(-1)^{-w(y)/2}\cdot((-\sqrt{-1})\sigma)^{w(y)}=\sigma{}^{w(y)}
\]
where the last equality follows again because $w(y)$ is even.

If $w(y)$ is odd and less than $n$ then some $y_{i}$ is zero. The
corresponding $s_{y,i}$ equals $w(y)$, which is odd. So the contribution
is zero, by the second row of Proposition \ref{prop:table}.

Finally, consider $w(y)=n$ when $n$ is odd. Note that $s_{y,i}=n-1$
which is even. By the third row of Proposition \ref{prop:table},
the expectation of $x$ is $\gamma{}^{n}$ times the constant term.
For $s=e^{2}$ the constant term is $(-1)^{\binom{n}{2}}=(-1)^{n(n-1)/2}$
which is $1$ if $n\equiv1\mod4$ and $-1$ otherwise. For $s=e^{2}+e^{1}$
the constant term is $(-1)^{\binom{n}{2}+n}=(-1)^{n(n-1)/2+1}$ which
is $-1$ if $n\equiv1\mod4$ and $1$ otherwise.
\end{proof}
Lemma \ref{lem:=00005BContributions-of-symmetric} yields an expression
for the maximum $C_{\phi}(s)$ attained by symmetric quadratic polynomials
$s$. It is best to express this correlation using the quantity $v_{\phi}$
that we redefine in a way that is more convenient for the main proof.
\begin{defn}[$E,O,v$]
 Let $E\subseteq\zo^{n}$ be the set of $n$-bit strings of even
Hamming weight, and let $O$ be the set of strings of odd weight.
Define
\[
v_{\phi}:=2^{-n}\sum_{y\in E}\sigma{}^{w(y)}.
\]
The equivalence between this definition and the one in the introduction
is given by the following claim, which we will use often.
\end{defn}

\begin{claim}[Odd-even sum]
 \label{claim:=00005BOdd-even-sum-claim=00005D}For any number $d$
we have:

$\sum_{y:y\in E}d^{w(y)}=\sum_{y}d^{w(y)}(1+(-1)^{w(y)})/2=\frac{(1+d)^{n}+(1-d)^{n}}{2}$,

$\sum_{y:y\in O}d^{w(y)}=\sum_{y}d^{w(y)}(1-(-1)^{w(y)})/2=\frac{(1+d)^{n}-(1-d)^{n}}{2}.$
\end{claim}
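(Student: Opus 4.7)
The claim is a routine calculation, but it is the computational workhorse that converts the weight-indexed sums produced by the contribution analysis into the closed-form quantities that will appear in the main theorem. The plan is to verify each of the two displayed equalities in turn, and both reduce to elementary manipulations of the binomial expansion.

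For the first equality in each line, I would use the standard parity indicator: the quantity $(1+(-1)^{w(y)})/2$ equals $1$ when $w(y)$ is even and $0$ when $w(y)$ is odd, so multiplying $d^{w(y)}$ by it and summing over all $y\in\zo^n$ picks out exactly the contributions from $y\in E$. The analogous statement with $(1-(-1)^{w(y)})/2$ picks out $y\in O$. This step is purely a rewriting, with no content beyond unpacking the indicator.

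For the second equality in each line, I would split the sum into two pieces and evaluate them separately via the binomial theorem applied coordinatewise. Writing $\sum_{y\in\zo^n}d^{w(y)}=\prod_{i=1}^{n}\sum_{y_i\in\zo}d^{y_i}=(1+d)^n$, and similarly $\sum_{y\in\zo^n}d^{w(y)}(-1)^{w(y)}=\sum_{y\in\zo^n}(-d)^{w(y)}=(1-d)^n$, the two identities follow by averaging and differencing these two expressions respectively. Since this is a direct identity and involves no inequality or asymptotic estimate, there is no real obstacle; the only care needed is tracking signs, which is handled correctly by the substitution $d\mapsto -d$ in the second factored sum.
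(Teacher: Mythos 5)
Your proposal is correct and matches the paper's argument: the first equality in each line is just the parity indicator $(1\pm(-1)^{w(y)})/2$, and the second follows from the binomial theorem applied to $\sum_y d^{w(y)}=(1+d)^n$ and $\sum_y(-d)^{w(y)}=(1-d)^n$, exactly as the paper (tersely) does. No gaps.
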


\begin{proof}
In each line, the second equality follows from the binomial theorem.
\end{proof}
For example, $v_{2\pi/3}=\Theta((1+\sqrt{3}/2)/2)^{n}$, where $(1+\sqrt{3}/2)/2=0.933\ldots$.
We now give the maximal correlation of a symmetric quadratic polynomial.
\begin{cor}
\label{cor:correlation-of-symmetric} Fix $\phi\in[\pi/2,\pi/4)$
and let $C_{\phi}^{*}$ be the maximum $C_{\phi}$ attained by a symmetric
quadratic polynomial on $n$ bits for large enough $n$. We have:

$C_{\phi}^{*}=\sqrt{v_{\phi}}$ if $n$ is even. This is attained
by both $e^{2}$ and $e^{1}+e^{2}$.

$C_{\phi}^{*}=\sqrt{v_{\phi}+1/4^{n}}$ if $n$ is odd. This is attained
by $e^{2}$ if $n\equiv1\mod4$ and by $e^{1}+e^{2}$ if $n\equiv3\mod4$.
\end{cor}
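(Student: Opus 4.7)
The plan is to read off both correlations directly from the derivative framework of Section \ref{sec:Derivatives}, using Lemma \ref{lem:=00005BContributions-of-symmetric} to sum the $y$-contributions and then compare against the two remaining symmetric quadratic candidates, namely $0$ and $e^1$. Throughout, the starting identity is $C_\phi^2(s) = \E_y c_y(s) = 2^{-n}\sum_y c_y(s)$, so computing $C_\phi^*$ reduces to summing the contributions listed in Lemma \ref{lem:=00005BContributions-of-symmetric}.

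First I would treat $s=e^2$ and $s=e^2+e^1$. By Lemma \ref{lem:=00005BContributions-of-symmetric} the contributions vanish for every odd-weight $y$ with $w(y)<n$, and they equal $\sigma^{w(y)}$ for every even-weight $y$. If $n$ is even, the unique weight-$n$ direction is itself even-weight and also contributes $\sigma^{n}$, so the full sum packs into the definition of $v_\phi$ and yields $C_\phi^2(s)=v_\phi$ for both polynomials. If $n$ is odd, the even-weight directions still sum to $v_\phi$, and the weight-$n$ direction contributes $\pm\gamma^n$ with a sign determined by Lemma \ref{lem:=00005BContributions-of-symmetric}(3)--(4); dividing by $2^n$ gives the extra $\pm(\gamma/2)^n$ term, and selecting the polynomial that makes this sign positive (namely $e^2$ when $n\equiv 1\bmod 4$ and $e^1+e^2$ when $n\equiv 3\bmod 4$) yields the claimed expression.

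Next I would check that the remaining symmetric quadratic polynomials $s=0$ and $s=e^1$ cannot beat the above. Both are easy direct computations using Proposition \ref{prop:table}: $s_y\equiv 0$ for $s=0$ gives $c_y(0)=\gamma^{w(y)}$, and $s_y\equiv w(y)\bmod 2$ for $s=e^1$ gives $c_y(e^1)=(-\gamma)^{w(y)}$. Applying the binomial theorem (or Claim \ref{claim:=00005BOdd-even-sum-claim=00005D}) yields $C_\phi^2(0)=((1+\gamma)/2)^n$ and $C_\phi^2(e^1)=((1-\gamma)/2)^n$.

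It remains to compare. By Claim \ref{claim:=00005BOdd-even-sum-claim=00005D}, $v_\phi=((1+\sigma)^n+(1-\sigma)^n)/2^{n+1}\ge ((1+\sigma)/2)^n/2$. For $\phi\in(\pi/4,\pi/2]$ we have $\sigma>\gamma\ge 0$, so $((1+\sigma)/(1+\gamma))^n\to\infty$, and thus $v_\phi$ strictly dominates both $C_\phi^2(0)$ and $C_\phi^2(e^1)$ for all sufficiently large $n$; the tiny correction $\pm(\gamma/2)^n$ in the odd-$n$ case does not affect this comparison. This shows the maximum over all symmetric quadratic polynomials is attained at $e^2$ or $e^1+e^2$ as claimed. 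There is no serious obstacle here: the bulk of the work is packaged into Lemma \ref{lem:=00005BContributions-of-symmetric}, and the only care needed is bookkeeping the two cases $n$ even vs.\ $n$ odd and verifying the asymptotic domination of $v_\phi$ over the linear/constant cases, which requires only the strict inequality $\sigma>\gamma$ on the relevant range of $\phi$.
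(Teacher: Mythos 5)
Your proof is correct and follows essentially the same route as the paper: read off the exact values for $e^{2}$ and $e^{2}+e^{1}$ by summing the contributions in Lemma \ref{lem:=00005BContributions-of-symmetric}, and eliminate $0$ and $e^{1}$ by comparing $\left(\frac{1\pm\gamma}{2}\right)^{n}$ with $v_{\phi}\ge\frac{1}{2}\left(\frac{1+\sigma}{2}\right)^{n}$ using $\sigma>\gamma$, exactly as the paper does (it merely cites Example \ref{exa:cor} for the values of $C_{\phi}(0),C_{\phi}(e^{1})$ rather than recomputing them through the derivative framework, a cosmetic difference). One minor remark: your odd-$n$ value $\sqrt{v_{\phi}+(\gamma/2)^{n}}$ matches Theorem \ref{thm:main-general} items 1(b)--(c), whereas the corollary's literal ``$1/4^{n}$'' only coincides with this when $\gamma=1/2$; that discrepancy lies in the paper's statement, not in your argument.
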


\begin{proof}
By Example \ref{exa:cor}, $C_{\phi}(e^{1})<C_{\phi}(0)=\left(\frac{1+\gamma}{2}\right)^{n/2}$.
By the definition of $v_{\phi}$, $\sqrt{v_{\phi}}\ge\Omega\left(\left(\frac{1+\sigma}{2}\right)^{n/2}\right)$
which is greater for $n$ large enough since $\sigma>\gamma$ when
$\phi\in[\pi/2,\pi/4)$. The proof now follows from Lemma \ref{lem:=00005BContributions-of-symmetric}.
\end{proof}

\subparagraph{(No) cancellations}

Note an interesting fact holds for the symmetric polynomial that maximizes
$C_{\phi}$: \emph{the $y$-contributions are always real and non-negative},
for any $y$. This is not true in general. For a simple example, take
$p=e^{2},n=3\bmod4,$ and $w(y)=n.$ Then $c_{y}(p)$ is negative.
This leads to cancellations in the correlation. However, for the symmetric
polynomial that maximizes correlation, the inner expectation is always
non-negative and there are \emph{no cancellations}. 

This fact shows that for the symmetric polynomials $p$ that maximize
correlation, the correlation square $C_{\phi}^{2}(p)$ can be equivalently
written as
\[
\E_{y}|c_{y}(p)|;
\]
that is, we can take absolute values of the contributions ``for free''.
Note that by the triangle inequality, for \emph{any }polynomial $p$
the above expression is an \emph{upper bound} on the correlation.
We used this when showing the constant polynomial maximizes $C_{\phi}$
for $\phi\in[0,\pi/4].$ For the symmetric polynomials that maximize
correlation, it turns out that this bound can be attained.

In the proof of Theorem \ref{thm:main-general} we shall mostly be
working with this quantity, which does not depend on the linear part
of $p$. This is because the derivative of a linear polynomial is
a constant depending only on $y$, which disappears when taking absolute
values. Hence we can assume that $p$ does not contain linear terms.

\section{Proof of Theorem \ref{thm:main-general} \label{sec:proof-main}}

The next two results are needed to prove the first, main item of Theorem
\ref{thm:main-general}. First we deal with polynomials that are missing
at least one degree two monomial.
\begin{thm}
\label{thm:c(p)-upper-bound} Let $\phi\in(\pi/4,\pi/2]$ and $p$
be a quadratic polynomial that is not equal to $e^{2}+\ell$ for some
linear polynomial $\ell$. Then $\E_{y}\left|c_{y}(p)\right|\le(1-\Omega(\sigma-\gamma))v_{\phi}$.
\end{thm}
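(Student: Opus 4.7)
The plan is to apply the ``slowly restricting the direction'' strategy sketched in Section 1, starting with a missing quadratic monomial to generate the buffer $\Omega(\sigma-\gamma)$. Since the linear part of $p$ contributes only a factor $(-1)^{p_{y,0}}$ of absolute value one in $c_y(p)$, I may assume $p$ has no linear terms. Then $p\ne e^2+\ell$ means some quadratic monomial is missing from $p$; after renaming variables I assume $x_1x_2\notin p$. The target inequality will be established by partitioning the directions $y$ according to prefixes $(y_1,y_2,\dots,y_j)$ and comparing each block of $\sum|c_y(p)|$ to the corresponding block of $\sum c_y(s)=\sum_{y\in E}\sigma^{w(y)}$ for $s=e^2$, extracting a multiplicative buffer $(1-\Omega(\sigma-\gamma))$ in the very first block.

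First I would carry out the base case $y_1=0$. Using Claim \ref{claim:trivial-maximum} together with the handshaking improvement (Claim \ref{claim:contribution-is-even}), which saves a factor $\gamma/\sigma$ for every $y$ of odd Hamming weight, I would write
\[
\sum_{y:y_1=0}|c_y(p)|\ \le\ \sum_{y:y_1=0,\,y\in E}\sigma^{w(y)}\;+\;\sum_{y:y_1=0,\,y\in O}\sigma^{w(y)-1}\gamma .
\]
Because $x_1x_2\notin p$, conditioning further on $y_2=1$ forces $x_2$ to appear isolated in $p_y$, so the second-row case of Proposition \ref{prop:table} kills those contributions; this lets me discard the directions with $y_1=0,y_2=1$ on the left-hand side and compare against $c_y(s)$, which contributes $\sigma^{w(y)}$ on $y_1=0,y_2=1,y\in E$. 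Replacing a term of size $\sigma^{w(y)-1}\gamma$ on the left by a term of size $\sigma^{w(y)}$ on the right saves a factor $(1-\gamma/\sigma)=\Omega(\sigma-\gamma)$ uniformly (since $\phi\in(\pi/4,\pi/2]$, $\sigma$ is bounded below). This produces the buffer claimed in the statement.

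Next I would iterate: for $y$ with prefix $1^{j-1}0$, I restrict $y_j=0$, and use structural lemmas (the analogues of Lemma \ref{lem:odd-even-general} and Lemma \ref{lem:gap-general} invoked in the sketch) to show $\sum_{y:\,\mathrm{prefix}\,1^{j-1}0}|c_y(p)|\le\sum_{y:\,\mathrm{prefix}\,1^{j-1}0}c_y(s)$, provided that $x_j$ still has enough ``structural slack'' in the restricted polynomial. The key point is that every time I can execute such a step, I match or beat the symmetric contribution, so the buffer extracted at step one is not eroded. When the slow restriction eventually terminates (either because no suitable coordinate is left, or because the residual polynomial has become essentially low-degree), I invoke Lemma \ref{lemma:done-if-most-open} or Lemma \ref{lemma:low-degree-loses} to control the remaining high-prefix sum $\sum_{y:y_1=\dots=y_k=1}|c_y(p)|$ by the trivial bound $\sum \sigma^{w(y)}$ on a geometrically small set; this loss is $O(\sigma^{k})$ for some growing $k$, hence negligible compared to $v_\phi\cdot\Omega(\sigma-\gamma)$.

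The main obstacle I expect is the bookkeeping in the middle stage: certifying that whenever the structural hypotheses needed to execute a restriction step fail, the polynomial has already accumulated enough missing edges that the ``terminal'' lemmas apply with a loss strictly smaller than the buffer gained at step one. Concretely, the argument must guarantee that the set of directions on which we fall back to the trivial bound $\sigma^{w(y)}$ instead of the handshaking-improved bound is small enough that the total slackness introduced there, $O(\sigma-\gamma)\cdot v_\phi$ in size, dominates. This is where the quantitative tightness of the theorem sits, and I expect it to require a careful case analysis based on the graph $G_p$ of quadratic monomials of $p$: high-degree vertices let restriction continue, while low-degree vertices feed directly into the terminal lemma.
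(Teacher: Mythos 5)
There is a genuine gap, and it sits exactly where you place the buffer. The kill mechanism you invoke (the second row of Proposition \ref{prop:table}) fires when some variable $x_i$ has $y_i=0$ but \emph{odd} coefficient in the derivative $p_y$, i.e.\ when $x_i$ is joined by monomials \emph{present} in $p$ to an odd number of 1-variables. A missing monomial $x_1x_2\notin p$ produces nothing of the sort: with $y_1=0,y_2=1$ the bit $y_2$ simply does not enter the coefficient of $x_1$, and $x_2$ itself is a 1-variable, so the $(p_{y,i},y_i)=(1,0)$ row never applies to it; the contributions on $\{y_1=0,y_2=1\}$ need not vanish. (You have inverted the introduction's sketch, which assumes $p=x_1x_2+q$ with the edge present and $x_1$'s only edge.) Without that kill your first comparison cannot close: after the handshaking bound you must absorb $\gamma\sum_{y_1=0,\,y\in O}\sigma^{w(y)-1}$ into the symmetric side, but $c_y(s)=0$ on odd-weight $y$, so the absorption has to come from even-weight directions that the left-hand side has genuinely given up --- which is precisely what the kill provides. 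In the paper this is Lemma \ref{lem:buffer-general}, whose hypothesis is not a missing edge but a vertex of degree between $t$ and $n-2$ in the graph of $p$: degree at least $t$ is what makes the aggregate inequality $\frac{\sigma+\gamma}{\sigma-\gamma}\le\left(\frac{1+\sigma}{1-\sigma}\right)^{|T|}$ available (your ``uniform factor $1-\gamma/\sigma$'' pairing fails term-by-term for $\phi$ near $\pi/4$), and degree at most $n-2$ (so $|U|>0$) is what yields the strict buffer $\Omega(\sigma-\gamma)v_\phi$.

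Relatedly, the hypothesis $p\ne e^2+\ell$ is not used in the paper to locate a missing edge for the first restriction; it is used in the complementary regime, which your proposal never addresses. The actual case split is on the degree sequence: if some vertex has degree in $[t,n-2]$, one runs the buffer-plus-iteration argument (Lemma \ref{lem:buffer-general}, then Lemmas \ref{lem:odd-even-general} and \ref{lem:gap-general}, finishing with Lemma \ref{lemma:low-degree-loses} or \ref{lemma:done-if-most-open}); otherwise every vertex has degree $<t$ or exactly $n-1$, and $p\ne e^2+\ell$ forces fewer than $t$ vertices of degree $n-1$, so Lemma \ref{lemma:low-degree-loses-all} (the independent-set argument) gives $(1-(\sigma-\gamma))v_\phi$ outright, with no buffer and no iteration. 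For a polynomial such as $p=x_3x_4$ (which satisfies your normalization $x_1x_2\notin p$) no restriction step of the kind you describe can even begin, so even with the base case repaired your argument would be incomplete without this second branch.
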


Next we deal with non-symmetric polynomials that possess all degree
two monomials. Note we use the quantity $\E_{y}c_{y}(p)$ instead.
\begin{lem}
\label{lem:e2-upper-bound}Let $\phi\in(\pi/4,\pi/2]$ and $p$ be
a polynomial that is equal to $e^{2}+\ell$ where $\ell$ is a linear
polynomial not equal to a constant or $e^{1}$. Then $\E_{y}c_{y}(p)\le(1-\Omega(1))v_{\phi}$.
\end{lem}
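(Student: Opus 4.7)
The plan is to exploit that adding a linear polynomial to $e^{2}$ changes the directional derivative only in its constant term. Since $c_{y}(p)$ is invariant under adding a constant to $p$, I may assume $\ell = \sum_{i \in T} x_{i}$, and the hypothesis forces $\emptyset \subsetneq T \subsetneq [n]$; set $t := |T|$. Directly, $\ell_{y} = \sum_{i \in T} y_{i} \bmod 2$ is a constant in $x$, so $(e^{2}+\ell)_{y}$ and $e^{2}_{y}$ agree on every linear coefficient. The product formula from Section~\ref{sec:Derivatives} then yields the character identity
\[
c_{y}(e^{2} + \ell) \;=\; (-1)^{\sum_{i \in T} y_{i}}\, c_{y}(e^{2}).
\]

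Next I would substitute the explicit form of $c_{y}(e^{2})$ from Lemma~\ref{lem:=00005BContributions-of-symmetric}: namely $\sigma^{w(y)}$ for $y \in E$, zero for $y \in O$ with $w(y) < n$, and $\pm \gamma^{n}$ at the unique point $y = 1^{n}$ when $n$ is odd. The odd-weight directions vanish, and splitting each $y$ into its restrictions to $T$ and $T^{c}$ (with weights $j$ and $k$, respectively) makes the remaining sum factorize; Claim~\ref{claim:=00005BOdd-even-sum-claim=00005D} handles the residual $j+k$ parity constraint via the binomial theorem, giving
\[
\E_{y} c_{y}(e^{2} + \ell) \;=\; \frac{(1-\sigma)^{t}(1+\sigma)^{n-t} + (1+\sigma)^{t}(1-\sigma)^{n-t}}{2 \cdot 2^{n}} \;\pm\; O\!\left(\frac{\gamma^{n}}{2^{n}}\right),
\]
which is to be compared against $v_{\phi} = \frac{(1+\sigma)^{n} + (1-\sigma)^{n}}{2 \cdot 2^{n}}$.

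The last step is the analytic comparison for $t \in \{1,\ldots,n-1\}$. Writing $A(t)$ for the numerator above, $A$ is a sum of two exponentials in $t$ hence convex, and symmetric under $t \mapsto n-t$, so its maximum on $\{1,\ldots,n-1\}$ is attained at $t = 1$ (equivalently $t = n-1$). Using $(1+\sigma)^{n} + (1-\sigma)^{n} \geq (1+\sigma)^{n}$ gives
\[
\frac{A(1)}{(1+\sigma)^{n} + (1-\sigma)^{n}} \;\leq\; \frac{1-\sigma}{1+\sigma} + \left(\frac{1-\sigma}{1+\sigma}\right)^{n-1}.
\]
Since $\phi \in (\pi/4,\pi/2]$ we have $\sigma > 1/\sqrt{2}$, so $(1-\sigma)/(1+\sigma) < 3 - 2\sqrt{2} < 0.18$, a constant strictly less than $1$. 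The $O(\gamma^{n}/2^{n})$ correction is absorbed because $v_{\phi} = \Theta(((1+\sigma)/2)^{n})$ and $1+\sigma > \gamma$, making the correction exponentially smaller than $v_{\phi}$. The only delicate step is the convexity-plus-symmetry argument pinning the worst case at $t = 1$, which is immediate from $A$ being a sum of exponentials in $t$; everything else is algebraic bookkeeping. So I do not anticipate a substantive obstacle: the main insight is the character identity in the first paragraph, after which the rest is a calculation that neatly factorizes and is controlled by $\sigma > \gamma$.
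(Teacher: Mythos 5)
Your proof is correct and takes essentially the same route as the paper: both proofs rest on the observation that the linear part only multiplies each contribution by $(-1)^{w(y^{T})}$, followed by splitting $y$ into its $T$ and $T^{c}$ parts and applying the odd-even sum claim, the only difference being bookkeeping at the end (the paper lower-bounds the deficit $v_{\phi}-\E_{y}c_{y}(p)=\tfrac{1}{2\cdot2^{n}}\bigl((1+\sigma)^{|T|}-(1-\sigma)^{|T|}\bigr)\bigl((1+\sigma)^{|U|}-(1-\sigma)^{|U|}\bigr)$ using $2(1-\sigma)<1+\sigma$, while you upper-bound the ratio via convexity and symmetry at $t=1$, which is algebraically equivalent). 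A small plus of your write-up is that you explicitly track the $\pm\gamma^{n}$ term from $y=1^{n}$ when $n$ is odd, which the paper's statement that $c_{y}(p)=0$ for odd $w(y)$ silently absorbs.
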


Assuming these are true, we prove the first item of Theorem \ref{thm:main-general}.
\begin{proof}[Proof of Theorem \ref{thm:main-general} Item 1]
 Follows from Corollary \ref{cor:correlation-of-symmetric}, Theorem
\ref{thm:c(p)-upper-bound}, and Lemma \ref{lem:e2-upper-bound}.
\end{proof}
We next give similar results that are needed to prove the second item
of Theorem \ref{thm:main-general}.
\begin{lem}
\label{thm:c(p)-upper-bound-easy} Let $\phi\in[0,\pi/4]$ and $p$
be a quadratic polynomial that is not linear. Then $\E_{y}\left|c_{y}(p)\right|\le(1-\Omega(1))\left(\frac{1+\gamma}{2}\right)^{n}$.
\end{lem}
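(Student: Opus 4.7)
The plan is to refine the trivial bound from the end of Section \ref{sec:Derivatives} by extracting a constant-factor saving from any single quadratic monomial of $p$. As noted at the end of Section \ref{sec:symmetry}, adding linear terms to $p$ only multiplies $c_y(p)$ by a sign, so we may assume $p$ has no linear terms. Since $p$ is quadratic but not linear, after relabeling we may assume $x_1x_2$ is a monomial of $p$ and write $p = x_1 x_2 + x_1 A(x') + x_2 B(x') + r(x')$, where $x' = (x_3,\ldots,x_n)$ and $A,B,r$ are independent of $x_1,x_2$. A direct calculation of $p(x) + p(x \oplus y)$ gives $p_{y,1} = y_2 + A(y')$ and $p_{y,2} = y_1 + B(y')$.

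For $\phi \in [0,\pi/4]$ we have $\sigma \le \gamma$, so consulting Proposition \ref{prop:table} shows that each non-vanishing per-coordinate factor has absolute value at most $\gamma$, while the factor vanishes at coordinate $i$ exactly when $y_i = 0$ and $p_{y,i} = 1$. Call $y$ \emph{admissible} if no such vanishing occurs at any coordinate. Then
\[
\E_y|c_y(p)| \;\le\; 2^{-n} \sum_{y\text{ admissible}} \gamma^{w(y)}.
\]
We bound the admissible sum by splitting on $(y_1,y_2) \in \zo^2$ and applying only the admissibility constraints at coordinates $1$ and $2$ (those at $i\ge 3$ can only shrink the sum). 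Writing $T_{ab} := \sum_{y':A(y')=a,\,B(y')=b}\gamma^{w(y')}$ so that $T_{00}+T_{01}+T_{10}+T_{11}=(1+\gamma)^{n-2}$, each of the four cases contributes a simple multiple of a $T_{ab}$, giving
\[
\sum_{y\text{ admissible}} \gamma^{w(y)} \;\le\; (1+\gamma^2)T_{00} + \gamma(1+\gamma)(T_{01}+T_{10}) + \gamma(2+\gamma)T_{11}.
\]

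The final step is a coefficient comparison: for $\gamma \in [1/\sqrt{2},1]$, each of the coefficients $1+\gamma^2$, $\gamma(1+\gamma)$, $\gamma(2+\gamma)$ is at most $(1+\gamma)^2 - 1$, since the tightest of the three gaps from $(1+\gamma)^2$ is $(1+\gamma)^2 - \gamma(2+\gamma) = 1$. Hence
\[
\E_y|c_y(p)| \;\le\; \left(1 - \frac{1}{(1+\gamma)^2}\right) \left(\frac{1+\gamma}{2}\right)^n \;\le\; \frac{3}{4}\left(\frac{1+\gamma}{2}\right)^n,
\]
using $(1+\gamma)^2 \le 4$, which yields the claimed $(1-\Omega(1))$ factor. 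The argument has no real obstacle; the only subtlety is the bookkeeping in the four-case split, which crucially uses that $p$ contains at least one quadratic monomial.
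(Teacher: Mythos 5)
Your proof is correct and rests on the same machinery as the paper's: the per-coordinate factorization of $c_y(p)$ from Proposition \ref{prop:table}, the bound $|c_y(p)|\le\gamma^{w(y)}$ for $\phi\in[0,\pi/4]$, and the vanishing of the contribution when some coordinate has $y_i=0$ and $p_{y,i}=1$, exploited at a variable involved in a quadratic monomial to kill a constant fraction of the sum $\sum_y\gamma^{w(y)}$. The only difference is bookkeeping: the paper opens the single bit $y_1$ for a variable of positive degree and sums over parity classes of $y$ on its neighborhood, while you open the two bits $(y_1,y_2)$ of one monomial and classify $y'$ by $(A(y'),B(y'))$; both yield the same kind of constant-factor saving.
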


\begin{lem}
\label{lem:e2-upper-bound-easy}Let $\phi\in[0,\pi/4]$ and $p$ be
a linear polynomial that is not equal to the constant polynomial.
Then $\E_{y}c_{y}(p)\le(1-\Omega(1))\left(\frac{1+\gamma}{2}\right)^{n}$.
\end{lem}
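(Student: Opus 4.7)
The plan is to exploit the fact that when $p$ is linear, its derivative $p_y$ is a \emph{constant} (depending only on $y$), so $c_y(p)$ factorizes into a product over coordinates. This reduces the entire problem to a direct one-variable calculation plus a convexity/product bound.

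First I would write $p(x) = \sum_i p_i x_i + c$ and note that $p_y(x) = p(x) + p(x \oplus y) = \sum_i p_i y_i \bmod 2$, a constant in $x$. Hence, using that the inner expectation defining $c_y(p)$ factorizes over coordinates, and applying Proposition \ref{prop:table} (with $p_{y,i}=0$ in every row because the derivative has no $x_i$ part), I get
\[
c_y(p) = (-1)^{\sum_i p_i y_i}\, \gamma^{w(y)}.
\]
Averaging over $y$ factors over coordinates as well:
\[
\E_{y} c_y(p) = 2^{-n}\prod_{i=1}^n \bigl(1 + (-1)^{p_i}\gamma\bigr)
= \left(\frac{1+\gamma}{2}\right)^{\!n-k}\left(\frac{1-\gamma}{2}\right)^{\!k},
\]
where $k\ge 1$ is the number of nonzero linear coefficients of $p$ (which is $\ge 1$ since $p$ is not the constant polynomial).

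Next I would factor out $\left(\tfrac{1+\gamma}{2}\right)^n$ to get
\[
\E_y c_y(p) = \left(\frac{1+\gamma}{2}\right)^{\!n}\left(\frac{1-\gamma}{1+\gamma}\right)^{\!k}
\le \left(\frac{1+\gamma}{2}\right)^{\!n}\cdot \frac{1-\gamma}{1+\gamma},
\]
using $k\ge 1$ and $0\le \tfrac{1-\gamma}{1+\gamma}\le 1$. To conclude, I use that for $\phi\in[0,\pi/4]$ we have $\gamma=\cos\phi\in[\sqrt{2}/2,1]$, and the function $\gamma\mapsto \tfrac{1-\gamma}{1+\gamma}$ is decreasing, so it is uniformly bounded above by $\tfrac{1-\sqrt{2}/2}{1+\sqrt{2}/2}<1$, which gives the desired $(1-\Omega(1))$ factor.

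There is essentially no obstacle here: once one notices that linearity of $p$ makes the contribution factorize over coordinates, it is a one-line computation. The only thing to be mindful of is that the bound needs to hold \emph{uniformly} for all $\phi\in[0,\pi/4]$, including the boundary $\phi=0$ where $\gamma=1$ makes $\tfrac{1-\gamma}{1+\gamma}=0$ (so the bound is trivially strong there) and the boundary $\phi=\pi/4$ where $\gamma=\sqrt{2}/2$ gives the worst-case constant in $\Omega(1)$.
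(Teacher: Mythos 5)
Your proof is correct and follows essentially the same route as the paper: the paper likewise derives the exact product formula $\left(\frac{1-\gamma}{2}\right)^{|T|}\left(\frac{1+\gamma}{2}\right)^{|U|}$ (yours, with $k=|T|$) and then concludes from $|T|\ge 1$ and $\gamma\ge 1/\sqrt{2}$ exactly as you do. Your derivation of the product via the derivative table rather than by direct evaluation as in Example \ref{exa:cor} is an immaterial difference.
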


\begin{proof}[Proof of Theorem \ref{thm:main-general} Item 2]
 Follows from Lemma \ref{thm:c(p)-upper-bound-easy}, Lemma \ref{lem:e2-upper-bound-easy},
and Example \ref{exa:cor} which says $C_{\phi}^{2}(0)=\left(\frac{1+\gamma}{2}\right)^{n}$.
\end{proof}

\subsection{Proof of Theorem \ref{thm:c(p)-upper-bound} \label{sec:Proof-of-Theorem}}

Our proof strategy is to \emph{slowly restrict the direction }$y,$
to try to connect the corresponding contributions with the target
value $v_{\phi}$.
\begin{defn}
A \emph{restriction} $r$ is an element of $\{0,1,*\}^{n}$. The \emph{weight}
$w(r)$ of $r$ is the number of ones, and $S(r)$ is the number of
stars. We also view $r$ as a function $r:\zo^{S(r)}\to\zo^{n}$ mapping
assignments to stars to $n$-bit strings, and we write $ry$ for $r(y)$.
For a restriction $r$ we call $x_{i}$ a $b\in\{0,1,*\}$ variable
if the $i$th bit of $r$ is $b$.
\end{defn}

We emphasize that $r$ restricts the space of directions $y$, not
$x$. So for example if $x_{i}$ is a $0$ variable then the corresponding
directional bit $y_{i}$ has been restricted to $0$ -- but $x_{i}$
is never restricted. We next introduce restricted versions of the
quantities in Theorem \ref{thm:c(p)-upper-bound}.
\begin{defn}[$c(p,r)$ and $v_{\phi}(r)$]
 Let $r$ be a restriction. For a polynomial $p$ we define 
\[
c(p,r):=\E_{y\in\zo^{S(r)}}|c_{ry}(p)|.
\]
Note that $c(p,r)$ is defined with respect to the angle $\phi$ since
$c_{ry}(p)$ is. We also define 
\[
v_{\phi}(r):=2^{-S(r)}\sum_{y\in\zo^{S(r)}:ry\in E}\sigma{}^{w(ry)},
\]
where we sum over all derivatives $ry$ of even weight. 
\end{defn}

For any $r\in\zo^{n}$ we have $c(p,r)=|c_{r}(p)|$. Also,
\begin{align*}
\E_{y}|c_{y}(p)| & =c(p,*^{n}),\\
v_{\phi} & =v_{\phi}(*^{n}).
\end{align*}
Using the above notation our goal is to show that 
\[
c(p,*^{n})\leq(1-\Omega(\sigma-\gamma))v_{\phi}.
\]

\paragraph{Polynomials as graphs}

We associate to a quadratic polynomial $p$ the \emph{graph }over
the variables where $x_{i}$ and $x_{j}$ are connected iff monomial
$x_{i}x_{j}$ is present in $p$. Note this graph only depends on
the monomials of degree $2$ of $p$. The \emph{degree} of a variable
shall refer to the degree as a node in this graph. We shall also talk
of variables being connected, etc.
\begin{example}
Let $n=3,r=(1*0)\in\{0,1,*\}^{3},p=x_{1}x_{2}+x_{2}x_{3}$. The $*$
variable $x_{2}$ is connected to the $1$ variable $x_{1}$ and to
the $0$ variable $x_{3}$.
\end{example}

We now proceed with the proof of Theorem \ref{thm:c(p)-upper-bound}.
In all upcoming statements, $p$ is an arbitrary quadratic polynomial
on $n$ variables, $\phi\in(\pi/4,\pi/2],$ and we set $n$ and a
parameter $t$ large enough so that both $t$ and $n/t$ are large
enough depending on $\phi$. The minimal $n$ for which our proof
of Theorem \ref{thm:main-general} holds increases as $\phi$ approaches
$\pi/4$ (where $\sigma$ approaches $\gamma).$ 

We next state several lemmas and prove Theorem \ref{thm:c(p)-upper-bound}
assuming them. The first two lemmas show that $c(p,r)\le v_{\phi}(r)$
under various conditions on $p$ and $r$.
\begin{lem}
\label{lem:odd-even-general} Let $r\in\{0,1,*\}^{n}$ be a restriction.
Suppose there exists a 0 variable that is connected to an odd number
of 1 variables. Then $c(p,r)\leq v_{\phi}(r).$
\end{lem}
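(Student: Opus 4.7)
The plan is to exploit the hypothesis by forcing a parity condition on the derivative. Let $x_j$ denote the 0-variable connected to an odd number of 1-variables, and let $T$ be the set of star-neighbors of $j$ in the graph of $p$. Since $p$ is quadratic, the coefficient of $x_j$ in the derivative $p_{ry}$ is $\sum_{i \sim j} (ry)_i$, which splits as $(\text{number of 1-neighbors of } j) + \sum_{i \in T} y_i \bmod 2$; by hypothesis this equals $1 + \sum_{i \in T} y_i \bmod 2$. Since $(ry)_j = 0$, row 2 of Proposition~\ref{prop:table} forces $c_{ry}(p) = 0$ whenever this coefficient equals $1$, i.e.\ whenever $\sum_{i \in T} y_i$ is even. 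In the degenerate case $T = \emptyset$ this always holds and $c(p,r) = 0$, so we may assume $T \ne \emptyset$.

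Combining this with the trivial bound $|c_{ry}(p)| \le \sigma^{w(ry)}$ from Claim~\ref{claim:trivial-maximum} (valid because $\sigma \ge \gamma$ on our range of $\phi$), we obtain
\[
c(p,r) \;\le\; 2^{-S(r)} \sum_{y \in A} \sigma^{w(ry)},
\]
where $A := \{y \in \{0,1\}^{S(r)} : \sum_{i \in T} y_i \text{ is odd}\}$. Since $v_\phi(r) = 2^{-S(r)}\sum_{y \in B} \sigma^{w(ry)}$ with $B := \{y : w(ry) \text{ even}\}$, the lemma reduces to the inequality $\sum_{y \in A} \sigma^{w(ry)} \le \sum_{y \in B} \sigma^{w(ry)}$.

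This remaining inequality is a short binomial computation. Factoring out the common $\sigma^{w(r)}$ and applying Claim~\ref{claim:=00005BOdd-even-sum-claim=00005D} to each side (splitting the $y$-coordinates over $T$ and its complement on the left, and using the even/odd identity on the right) puts both sums into closed form. The comparison then collapses to $(1+\sigma)^{S(r) - |T|}(1-\sigma)^{|T|} \ge (1-\sigma)^{S(r)}$, which is immediate from $1+\sigma \ge 1-\sigma$ together with $|T| \le S(r)$, and this works uniformly regardless of whether $w(r)$ is even or odd (in the even case the right-hand bound picks up an extra $(1-\sigma)^{S(r)}$ term, which only makes things easier).

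The main conceptual step, and really the only one that requires insight, is the first: recognizing that the ``odd number of 1-neighbors'' hypothesis is exactly what is needed to make $p_{ry,j}$ an affine-in-$y_T$ function with constant term $1$, which combined with $(ry)_j = 0$ kills the contribution of exactly half the directions $y$ via row 2 of Proposition~\ref{prop:table}. Once that cancellation is identified, the comparison between the surviving mass (indexed by an affine parity constraint on $y_T$) and the even-weight mass $v_\phi(r)$ is purely computational and goes in the favorable direction simply because $1+\sigma \ge 1-\sigma$.
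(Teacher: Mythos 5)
Your proposal is correct and follows essentially the same route as the paper: identify the 0-variable whose derivative coefficient is $1+\sum_{i\in T}y_i$, kill the contributions with $w(y^T)$ even via row 2 of Proposition \ref{prop:table}, bound the surviving contributions by $\sigma^{w(ry)}$ via Claim \ref{claim:trivial-maximum}, and compare with $v_{\phi}(r)$ using Claim \ref{claim:=00005BOdd-even-sum-claim=00005D}. The only (cosmetic) difference is that you evaluate both sides in closed form and reduce to $(1+\sigma)^{S(r)-|T|}(1-\sigma)^{|T|}\ge(1-\sigma)^{S(r)}$, handling both parities of $w(r)$ uniformly, whereas the paper pairs the sums by the parity of $y^{U}$ in two separate cases.
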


\begin{lem}
\label{lem:gap-general}Let $r\in\{0,1,*\}^{n}$ be a restriction.
Suppose there exists a 0 variable that is connected to an even number
of 1 variables and at least $t$ $*$ variables. Then $c(p,r)\leq v_{\phi}(r).$
\end{lem}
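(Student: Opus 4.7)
The plan combines three ingredients: the vanishing of $c_{ry}(p)$ when $p_{ry}$ actually depends on the distinguished $0$-variable $x_k$, a parity fact forcing this vanishing on exactly half of the directions, and the handshaking improvement of the trivial weight bound on odd-weight derivatives. Let $x_k$ be that $0$-variable, let $T\subseteq\{i:r_i=*\}$ be the set of $*$-variables connected to $x_k$ (so $|T|\geq t$), and write each $y\in\zo^{S(r)}$ as $(y_T,y_{T'})$. Because the $1$-variable neighbors of $x_k$ are even in number and $0$-variable neighbors contribute $0$ to $p_{ry,k}$, the coefficient of $x_k$ in the derivative is $p_{ry,k}\equiv w(y_T)\pmod 2$. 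By row~2 of Proposition~\ref{prop:table}, the factor attached to $x_k$ in $c_{ry}(p)$ vanishes whenever $p_{ry,k}=1$, so the sum defining $c(p,r)$ is supported on $\{y:w(y_T)\text{ even}\}$.

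For the surviving directions I would bound $|c_{ry}(p)|$ by $\sigma^{w(ry)}$ when $ry\in E$ (Claim~\ref{claim:trivial-maximum}) and by the sharper $\sigma^{w(ry)-1}\gamma$ from \eqref{eq:disc-handshaking-max-1} when $ry\in O$. Splitting $y=(y_T,y_{T'})$ further by the parity of $w(y_{T'})$ and applying Claim~\ref{claim:=00005BOdd-even-sum-claim=00005D} on each coordinate with $\alpha:=1+\sigma$, $\beta:=1-\sigma$, both $c(p,r)$ and $v_\phi(r)$ acquire closed forms as polynomials in $\alpha,\beta$. The cases $w(r)$ even and $w(r)$ odd differ only in one sign, which factors out when one subtracts, and the inequality $c(p,r)\leq v_\phi(r)$ reduces, after clearing common positive factors, to the single condition
\[
(\sigma+\gamma)\,\beta^{|T|}\;\leq\;(\sigma-\gamma)\,\alpha^{|T|},\qquad\text{equivalently}\qquad\Bigl(\tfrac{1-\sigma}{1+\sigma}\Bigr)^{|T|}\leq\tfrac{\sigma-\gamma}{\sigma+\gamma}.
\]

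Since $\phi\in(\pi/4,\pi/2]$ forces $\sigma>\gamma\geq 0$, the right-hand side is a positive constant depending only on $\phi$; and since $\beta/\alpha<1$ the left-hand side decays geometrically in $|T|$. So choosing $t$ at the top of the proof of Theorem~\ref{thm:c(p)-upper-bound} to exceed the resulting $\phi$-dependent threshold makes the hypothesis $|T|\geq t$ sufficient. The step I expect to be most delicate is the use of the handshaking bound: with only the trivial bound $|c_{ry}(p)|\leq\sigma^{w(ry)}$ on odd-weight derivatives, the algebra above would collapse into demanding $\beta^{|T|}\leq 0$, which is false regardless of how large $|T|$ is. The entire argument therefore leans on the factor $\gamma/\sigma<1$ saved on odd-weight directions, precisely as foreshadowed in the introduction's proof sketch.
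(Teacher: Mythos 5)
Your proposal is correct and follows essentially the same route as the paper's proof: kill the directions with $w(y_T)$ odd via row 2 of Proposition \ref{prop:table}, use the trivial weight bound on even-weight derivatives and the handshaking improvement $\sigma^{w(ry)-1}\gamma$ on odd-weight ones, and reduce via Claim \ref{claim:=00005BOdd-even-sum-claim=00005D} to $(\sigma+\gamma)(1-\sigma)^{|T|}\leq(\sigma-\gamma)(1+\sigma)^{|T|}$, which holds once $|T|\geq t$ with $t$ large depending on $\phi$. Your closing observation that the trivial bound alone would demand $(1-\sigma)^{|T|}\leq 0$ matches the paper's own remark about why the handshaking step is indispensable.
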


The next lemma shows that if $p$ is missing a degree two monomial
then $v_{\phi}(0*^{n-1})$ gains an advantage over $c(p,0*^{n-1})$.
It can be considered a strengthening of Lemma \ref{lem:gap-general}
under an additional constraint.
\begin{lem}[Buffer]
\label{lem:buffer-general} Let $r=0*^{n-1}$. Suppose the 0 variable
is connected to at least $t$ $*$ variables and at most $n-2$ $*$
variables. Then $c(p,r)\leq v_{\phi}(r)-\left(\frac{\sigma-\gamma}{16}\right)v_{\phi}.$
\end{lem}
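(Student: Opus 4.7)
The plan is to strengthen the argument of Lemma \ref{lem:gap-general} by applying the handshaking-based improvement (\ref{eq:disc-handshaking-max-1}) to odd-weight directions, and to extract a quantitative gap from the fact that the degree-$\le n-2$ hypothesis guarantees at least one $*$-variable that is \emph{not} adjacent to the unique $0$-variable $x_1$. Let $S\subseteq\{2,\dots,n\}$ be the set of $*$-neighbors of $x_1$ and $T:=\{2,\dots,n\}\setminus S$, so by hypothesis $|S|\ge t$ and $|T|\ge 1$.

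For a direction $y\in\zo^{n-1}$, the coefficient of $x_1$ in the derivative $p_{(0,y)}$ equals $\pi(y):=\bigoplus_{j\in S}y_j$, since $x_1$ has no $1$-neighbors (the only $0$-coordinate is itself). When $\pi(y)=1$, row 2 of Proposition \ref{prop:table} forces $c_{(0,y)}(p)=0$, exactly as in Lemma \ref{lem:gap-general}. When $\pi(y)=0$, I bound $|c_{(0,y)}(p)|$ by $\sigma^{w(y)}$ if $w(y)$ is even (Claim \ref{claim:trivial-maximum}) and by $\sigma^{w(y)-1}\gamma$ if $w(y)$ is odd (inequality \ref{eq:disc-handshaking-max-1}, i.e.\ Claim \ref{claim:contribution-is-even}). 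Splitting $y=(y_S,y_T)$ and using Claim \ref{claim:=00005BOdd-even-sum-claim=00005D} to collapse the four parity-cases yields
\[
c(p,r)\;\le\;2^{-(n-1)} A\bigl(E_T + (\gamma/\sigma)\,O_T\bigr),\qquad v_\phi(r)\;=\;2^{-(n-1)}\bigl(A E_T + B O_T\bigr),
\]
where $A,B$ denote the even/odd-weight sums of $\sigma^{w(\cdot)}$ over $\zo^{|S|}$ and $E_T,O_T$ the analogous sums over $\zo^{|T|}$. Subtracting,
\[
v_\phi(r)-c(p,r)\;\ge\;2^{-(n-1)}\,O_T\bigl(B-(\gamma/\sigma)A\bigr).
\]

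To lower bound this buffer, I use Claim \ref{claim:=00005BOdd-even-sum-claim=00005D} to write $A,B=\tfrac12\bigl((1+\sigma)^{|S|}\pm(1-\sigma)^{|S|}\bigr)$, so that
\[
B-(\gamma/\sigma)A\;=\;\tfrac{1}{2\sigma}\bigl[(1+\sigma)^{|S|}(\sigma-\gamma)-(1-\sigma)^{|S|}(\sigma+\gamma)\bigr].
\]
Because $\sigma>1/\sqrt2>\gamma$ on $(\pi/4,\pi/2]$, we have $(1-\sigma)/(1+\sigma)<1/5$, and taking $t$ large enough (depending on $\sigma-\gamma$) makes the subtracted term negligible, giving $B-(\gamma/\sigma)A\ge(1-o(1))\tfrac{\sigma-\gamma}{2\sigma}(1+\sigma)^{|S|}$. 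Meanwhile $|T|\ge 1$ gives $O_T\ge\sigma(1+\sigma)^{|T|-1}$, so $O_T\ge\tfrac{\sigma}{1+\sigma}(1+\sigma)^{|T|}$, which is a positive absolute constant fraction of $(1+\sigma)^{|T|}$ for $\sigma\ge 1/\sqrt2$. Using $|S|+|T|=n-1$ to collapse the powers into $(1+\sigma)^{n-1}$ and comparing with $v_\phi\ge 2^{-(n+1)}(1+\sigma)^n$ produces $v_\phi(r)-c(p,r)\ge\tfrac{\sigma-\gamma}{16}v_\phi$ with plenty of room in the constant $16$.

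The only real difficulty is algebraic bookkeeping: I must confirm that the error $((1-\sigma)/(1+\sigma))^{|S|}$ coming from the $A,B$ expansions is dominated by $(\sigma-\gamma)/\sigma$, which is exactly what dictates the meaning of the ``$t$ large enough depending on $\phi$'' convention of Section \ref{sec:Proof-of-Theorem} --- quantitatively, $t=\Omega(\log(1/(\sigma-\gamma)))$ suffices. The necessity of both hypotheses is visible from this bound: without $|S|\ge t$, the error term swamps the main term; without $|T|\ge 1$, the factor $O_T$ simply vanishes and no buffer is available.
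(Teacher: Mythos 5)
Your proof is correct and follows essentially the same route as the paper's: split the $*$ variables into the neighbors and non-neighbors of the $0$-variable, note the contribution vanishes when the parity over the neighbors is odd, apply the handshaking bound $\sigma^{w(y)-1}\gamma$ to odd-weight directions, and extract the buffer from the odd-odd term using $|S|\ge t$ and $|T|\ge 1$ (the paper's $T,U$), with only the final bookkeeping arranged slightly differently. One nitpick: in the last step you quote $v_\phi\ge 2^{-(n+1)}(1+\sigma)^n$, but the comparison you are making actually needs the companion upper bound $v_\phi\le 2^{-n}(1+\sigma)^n$; since that holds trivially and you have ample slack in the constant, the conclusion stands.
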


We shall use the above lemmas to slowly restrict directions, beginning
with Lemma \ref{lem:buffer-general} and then iteratively applying
either Lemma \ref{lem:odd-even-general} or Lemma \ref{lem:gap-general}.
This process stops when we cannot find variables that satisfy the
hypothesis of either Lemma \ref{lem:odd-even-general} or Lemma \ref{lem:gap-general}. 

When this happens, we consider two cases based on the number of variables
restricted. In the first case, when the number is large, we give an
upper bound on $c(p,r)$. This suffices because of the buffer afforded
to us by Lemma \ref{lem:buffer-general}.
\begin{lem}[Opened majority]
\label{lemma:done-if-most-open}Let $r=1^{j}*^{n-j}$ for some $j\geq n/2$.
Then $c(p,r)<2^{j}\left(\frac{\sigma-\gamma}{1000}\right)v_{\phi}.$
\end{lem}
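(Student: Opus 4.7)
The plan is to abandon the structural lemmas and apply the trivial weight bound (Claim~\ref{claim:trivial-maximum}) pointwise in $y$. The intuition is that once $j \geq n/2$ direction bits have been fixed to $1$, every contribution $c_{ry}(p)$ already carries a factor of $\sigma^j$, whereas $v_\phi$ is on the order of $\bigl((1+\sigma)/2\bigr)^n$; the ratio is exponentially small because averaging over a $*$ coordinate costs only $(1+\sigma)/2$ per bit while fixing a $1$ bit costs $\sigma < (1+\sigma)/2$. The buffer provided by Lemma~\ref{lem:buffer-general} is exactly what allows us to lose a constant factor at this terminal step.

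In detail, for any $y \in \zo^{n-j}$ the composed direction $ry$ has weight $j + w(y)$. Since $\phi \in (\pi/4,\pi/2]$ implies $\sigma \geq \gamma$, Claim~\ref{claim:trivial-maximum} yields $|c_{ry}(p)| \leq \sigma^{j + w(y)}$. Averaging over $y$ and applying the binomial theorem,
\[
c(p,r) \;\leq\; 2^{-(n-j)} \sigma^j \sum_{w=0}^{n-j}\binom{n-j}{w}\sigma^w \;=\; \sigma^j\Bigl(\frac{1+\sigma}{2}\Bigr)^{n-j}.
\]
On the other hand, Claim~\ref{claim:=00005BOdd-even-sum-claim=00005D} with $d=\sigma$ gives
\[
v_\phi \;=\; \frac{1}{2}\Bigl(\frac{1+\sigma}{2}\Bigr)^{n} + \frac{1}{2}\Bigl(\frac{1-\sigma}{2}\Bigr)^{n} \;\geq\; \frac{1}{2}\Bigl(\frac{1+\sigma}{2}\Bigr)^{n}.
\]

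Substituting these two estimates into the target inequality, dividing through by $\bigl((1+\sigma)/2\bigr)^n$, and cancelling a common factor of $2^j$, the lemma reduces to
\[
\Bigl(\frac{\sigma}{1+\sigma}\Bigr)^{j} \;\leq\; \frac{\sigma-\gamma}{2000}.
\]
Because $\sigma<1$, we have $\sigma/(1+\sigma)<1/2$, so the left-hand side is at most $2^{-j}\leq 2^{-n/2}$, which is strictly smaller than the fixed positive constant $(\sigma-\gamma)/2000$ once $n$ is large enough depending on $\phi$. This yields the strict inequality asserted in the lemma. There is no substantive obstacle: this step is purely computational, and the $2^j$ factor on the right-hand side of the statement is there precisely to absorb the $j$ factors of $\sigma$ that the trivial weight bound contributes.
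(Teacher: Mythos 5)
Your argument is correct and is essentially the paper's own proof: apply the trivial weight bound of Claim~\ref{claim:trivial-maximum} pointwise (noting $\sigma\ge\gamma$ on this range of $\phi$), sum via the binomial theorem, lower-bound $v_{\phi}$ by $\frac{1}{2}\left(\frac{1+\sigma}{2}\right)^{n}$, and reduce to an inequality of the form $\left(\frac{\sigma}{1+\sigma}\right)^{j}\le\frac{\sigma-\gamma}{2000}$, which holds for large $n$ because $j\ge n/2$. (Only cosmetic nit: at $\phi=\pi/2$ one has $\sigma=1$, so write $\sigma/(1+\sigma)\le1/2$ rather than a strict inequality; the conclusion is unaffected.)
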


In the second case, when the number of restricted variables is small,
the polynomial has structure that we can utilize to again show $c(p,r)\le v_{\phi}(r)$.
Specifically, in the graph of the polynomial many variables have small
degree. 
\begin{lem}[Low degree loses]
\label{lemma:low-degree-loses} Let $r=1^{j}*^{n-j}$ for some $j<n/2$.
Suppose every $*$ variable is connected to at most $t$ other $*$
variables. Then $c(p,r)\leq v_{\phi}(r).$
\end{lem}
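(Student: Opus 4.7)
The plan exploits the per-variable factorization from Proposition \ref{prop:table}: writing $z=ry$, $|c_{ry}(p)|$ factors as a product in which each $*$-variable $x_i$ with $y_i=0$ contributes $\mathbf{1}(p_{z,i}\equiv 0\bmod 2)$, each $*$-variable with $y_i=1$ contributes $\gamma$ or $\sigma$ according to $p_{z,i}\bmod 2$, and each $1$-variable contributes $\gamma$ or $\sigma$ similarly. Here $p_{z,i}=d_i^1+\sum_{k\sim_* i}y_k\bmod 2$, where $d_i^1$ is the parity of the number of $1$-neighbors of $x_i$ in the graph of $p$ and $\sim_*$ denotes adjacency in the $*$-subgraph $G^*$. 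Since the naive bound $|c_{ry}(p)|\le \sigma^{w(ry)}$ only yields $\sigma^j(1+\sigma)^{n-j}\approx 2\cdot 2^{n-j}v_\phi(r)$, the entire point is to save a factor of $2$, and that saving must come from the zero-contribution condition at $*$-variables with $y_i=0$.

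The first step is to use the low-degree hypothesis: greedy coloring of $G^*$ with $t+1$ colors produces a color class $I\subseteq[n-j]$ of size $|I|\ge (n-j)/(t+1)=\Omega(n/t)$, which is large because $n/t$ is. Because $I$ is independent in $G^*$, the value $b_i(\beta):=p_{ry,i}\bmod 2$ depends, for $i\in I$, only on $\beta:=y|_{[n-j]\setminus I}$ and not on $y|_I$. Summing the per-variable factor over $y_i\in\zo$ for $i\in I$ gives exactly $1+\gamma$ when $b_i(\beta)=0$ and $\sigma$ when $b_i(\beta)=1$, while the factors from $1$-variables and $*$-variables outside $I$ are bounded uniformly in $y|_I$ by $\sigma^j$ and $\sigma^{w(\beta)}$ respectively, yielding
\begin{equation*}
2^{n-j}c(p,r)\le \sigma^j\sum_{\beta}\sigma^{w(\beta)}\prod_{i\in I}\Bigl[(1+\gamma)\mathbf{1}(b_i(\beta)=0)+\sigma\mathbf{1}(b_i(\beta)=1)\Bigr].
\end{equation*}

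Next I split $I=I_{\mathrm{iso}}\sqcup I_{\mathrm{non}}$ according to whether $x_i$ is isolated in $G^*$. For $i\in I_{\mathrm{iso}}$, $b_i=d_i^1\bmod 2$ is constant in $\beta$, and its per-coordinate factor pulls out as a deterministic $(1+\gamma)^a\sigma^b$ with $a+b=|I_{\mathrm{iso}}|$. For $i\in I_{\mathrm{non}}$, $\beta\mapsto b_i(\beta)$ is a nonconstant $\F_2$-affine form, hence balanced $0$-$1$ over uniform $\beta$; a Fourier/Chernoff argument applied to the biased measure $\beta\sim\sigma^{w(\beta)}$ shows $|\{i\in I_{\mathrm{non}}:b_i(\beta)=0\}|$ concentrates near $|I_{\mathrm{non}}|/2$ with exponentially small error. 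Combined with $\sum_\beta\sigma^{w(\beta)}=(1+\sigma)^{n-j-|I|}$ by Claim \ref{claim:=00005BOdd-even-sum-claim=00005D}, the right-hand side is at most $\sigma^j(1+\sigma)^{n-j}\cdot\bigl((1+\gamma)/(1+\sigma)\bigr)^{\Omega(|I|)}$, which is $\le \tfrac12\sigma^j(1+\sigma)^{n-j}\approx 2^{n-j}v_\phi(r)$ since $\sigma>\gamma$ in our range and $|I|\ge\Omega(n/t)$ is large enough to overcome the constant $\log 2/\log((1+\sigma)/(1+\gamma))$.

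The main obstacle is the regime where $(b_i)_{i\in I_{\mathrm{non}}}$ has small $\F_2$-rank, so the concentration above is weak. In that case the affine forms $b_i$ must factor through a low-dimensional image, which by the bounded $G^*$-degrees forces strong structural constraints on the $*$-neighborhoods of vertices in $I_{\mathrm{non}}$; these constraints let one further partition $\beta$-space, reducing to a situation dominated by $I_{\mathrm{iso}}$-type contributions where the factor-of-two savings are recovered coordinate-by-coordinate. The argument uses $\phi\in(\pi/4,\pi/2]$ (hence $\sigma>\gamma$) together with both $t$ and $n/t$ being sufficiently large, with the implicit constants degrading as $\phi\to\pi/4^+$.
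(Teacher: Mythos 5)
Your setup is exactly the paper's: an independent set (color class) $I$ of size $\ge(n-j)/(t+1)$ in the $*$-subgraph, the observation that for $i\in I$ the coefficient $b_i$ depends only on $\beta=y|_{[n-j]\setminus I}$, and the resulting bound
\begin{equation*}
2^{n-j}c(p,r)\le \sigma^j\sum_{\beta}\sigma^{w(\beta)}\prod_{i\in I}\Bigl[(1+\gamma)\mathbf{1}(b_i(\beta)=0)+\sigma\,\mathbf{1}(b_i(\beta)=1)\Bigr],
\end{equation*}
which is correct. The gap is in what you do next. Your concentration claim --- that $|\{i\in I_{\mathrm{non}}:b_i(\beta)=0\}|$ concentrates near $|I_{\mathrm{non}}|/2$ --- is false in general, because the affine forms $b_i$ can be identical or highly correlated (e.g.\ many vertices of $I$ sharing the same single $*$-neighbor), in which case the count is essentially $0$ or $|I_{\mathrm{non}}|$. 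You recognize this as the ``low-rank regime'' but your resolution of it (structural constraints on neighborhoods, a further partition of $\beta$-space, reduction to isolated-type contributions) is only a sketch; no argument is given that would actually close that case, and it is precisely the case where your main estimate breaks. As written, the proof is incomplete.

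Moreover, the entire probabilistic detour is unnecessary, and the missing step is one line: since $\phi\in(\pi/4,\pi/2]$ we have $\sigma\le 1\le 1+\gamma$, so for \emph{every} $\beta$ and every $i\in I$ the bracketed factor is at most $1+\gamma$; hence
\begin{equation*}
2^{n-j}c(p,r)\le \sigma^j(1+\gamma)^{|I|}(1+\sigma)^{\,n-j-|I|}.
\end{equation*}
Comparing with $2^{n-j}v_\phi(r)=\sigma^j\bigl((1+\sigma)^{n-j}\pm(1-\sigma)^{n-j}\bigr)/2\ge \sigma^j(1+\sigma)^{n-j}/4$, the lemma reduces to $4<\bigl((1+\sigma)/(1+\gamma)\bigr)^{|I|}$, which holds because $\sigma>\gamma$ on this range of $\phi$ and $|I|\ge\Omega(n/t)$ is large. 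This is exactly how the paper concludes (there the worst case $\psi=0$, i.e.\ all factors equal to $1+\gamma$, is taken directly), so the point is not that you need the count of $b_i(\beta)=1$ to be typical --- the adversarial case of all $b_i=0$ is already good enough, since $1+\gamma$ beats $1+\sigma$ by an exponential margin over $I$.
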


We will need the following variant of Lemma \ref{lemma:low-degree-loses}
for an edge case in the main proof.
\begin{lem}
\label{lemma:low-degree-loses-all} Let $r=*^{n}$. Suppose there
are at least $n-t$ variables connected to at most $t$ other variables.
Then $c(p,r)\leq(1-(\sigma-\gamma))v_{\phi}.$ 
\end{lem}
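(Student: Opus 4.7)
The plan is to use Lemma \ref{lem:buffer-general} to acquire a slack of order $(\sigma-\gamma)\,v_\phi$, and then iterate the restriction procedure of Section \ref{sec:Proof-of-Theorem} on the complementary half, combining the two at the end.

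Let $H:=\{v\in[n]:\deg_p(v)>t\}$, so $|H|\le t$ by hypothesis and every $v\in L:=[n]\setminus H$ has at most $t$ neighbors. Assume first that $H$ contains a vertex $v_0$ whose degree lies in $[t,n-2]$; the edge cases $H=\emptyset$ or ``every $v\in H$ has degree $n-1$'' are handled separately (when $H=\emptyset$ the graph is so sparse that a component-by-component decomposition of the product defining $c(p,*^n)$ gives $c(p,*^n)=o(v_\phi)$, and when every vertex of $H$ has full degree $n-1$ the polynomial is effectively symmetric on a subset of variables and the claim reduces to one of the earlier cases). Applying Lemma \ref{lem:buffer-general} with the 0 placed at $v_0$ yields
\[
c(p,0*^{n-1})\;\le\;v_\phi(0*^{n-1})-\tfrac{\sigma-\gamma}{16}\,v_\phi.
\]

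For the complementary half $c(p,1*^{n-1})$, I would iterate as follows: at each step, pick any $v\in H\setminus\{v_0\}$ whose coordinate in the current restriction is $*$, peel off its 0-case using Lemma \ref{lem:odd-even-general} or Lemma \ref{lem:gap-general} (one of them applies because $v$ has more than $t$ neighbors of which at most $|H|-1<t$ lie in $H$, so $v$ retains many $*$-neighbors---possibly after strengthening the threshold defining $H$ from $t$ to, say, $2t$, so that the count is robust against the previously opened restrictions), and continue with the 1-case. Eventually every $H$-coordinate is set to 1 and the restriction reaches $1^H*^L$, at which point Lemma \ref{lemma:low-degree-loses} applies ($|H|\le t<n/2$, and every $*$-variable, being in $L$, has at most $t$ $*$-neighbors), yielding $c(p,1*^{n-1})\le v_\phi(1*^{n-1})$. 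Averaging,
\[
c(p,*^n)\;=\;\tfrac12\bigl(c(p,0*^{n-1})+c(p,1*^{n-1})\bigr)\;\le\;v_\phi-\tfrac{\sigma-\gamma}{32}\,v_\phi,
\]
where I used the identity $\tfrac12(v_\phi(0*^{n-1})+v_\phi(1*^{n-1}))=v_\phi$. This is the stated $(1-(\sigma-\gamma))v_\phi$ bound up to an absolute multiplicative constant on the slack, which is negligible in the applications of this lemma in the main proof.

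The main obstacle is the iterative phase for $c(p,1*^{n-1})$: one must ensure that at each intermediate restriction a vertex satisfying the hypotheses of Lemma \ref{lem:odd-even-general}, Lemma \ref{lem:gap-general}, or eventually Lemma \ref{lemma:low-degree-loses} is available. This requires choosing the threshold separating $H$ from $L$ carefully so that high-degree vertices retain enough $*$-neighbors throughout the iteration, while $L$-vertices remain sparse enough for the final application of Lemma \ref{lemma:low-degree-loses}.
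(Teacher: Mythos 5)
Your main-case argument (a buffer step at a vertex $v_0$ of degree in $[t,n-2]$, then iterating with Lemmas \ref{lem:odd-even-general}, \ref{lem:gap-general} and finishing with Lemma \ref{lemma:low-degree-loses}) is essentially a re-run of the proof of Theorem \ref{thm:c(p)-upper-bound}, and with the threshold adjustments you mention it could be made to work, though it only yields slack $\frac{\sigma-\gamma}{32}v_\phi$ rather than the stated $(\sigma-\gamma)v_\phi$ (harmless for the application, but weaker than the lemma). The genuine gap is in what you call the edge cases, and these are not edge cases at all: in the paper this lemma is invoked exactly when every vertex has degree at most $t-1$ or exactly $n-1$, with at most $t-1$ vertices of the latter kind --- i.e., precisely when no vertex of degree in $[t,n-2]$ exists and no buffer step is available, so your main case is vacuous there. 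Your treatment of these cases does not hold up. For $H=\emptyset$, the contribution $\E_y|c_y(p)|$ does factor over connected components, but this gives nothing: a graph of maximum degree $t$ can be connected (a single path already is), so the components need not be small and you are back where you started, with no per-component estimate supplied. For the case where every vertex of $H$ has degree $n-1$, the polynomial is not ``effectively symmetric'' --- it is a join of at most $t$ universal vertices with a sparse graph --- and no earlier lemma covers the unrestricted restriction $r=*^n$ for such a polynomial; asserting that it ``reduces to one of the earlier cases'' is not an argument.

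For contrast, the paper's proof covers all cases uniformly in one computation and needs neither the buffer nor the iteration: the at least $n-t$ vertices of degree $\le t$ contain, greedily, an independent set of size $\ge (n-t)/(t+1)\ge (n-t)/2t$, and repeating the product bound from the proof of Lemma \ref{lemma:low-degree-loses} with $j=0$ gives $2^n c(p,*^n)\le (1+\gamma)^{(n-t)/2t}(1+\sigma)^{\,n-(n-t)/2t}$. Since $(1+\sigma)/(1+\gamma)>1$ for $\phi\in(\pi/4,\pi/2]$ and $n/t$ is large, this is at most $(1-(\sigma-\gamma))\,(1+\sigma)^n/2\le (1-(\sigma-\gamma))\,2^n v_\phi$, which is the exact stated slack. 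If you want to salvage your route, you would in any case have to import this independent-set argument to handle the degenerate degree profiles, at which point the buffer-and-iterate machinery becomes superfluous.
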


Assuming these lemmas we can prove Theorem \ref{thm:c(p)-upper-bound}.
\begin{proof}[Proof of Theorem \ref{thm:c(p)-upper-bound}]
 We consider two cases based on the existence of a variable of certain
degree in the graph of $p$. In the first case, when $p$ is a `typical'
polynomial, we suppose the existence of a variable with degree in
$[t,n-2]$ (corresponding to the hypothesis of Lemma \ref{lem:buffer-general}).
Let us denote this variable $x_{1}$ for ease. We ``open'' the directional
bit corresponding to $x_{1}$. That is, we condition $\E_{y}|c_{y}(p)|$
depending on the value of $y_{1}$:
\[
c(p,*^{n})=\frac{1}{2}\left(c(p,0*^{n-1})+c(p,1*^{n-1})\right).
\]
Correspondingly, it holds that
\[
v_{\phi}(*^{n})=\frac{1}{2}\left(v_{\phi}(0*^{n-1})+v_{\phi}(1*^{n-1})\right).
\]
Then we iteratively open up $*$ variables in the term where the restriction
has no zeroes, as long as we can find a $*$ variable that is connected
to an  number of $1$ variables or that is connected to an even number
of $1$ variables and at least $t$ other $*$ variables. We can write
the terms corresponding to the variables that were opened (up to permutation
of variables):
\[
c(p,*^{n})=\frac{1}{2}c(p,0*^{n-1})+\frac{1}{4}c(p,10*^{n-2})+\dots+\frac{1}{2^{j}}c(p,1^{j}*^{n-j}),
\]
for some $1\leq j\leq n$ depending on $p$. We also write the corresponding
terms for $v_{\phi}$:
\[
v_{\phi}(*^{n})=\frac{1}{2}v_{\phi}(0*^{n-1})+\frac{1}{4}v_{\phi}(10*^{n-2})+\dots+\frac{1}{2^{j}}v_{\phi}(1^{j}*^{n-j}).
\]

We compare the terms in the right-hand sides in the two equations
above. For the first term, we have $\frac{1}{2}c(p,0*^{n-1})\leq\frac{1}{2}v_{\phi}(0*^{n-1})-(\frac{\sigma-\gamma}{32})v_{\phi}$
by Lemma \ref{lem:buffer-general}. For all the other terms except
the last one, we have that the $c(p,r)$ terms is at most the corresponding
$v_{\phi}(r)$ term by either Lemma \ref{lem:odd-even-general} or
Lemma \ref{lem:gap-general}. Now we analyze the last terms depending
on the value of $j$. Note that each $*$ variable is connected to
at most $t$ other $*$ variables.

If $1\leq j<n/2$ we apply Lemma \ref{lemma:low-degree-loses} which
says $c(p,1^{j}*^{n-j})\leq v_{\phi}(p,1^{j}*^{n-j})$ and conclude
as $v_{\phi}(*^{n})-c(p,*^{n})\geq\frac{\sigma-\gamma}{32}v_{\phi}$
. 

If $j\ge n/2$ then $\frac{1}{2^{j}}c(p,1^{j}*^{n-j})\leq(\frac{\sigma-\gamma}{1000})v_{\phi}$
by Lemma \ref{lemma:done-if-most-open} and we conclude as $v_{\phi}(*^{n})-c(p,*^{n})\geq\left(\frac{\sigma-\gamma}{32}-\frac{\sigma-\gamma}{1000}\right)v_{\phi}$. 

This finishes the proof of when $p$ has a node with degree in $[t,n-2]$.
For the second case, suppose that every node has degree at most $t-1$
or degree exactly $n-1$. We then claim there are $\leq t-1$ nodes
with degree $n-1$. Supposing this is true we can immediately conclude
by Lemma \ref{lemma:low-degree-loses-all}.

Now we verify the desired claim. Suppose there are $z$ nodes of degree
$n-1$ with $z\ge t$. Each of these nodes is connected to every other
node, so every node in the graph has degree at least $z\ge t$. By
the supposition, every node in the graph has degree $n-1$. This contradicts
the hypothesis that $p\neq e^{2}+\ell$.
\end{proof}
Next we give proofs of the technical lemmas.

\subsubsection{Proof of Lemma \ref{lem:odd-even-general}}

Fix a 0 variable $x_{i}$ that is connected to an odd number of 1
variables. Let $T$ denote the indices of the $*$ variables connected
to $x_{i}$ and let $U$ denote the indices of the remaining $*$
variables. Write $y=(y^{T},y^{U})$ for the corresponding bits of
$y$.

Note that by Proposition \ref{prop:table}, $c_{ry}(p)=0$ if $w(y^{T})$
is even (because the coefficient of $x_{i}$ would be odd). And if
$w(y^{T})$ is odd we apply the upper bound $|c_{ry}(p)|\leq\sigma^{w(ry)}$
from Claim \ref{claim:trivial-maximum}. Combining these two things
yields:
\begin{align*}
c(p,r) & =2^{-S(r)}\sum_{y^{T}\in O,y^{U}}|c_{ry}(p)|\\
 & \leq2^{-S(r)}\sum_{y^{T}\in O,y^{U}}\sigma{}^{w(ry)}.
\end{align*}
Now we compare this value with the expression for $v_{\phi}$. Let
us assume that $w(r)$ is even. Then
\begin{align*}
v_{\phi}(r) & =2^{-S(r)}\sum_{y\in E}\sigma{}^{w(ry)}.
\end{align*}
Hence to prove $c(p,r)\le v_{\phi}(r)$ it suffices to show
\[
\sum_{y^{T}\in O,y^{U}}\sigma{}^{w(y)}\le\sum_{y\in E}\sigma{}^{w(y)}.
\]
Note in the above two expressions we can assume $|T|>0$ since otherwise
the left hand-side will be 0 and we would be immediately done. Then
by conditioning on the parity of $y^{U}$ in each side it suffices
to show
\[
\sum_{y^{T}\in O,y^{U}\in E}\sigma{}^{w(y)}+\sum_{y^{T}\in O,y^{U}\in O}\sigma{}^{w(y)}\le\sum_{y^{T}\in E,y^{U}\in E}\sigma{}^{w(y)}+\sum_{y^{T}\in O,y^{U}\in O}\sigma{}^{w(y)}.
\]
The second sum in each side is the same, and the first sum in the
right-hand side is bigger than the first sum in the left-hand side
by Claim \ref{claim:=00005BOdd-even-sum-claim=00005D}. This concludes
the case of when $w(r)$ is even. 

When $w(r)$ is odd 
\[
v_{\phi}(r)=2^{-S(r)}\sum_{y\in O}\sigma{}^{w(ry)}.
\]
Then it suffices to show 
\[
\sum_{y^{T}\in O,y^{U}\in E}\sigma{}^{w(y)}+\sum_{y^{T}\in O,y^{U}\in O}\sigma{}^{w(y)}\le\sum_{y^{T}\in E,y^{U}\in O}\sigma{}^{w(y)}+\sum_{y^{T}\in O,y^{U}\in E}\sigma{}^{w(y)}.
\]
The inequality holds again by Claim \ref{claim:=00005BOdd-even-sum-claim=00005D}.

\subsubsection{Proof of Lemma \ref{lem:gap-general}}

The high-level approach is similar to the proof of Lemma \ref{lem:odd-even-general},
but we utilize the following improvement of Claim \ref{claim:trivial-maximum}
when the weight of the derivative is odd. The improvement comes from
the handshaking lemma\emph{.}
\begin{claim}
\label{claim:contribution-is-even} Let $y\in\zo^{n}$. Then $|c_{y}(p)|$
is either 0 or $\sigma{}^{e}\gamma{}^{w(y)-e}$, where $e$ is an
even integer and $0\leq e\leq w(y)$.
\end{claim}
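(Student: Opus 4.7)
The plan is to unpack $c_y(p)$ using Proposition \ref{prop:table} entry by entry, identify when it vanishes, and otherwise extract an exact product form, and finally use the handshaking lemma to force the exponent of $\sigma$ to be even.

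First, since $p$ is quadratic, $p_y(x) = p(x) + p(x \oplus y)$ is linear in $x$, so $p_y(x) = p_{y,0} + \sum_i p_{y,i} x_i$. Because each factor in the product
\[
c_y(p) = (-1)^{p_{y,0}} \prod_{i=1}^{n} \E_{x_i}(-1)^{p_{y,i}x_i}\omega^{x_i-(x_i\oplus y_i)}
\]
depends only on $(p_{y,i}, y_i)$, Proposition \ref{prop:table} gives four possible values per coordinate. If there exists an index $i$ with $y_i = 0$ and $p_{y,i} = 1$ (second row of the table), the corresponding factor is $0$ and hence $c_y(p) = 0$, putting us in the first case of the claim. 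Otherwise, every $i$ with $y_i = 0$ contributes a factor $1$, every $i$ with $y_i = 1$ and $p_{y,i} = 0$ contributes $\gamma$, and every $i$ with $y_i = 1$ and $p_{y,i} = 1$ contributes $-\sqrt{-1}\cdot\sigma$. Letting $e := |\{i : y_i = 1,\ p_{y,i} = 1\}|$, we get $|c_y(p)| = \sigma^{e}\gamma^{w(y)-e}$ with $0 \le e \le w(y)$, as required.

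It remains to argue that $e$ is even. For this I will interpret $e$ graph-theoretically using the graph $G_{p,y}$ introduced before the claim: its nodes are the variables $x_i$ with $y_i = 1$, and its edges are the pairs $\{i,j\}$ with $i \neq j$, $y_i = y_j = 1$, such that $x_i x_j$ is a monomial of $p$. A direct computation of the derivative of a single monomial $x_i x_j$ shows that the coefficient of $x_i$ in $p_y$ equals $\sum_{j : x_i x_j \in p} y_j \pmod 2$ (linear terms of $p$ contribute only to $p_{y,0}$). For an index $i$ with $y_i = 1$, this sum counts, modulo $2$, precisely the neighbors of $x_i$ in $G_{p,y}$, i.e.\ $p_{y,i} \equiv \deg_{G_{p,y}}(x_i) \pmod 2$. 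Hence $e$ is exactly the number of odd-degree vertices of $G_{p,y}$.

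By the handshaking lemma, the number of odd-degree vertices in any finite graph is even, so $e$ is even and we are done. The only mildly delicate step is the identification $p_{y,i} \equiv \deg_{G_{p,y}}(x_i) \pmod 2$ for $y_i = 1$, which requires care that the monomials $x_i x_j$ with $y_j = 0$ do not contribute (they don't, because those terms vanish when taking the derivative coefficient relevant to $x_i$ once one expands $p(x \oplus y) + p(x)$); everything else is routine.
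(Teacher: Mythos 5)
Your proof is correct and follows essentially the same route as the paper: expand $c_y(p)$ coordinate-wise via Proposition \ref{prop:table}, identify the vanishing case, and apply the handshaking lemma to the graph $G_{p,y}$ on the $1$-variables to force the number of $\sigma$-factors to be even. Your explicit verification that $p_{y,i}\equiv\deg_{G_{p,y}}(x_i)\pmod 2$ for $y_i=1$ is a detail the paper leaves implicit, but the argument is the same.
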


\begin{proof}
Consider the graph $G$ with $w(y)$ nodes which are the 1 variables
and the edges represent monomials. Let $S,T$ be the nodes in $G$
that have odd, even degree respectively. Note that nodes in $S$ contribute
a $\sigma$ factor, while the nodes in $T$ contribute a $\gamma$
factor. The remaining $n-w(y)$ 0 variables not in $G$ contribute
either $1$ or $0$.

So to finish the proof it suffices to show that $|S|$ must be even.
The sum of all the degrees in $G$ is $|S|\cdot odd+(|V|-|S|)\cdot even=|S|\cdot odd+even$.
In any graph, the sum of degrees is even, hence $|S|$ is always even.
\end{proof}
To prove Lemma \ref{lem:gap-general} we exploit that if $w(ry)$
is odd then the exponent of the $\sigma$ factor is $<w(ry).$ Fix
the 0 variable $x_{i}$ that is connected to an even number of 1 variables
and to at least $t$ $*$ variables. Let $T$, $U$ denote the same
as in the previous proof. The $ry$ contribution is zero if $w(y^{T})$
is odd (because the coefficient of $x_{i}$ in the $ry$ derivative
would be $even+odd=odd$). So then
\begin{align*}
c(p,r) & =2^{-S(r)}\sum_{y^{T}\in E,y^{U}}|c_{ry}(p)|\\
 & =2^{-S(r)}(\sum_{y^{T}\in E,y^{U}\in E}|c_{ry}(p)|+\sum_{y^{T}\in E,y^{U}\in O}|c_{ry}(p)|).
\end{align*}

Suppose that $w(r)$ is even. For the first term, where $y^{T}\in E,y^{U}\in E$,
we use Claim \ref{claim:trivial-maximum}. For the second term, where
$y^{T}\in E,y^{U}\in O$, $w(ry)=even+even+odd=odd$. By Claim \ref{claim:contribution-is-even},
the max contribution of $ry$ in the second term is $\leq\sigma{}^{w(ry)-1}\gamma$.
So we can bound

\[
c(p,r)\leq2^{-S(r)}(\sum_{y^{T}\in E,y^{U}\in E}\sigma{}^{w(ry)}+\frac{\gamma}{\sigma}\sum_{y^{T}\in E,y^{U}\in O}\sigma{}^{w(ry)}).
\]
We compare this to 
\begin{align*}
v_{\phi}(r) & =2^{-S(r)}\sum_{y\in E}\sigma{}^{w(ry)}\\
 & =2^{-S(r)}(\sum_{y^{T}\in E,y^{U}\in E}\sigma{}^{w(ry)}+\sum_{y^{T}\in O,y^{U}\in O}\sigma{}^{w(ry)}).
\end{align*}
The sums over $y^{T}\in E,y^{U}\in E$ are the same. Hence to show
$c(p,r)\le v_{\phi}(r)$ it suffices to show
\begin{align*}
\frac{\gamma}{\sigma}\sum_{y^{T}\in E,y^{U}\in O}\sigma{}^{w(y)} & \le\sum_{y^{T}\in O,y^{U}\in O}\sigma{}^{w(y)}\\
\iff\frac{\gamma}{\sigma}\sum_{y^{T}\in E}\sigma{}^{w(y^{T})} & \le\sum_{y^{T}\in O}\sigma{}^{w(y^{T})}\\
\iff\left(\sigma/\gamma+1\right)(1-\sigma)^{|T|} & \leq\left(\sigma/\gamma-1\right)(1+\sigma)^{|T|}\\
\iff\frac{\sigma+\gamma}{\sigma-\gamma} & \leq\left(\frac{1+\sigma}{1-\sigma}\right)^{|T|}.
\end{align*}
The second to last $\iff$ follows by applying Claim \ref{claim:=00005BOdd-even-sum-claim=00005D}
and rearranging. The last inequality holds for $t$ large enough,
since $|T|\geq t$ and the left hand term will be some fixed positive
number since $\phi\in(\pi/4,\pi/2]$. This concludes the $w(r)$ even
case.

Now suppose $w(r)$ is odd. Proceeding similarly as before, we have
\[
c(p,r)\leq2^{-S(r)}(\frac{\gamma}{\sigma}\sum_{y^{T}\in E,y^{U}\in E}\sigma{}^{w(ry)}+\sum_{y^{T}\in E,y^{U}\in O}\sigma{}^{w(ry)}).
\]
Which we need to compare with 
\begin{align*}
v_{\phi}(r) & =2^{-S(r)}\sum_{y\in O}\sigma{}^{w(ry)}\\
 & =2^{-S(r)}(\sum_{y^{T}\in E,y^{U}\in O}\sigma{}^{w(ry)}+\sum_{y^{T}\in O,y^{U}\in E}\sigma{}^{w(ry)}).
\end{align*}
Now the sums over $y^{T}\in E,y^{U}\in O$ are the same. So then it
suffices to show
\begin{align*}
\frac{\gamma}{\sigma}\sum_{y^{T}\in E}\sigma{}^{w(y^{T})} & \leq\sum_{y^{T}\in O}\sigma{}^{w(y^{T})}
\end{align*}
which we have already verified.

\subsubsection{Proof of Lemma \ref{lem:buffer-general}}

The proof starts identically as the proof of Lemma \ref{lem:gap-general},
but then we strengthen the analysis to give a strict inequality. Let
$T$ denote the set of $*$ variables connected to $x_{1}$, and let
$U$ denote the $*$ variables not connected to $x_{1}$. We have
$|T|+|U|=n-1$ and by hypothesis $t\leq|T|\leq n-2$. We remark the
strengthened analysis only works because of the condition $|T|\leq n-2$. 

We have the following derivation, where the first inequality follows
from the same steps as in $w(r)$ even case of the previous proof.
Let $a=1+\sigma,b=1-\sigma$, and $\delta=\gamma/\sigma$.
\begin{align*}
2^{n-1}\left(v_{\phi}(0*^{n-1})-c(p,0*^{n-1})\right) & \geq\sum_{y^{T}\in O,y^{U}\in O}\sigma{}^{w(y)}-\frac{\gamma}{\sigma}\sum_{y^{T}\in E,y^{U}\in O}\sigma^{w(y)}.\\
 & =\sum_{y^{U}\in O}\sigma^{w(y^{U})}\cdot(\sum_{y^{T}\in O}\sigma{}^{w(y^{T})}-\delta\sum_{y^{T}\in E}\sigma{}^{w(y^{T})})\\
 & =\frac{a^{|U|}-b^{|U|}}{2}\cdot\frac{(1-\delta)a^{|T|}-(1+\delta)b^{|T|}}{2}\\
 & \geq\frac{a^{|U|}}{4}\cdot\frac{(1-\delta)a^{|T|}}{4}\\
 & =\frac{(1-\delta)a^{n-1}}{16}.
\end{align*}
We elaborate on the last $\geq$. First, note that if $|U|=0$ the
inequality would not be valid since the entire expression would be
equal to 0. Second, we verify that
\begin{align*}
\frac{(1+\delta)b^{|T|}}{2} & \leq\frac{(1-\delta)a^{|T|}}{4}\\
\iff2\cdot\frac{\sigma+\gamma}{\sigma-\gamma} & \leq\left(\frac{1+\sigma}{1-\sigma}\right)^{|T|}.
\end{align*}
The last inequality holds for $t$ large enough, since $|T|\geq t$.
Note this is almost the same inequality that is in the proof of Lemma
\ref{lem:gap-general}. Lastly, we verify that
\begin{align*}
\frac{b^{|U|}}{2} & \leq\frac{a^{|U|}}{4}\\
\Leftarrow2 & \leq\frac{1+\sigma}{1-\sigma}.
\end{align*}
The $\Leftarrow$ holds since $|U|>0$ and the last inequality is
equivalent to $\sigma\ge1/3$ which holds since $\sigma=\sin(\phi)\ge\sin(\pi/4)=1/\sqrt{2}\ge1/3$.

We continue the derivation, applying similar logic:

\begin{align*}
\frac{(1-\delta)a^{n-1}}{16} & \geq\frac{(1-\delta)a^{n-1}+(1-\delta)b^{n-1}}{32}\\
 & \geq\frac{(1-\delta)a^{n}+(1-\delta)b^{n}}{32a}\\
 & =\frac{(1-\delta)}{16a}\cdot2^{n}v_{\phi}.
\end{align*}
Dividing both sides by $2^{n-1}$ we obtain
\begin{align*}
v_{\phi}(0*^{n-1})-c(p,0*^{n-1}) & \geq\frac{(1-\delta)}{8a}\cdot v_{\phi}\\
 & \geq\frac{\sigma-\gamma}{16}\cdot v_{\phi}.
\end{align*}
where the last $\geq$ follows since $a=1+\sigma\leq2$, $(1-\delta)=\frac{\sigma-\gamma}{\sigma}\geq\sigma-\gamma$
because $\sigma\leq1$.

\subsubsection{Proof of Lemma \ref{lemma:done-if-most-open}}

Applying Claim \ref{claim:trivial-maximum} we can say
\begin{align*}
c(p,1^{j}*^{n-j}) & \leq2^{-(n-j)}\sigma^{j}\sum_{y}\sigma{}^{w(y)}\\
 & =2^{-(n-j)}\sigma{}^{j}(1+\sigma)^{n-j}.
\end{align*}
On the other hand,
\begin{align*}
2^{j}v_{\phi}(*^{n}) & \geq2^{-(n-j+1)}(1+\sigma)^{n}.
\end{align*}
So it suffices to show that
\begin{align*}
\frac{\sigma{}^{j}(1+\sigma)^{n-j}}{2^{n-j}} & \leq\frac{\sigma-\gamma}{1000}\frac{(1+\sigma)^{n}}{2^{n-j+1}}\\
\iff\frac{2000}{\sigma-\gamma} & \leq\left(\frac{1+\sigma}{\sigma}\right)^{j},
\end{align*}
where we divided by $\sigma-\gamma>0$. The last inequality holds
for $n$ large enough since $j\geq n/2$ and $\sigma>0$.

\subsubsection{Proof of Lemma \ref{lemma:low-degree-loses}}

Consider the subgraph induced by the $*$ variables. There are $n-j\geq n/2$
nodes in it of degree $\leq t$. By a greedy argument, this implies
an independent set of size $\geq(n-j)/(t+1)\ge n/4t$. Let $T$ denote
the variables in the independent set and let $S$ denote the remaining
$*$ variables. Note $|S|+|T|=n-j$ and the remaining $j$ variables
are 1 variables.

For any fixing $y^{S}$ of $S$, let $p^{T}(y^{S})\in\zo^{|T|}$ denote
the coefficients of the variables in $T$ based on the partial restriction
$1^{j}y^{S}*^{|T|}$. This is a valid definition because $T$ is an
independent set, and so $p^{T}(y^{S})$ is unaffected by any fixing
$y^{T}$ of $T$. By Proposition \ref{prop:table}, if for some fixing
$y^{T}$ there is a variable $x_{j}$ in $T$ such that $p_{j}^{T}(y^{S})=1$
but $y_{j}^{T}=0$ then the contribution is 0. Using also the other
values in the table in Proposition \ref{prop:table}, for any fixed
$y^{S}$ we can let $\psi:=w(p^{T}(y^{S}))$ and bound the contribution
over $y^{T}$ as follows:
\begin{align*}
2^{|T|}c(p,1^{j}y^{S}*^{|T|}) & \le\sigma^{j+w(y^{S})+\psi}\sum_{z\in\zo^{|T|-\psi}}\gamma^{w(z)}\\
 & =\sigma^{j+w(y^{S})+\psi}(1+\gamma)^{|T|-\psi}\\
 & \leq\sigma^{j+w(y^{S})}(1+\gamma)^{|T|}.
\end{align*}
The last $\leq$ follows since $\sigma<1\le1+\gamma$. By summing
over all possible fixings $y^{S}$ and applying the previous bound
we can bound $c(p,1^{j}*^{n-j})$ as follows:
\begin{align*}
2^{n-j}c(p,1^{j}*^{n-j}) & \leq\sigma^{j}(1+\gamma)^{|T|}\sum_{y^{S}}\sigma^{w(y^{S})}\\
 & =\sigma^{j}(1+\gamma)^{|T|}(1+\sigma)^{|S|}\\
 & \leq\sigma^{j}(1+\gamma)^{n/4t}(1+\sigma)^{(n-j)-n/4t}.
\end{align*}
The last $\leq$ holds since $\sigma>\gamma$ and $|T|\geq n/4t$.
On the other hand, 
\begin{align*}
2^{n-j}v_{\phi}(1^{j}*^{n-j}) & =\sum_{y:1^{j}y\in E}\sigma^{j+w(y)}\\
 & \geq\sigma^{j}\frac{(1+\sigma)^{n-j}}{4}.
\end{align*}
So then it suffices to show 
\begin{align*}
\sigma^{j}(1+\gamma)^{n/4t}(1+\sigma)^{(n-j)-n/4t} & <\sigma^{j}\frac{(1+\sigma)^{n-j}}{4}\\
\iff(1+\gamma)^{n/4t} & <\frac{(1+\sigma)^{n/4t}}{4}\\
\iff4 & <\left(\frac{1+\sigma}{1+\gamma}\right)^{n/4t}.
\end{align*}
Since $\sigma>\gamma$ when $\phi\in(\pi/4,\pi/2]$, the last inequality
holds for $n/t$ large enough.

\subsubsection{Proof of Lemma \ref{lemma:low-degree-loses-all} }

The proof is nearly identical to the proof of Lemma \ref{lemma:low-degree-loses}.
The hypothesis implies the existence of an independent set of size
$\geq(n-t)/(t+1)\geq(n-t)/2t$ in the graph consisting of all the
variables. Following the same logic as before, we can upper bound
$c(p,*^{n})$ by
\begin{align*}
2^{n}c(p,*^{n}) & \leq(1+\gamma)^{(n-t)/2t}(1+\sigma)^{n-(n-t)/2t}.
\end{align*}
On the other hand, 
\begin{align*}
2^{n}v_{\phi} & \geq\frac{(1+\sigma)^{n}}{2}.
\end{align*}
Then it suffices to show 
\begin{align*}
(1+\gamma)^{(n-t)/2t}(1+\sigma)^{n-(n-t)/2t} & <(1-(\sigma-\gamma))\frac{(1+\sigma)^{n}}{2}\\
\iff\frac{2}{(1-(\sigma-\gamma))} & <\left(\frac{1+\sigma}{1+\gamma}\right)^{(n-t)/2t}.
\end{align*}
Recall that $n/t$ is arbitrarily large, so $(n-t)/2t$ is also arbitrarily
large and the inequality holds.

\subsection{Proof of Lemma \ref{lem:e2-upper-bound} \label{sec:Proof-of-Lemma-e2}}

We can perform a similar analysis as in the proof of Lemma \ref{lem:=00005BContributions-of-symmetric}.
As before $c_{y}(p)=0$ if $w(y)$ is odd. But now if $w(y)$ even,
letting $T$ denote the set of variables that appear in the linear
polynomial $\ell$, the contribution is
\begin{align*}
c_{y}(p) & =(-1)^{-w(y)/2+w(y^{T})}\cdot((-\sqrt{-1})\sigma)^{w(y)}\\
 & =(-1)^{w(y^{T})}\sigma{}^{w(y)}.
\end{align*}
So a derivative makes a positive contribution if $w(y)$ is even and
$w(y^{T})$ is even, and a negative one if $w(y)$ is even and $w(y^{T})$
is odd. Let $U$ be the complement of $T$. By hypothesis, $1\leq|T|,|U|\leq n-1$.
We can sum over the positive contributions and subtract the negative
ones to get the expression
\begin{align*}
2^{n}\cdot\E_{y}c_{y}(p) & =\sum_{y^{T}\in E,y^{U}\in E}\sigma{}^{w(y)}-\sum_{y^{T}\in O,y^{U}\in O}\sigma{}^{w(y)}.
\end{align*}
On the other hand, 
\[
2^{n}\cdot v_{\phi}=\sum_{y^{T}\in E,y^{U}\in E}\sigma{}^{w(y)}+\sum_{y^{T}\in O,y^{U}\in O}\sigma{}^{w(y)}.
\]
Combining the two expressions and letting $a=(1+\sigma),b=(1-\sigma)$,
we get 
\begin{align*}
2^{n}\left(v_{\phi}-\E_{y}c_{y}(p)\right) & =2\sum_{y^{T}\in O,y^{U}\in O}\sigma{}^{w(y)}\\
 & =\frac{1}{2}\left(a^{|T|}-b^{|T|}\right)\left(a^{|U|}-b^{|U|}\right)\\
 & \ge\frac{1}{2}\frac{a^{|T|}}{2}\frac{a^{|U|}}{2}\\
 & =\frac{a^{n}}{8}.
\end{align*}
The second $=$ follows by Claim \ref{claim:=00005BOdd-even-sum-claim=00005D},
and the $\geq$ after that follows since $1\leq|T|,|U|$ by hypothesis
and $2b<a$.

\subsection{Proof of Lemma \ref{thm:c(p)-upper-bound-easy}}

Since $p$ is not linear there is at least one node with degree $\geq1$
in the polynomial graph. Let us denote this node $x_{1}$ for ease,
and let $T,U$ denote the nodes connected, not connected to $x_{1}$
respectively. We write $y=(y^{T},y^{U})$ for the corresponding bits
of $y$. Just like in the proof of Theorem \ref{thm:c(p)-upper-bound}
we condition on the value of $y_{1}$ to get
\begin{align*}
c(p,*^{n}) & =\frac{1}{2}\left(c(p,0*^{n-1})+c(p,1*^{n-1})\right).
\end{align*}

We bound the second term by applying Claim \ref{claim:trivial-maximum}
which says $c_{ry}(p)\leq\gamma^{w(ry)}$ using that $\phi\in[0,\pi/4]$:
\begin{align*}
2^{n-1}c(p,1*^{n-1}) & \leq\sum_{y\in\zo^{n-1}}\gamma^{1+w(y)}\\
 & =\gamma\left(1+\gamma\right)^{n-1}.
\end{align*}

To deal with the first term, we proceed similarly as we did in the
proof of Lemma \ref{lem:odd-even-general}. Note that $|T|\geq1$,
and if $w(y^{T})$ is odd then $c_{1y}(p)=0$. If $w(y^{T})$ is even
then as before we use the bound $c_{ry}(p)\leq\gamma^{w(ry)}.$ These
two things yield
\begin{align*}
2^{n-1}c(p,0*^{n-1}) & \leq\sum_{y^{T}\in E,y^{U}}\gamma{}^{w(y)}\\
 & =\left(\frac{(1+\gamma)^{|T|}+(1-\gamma)^{|T|}}{2}\right)(1+\gamma)^{|U|}\\
 & \leq3/4(1+\gamma)^{n-1}.
\end{align*}
The last $\leq$ follows as $|T|\geq1$ and $1-\gamma<\frac{1+\gamma}{2}$
when $1/\sqrt{2}\leq\gamma$. Altogether this gives
\[
2^{n}c(p,*^{n})\leq(3/4+\gamma)(1+\gamma)^{n-1}.
\]

So it only remains to show $(3/4+\gamma)\le(1-\Omega(1))(1+\gamma)$
which holds because $\gamma\le1$.

\subsection{Proof of Lemma \ref{lem:e2-upper-bound-easy}}

Let $T$ denote the set of variables that appear in the linear polynomial
$p$ and let $U$ denote the remaining variables. Applying the same
logic as in Example \ref{exa:cor} we have
\begin{align*}
\E_{y}c_{y}(p) & =\left(\frac{1-\gamma}{2}\right)^{|T|}\left(\frac{1+\gamma}{2}\right)^{|U|}\\
 & \leq\left(\frac{1-\gamma}{2}\right)\left(\frac{1+\gamma}{2}\right)^{n-1}.
\end{align*}
The $\leq$ follows since $|T|\geq1$ and $1+\gamma>1-\gamma$ when
$\phi\in[0,\pi/4]$.

So it only remains to show $(1-\gamma)\le(1-\Omega(1))(1+\gamma)$
which holds because $\gamma\ge1/\sqrt{2}$.

\section{Boolean correlation \label{sec:B}}

In this section we prove Theorem \ref{thm:boolean-theorem-modm}.
Recall that $C_{\phi}$ is defined as the absolute value of a sum.
We need to analyze this sum more carefully, so we define it next.
\begin{defn}
$E_{\phi}(p):=\E_{x\in\zo^{n}}(-1)^{p(x)}\omega^{\sum_{i}x_{i}}$.
Note that $|E_{\phi}(p)|=C_{\phi}(p)$. 
\end{defn}

We now give an overview of the upcoming technical results. In the
proof of Theorem \ref{thm:boolean-theorem-modm}, we will use Lemma
\ref{lem:b-bounds-easy}, which relates $B_{m}(p)$ to the quantity
$|Real(E_{\phi}(p))|$ for a specific angle $\phi$, and Corollary
\ref{cor:e_phi_bounds}, which allows us to compute $|Real(E_{\phi}(s))|$
for $s=e^{2},e^{2}+e^{1}$. Together these two results will enable
us to compute $B_{m}(s)$ for $s=e^{2},e^{2}+e^{1}$. 

On the other hand, combining Lemma \ref{lem:b-bounds-easy} with Theorem
\ref{thm:main-general} lets us bound $B_{m}(p)$ when $p$ is not
symmetric, since Theorem \ref{thm:main-general} bounds $C_{\phi}(p)$
and $|Real(E_{\phi}(p))|\leq|E_{\phi}(p)|=C_{\phi}(p)$.

Proposition \ref{prop:B-expr-general} and Claims \ref{claim:mod-m-balanced},
\ref{claim:other-angles-lose} are used to prove Lemma \ref{lem:b-bounds-easy},
and Lemma \ref{lem:exact-expr-general} is needed for Corollary \ref{cor:e_phi_bounds}.

For the rest of the section, fix any odd $m\geq3$, set $\phi=2\pi/m$,
$\omega=e^{\phi\sqrt{-1}}$. We start with the following standard
fact:
\begin{prop}
\label{prop:B-expr-general}Let $b$ be the fraction of $n$-bit strings
whose weight is divisible by $m$. For any $p$,
\[
B_{m}(p)=\frac{1}{b(1-b)}\left|\frac{2}{m}\cdot\sum_{k=1}^{(m-1)/2}Real(E_{k\phi}(p))+\frac{1}{m}-b\right|
\]
where $Real(z)$ denotes the real part of the complex number $z$.
\end{prop}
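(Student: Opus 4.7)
The plan is to convert the two conditional expectations in $B_m(p)$ into an unconditional expectation involving the indicator of $A:=\{x:\sum_i x_i\equiv 0\bmod m\}$, and then expand that indicator via the standard character sum
\[
\mathbf{1}_A(x) \;=\; \frac{1}{m}\sum_{k=0}^{m-1}\omega^{k\sum_i x_i}.
\]
This rewrites $B_m(p)$ in terms of the $E_{k\phi}(p)$; real-valuedness of $(-1)^{p(x)}$ together with the hypothesis that $m$ is odd then does the rest.

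Concretely, first I would set $\mu:=\E_x(-1)^{p(x)} = E_0(p)$ and $\alpha:=\E_x[(-1)^{p(x)}\mathbf{1}_A(x)]$. Since $\Pr[A]=b$, a routine rearrangement of the two conditional expectations in the definition of $B_m$ yields
\[
B_m(p) \;=\; \left|\frac{\alpha}{b}-\frac{\mu-\alpha}{1-b}\right| \;=\; \frac{|\alpha-b\mu|}{b(1-b)},
\]
so the task reduces to expressing $\alpha$ in terms of the $E_{k\phi}(p)$.

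Applying the character identity above gives $\alpha=\tfrac{1}{m}\sum_{k=0}^{m-1}E_{k\phi}(p)$. Because $(-1)^{p(x)}$ is real, $\overline{E_{k\phi}(p)}=E_{(m-k)\phi}(p)$; and because $m$ is odd, the nonzero indices $1,\dots,m-1$ pair up into exactly $(m-1)/2$ conjugate pairs $\{k,m-k\}$ with no self-conjugate index. Hence
\[
\sum_{k=1}^{m-1}E_{k\phi}(p) \;=\; 2\sum_{k=1}^{(m-1)/2} Real(E_{k\phi}(p)),
\]
while the $k=0$ summand contributes $\mu$. Substituting back gives
\[
\alpha - b\mu \;=\; \Bigl(\tfrac{1}{m}-b\Bigr)\mu \;+\; \tfrac{2}{m}\sum_{k=1}^{(m-1)/2} Real(E_{k\phi}(p)),
\]
which yields the claimed identity after dividing by $b(1-b)$.

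There is no real obstacle here beyond elementary bookkeeping with two standard identities (the conditional-expectation rewrite and the character decomposition of $\mathbf{1}_A$). The only substantive use of the hypothesis is that $m$ is odd, which is exactly what lets every nonzero character of $\mathbb{Z}/m\mathbb{Z}$ pair with a distinct conjugate; otherwise the pairing step would leave an unpaired self-conjugate term $E_\pi(p)$ at $k=m/2$.
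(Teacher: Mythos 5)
Your proof is correct and takes essentially the same route as the paper: the paper's $s(w(x))=\sum_{j=0}^{m-1}\omega^{jw(x)}$ is exactly $m$ times your indicator $\mathbf{1}_A(x)$, and its proof likewise concludes by pairing $\omega^{jw(x)}$ with $\omega^{(m-j)w(x)}$, using that $m$ is odd so no index is self-conjugate. One fine point: your (correct) computation gives $\alpha-b\mu=\frac{2}{m}\sum_{k=1}^{(m-1)/2}Real(E_{k\phi}(p))+\left(\frac{1}{m}-b\right)\E_{x}(-1)^{p(x)}$, i.e.\ the exact identity carries the factor $\E_{x}(-1)^{p(x)}$ on the $\left(\frac{1}{m}-b\right)$ term, whereas the proposition as stated (and the paper's own final ``factoring out'' step) drops that factor; so your closing sentence, asserting that this ``yields the claimed identity,'' silently discards it and the displayed formula is not a literal equality in general (e.g.\ $p=e^{1}$ has $\E_{x}(-1)^{p(x)}=0$). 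The discrepancy is harmless for everything that follows, since $|1/m-b|\le\cos(\pi/m)^{n}$ by Claim \ref{claim:mod-m-balanced} and this term is only ever treated as an exponentially small error, but a literally correct statement should either retain the factor $\E_{x}(-1)^{p(x)}$ or assert the identity only up to an additive error of $|1/m-b|$.
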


\begin{proof}
Let $s(k):=\sum_{j=0}^{m}\omega^{jk}=1+\omega^{k}+\dots+\omega^{(m-1)k}$
and note that $s(k)=m$ if $k\equiv0\mod m$ and $s(k)=0$ otherwise.
Using this notation we can write
\[
B_{m}(p)=\left|\E_{x}(-1)^{p(x)}\frac{s(w(x))}{m}\cdot\frac{1}{b}-\E_{x}(-1)^{p(x)}\left(1-\frac{s(w(x))}{m}\right)\cdot\frac{1}{1-b}\right|.
\]
Collecting terms this is
\[
\left|\E_{x}(-1)^{p(x)}\left(\frac{s(w(x))}{m}\cdot\frac{1}{b}-\left(1-\frac{s(w(x))}{m}\right)\cdot\frac{1}{1-b}\right)\right|.
\]
Using the definition of $s$ this equals
\[
\left|\E_{x}(-1)^{p(x)}\left[\left(\sum_{j=1}^{m}\omega^{jw(x)}\right)\left(\frac{1}{mb}+\frac{1}{m(1-b)}\right)+\frac{1}{mb}-\left(1-\frac{1}{m}\right)\frac{1}{1-b}\right]\right|.
\]
Also, 
\[
\frac{1}{mb}-\left(1-\frac{1}{m}\right)\frac{1}{1-b}=\frac{1-mb}{mb(1-b)}.
\]
Furthermore, $\omega^{jw(x)}+\omega^{(m-j)w(x)}=2Real(\omega^{jw(x)})$
for each $j$. After factoring out $1/b(1-b)$ the result follows.
\end{proof}
Observe that in the statement of Lemma \ref{prop:B-expr-general},
if we replaced $b$ with $1/m$ then the terms that don't multiply
$\omega$ would be $0$. However, $b\ne1/m$ but it will be very close.
We use the following bound \footnote{When $m=3$ the claim says $|b-1/m|<2^{-n}$ but we do not use this.}
that's implicit in \cite{BoppanaHLV19}.
\begin{claim}
\label{claim:mod-m-balanced}$|b-1/m|<\cos(\pi/m)^{n}.$
\end{claim}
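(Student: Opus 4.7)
The plan is to use the standard roots-of-unity filter to write $b$ as an average of geometric series, isolate the main term $1/m$, and then bound the remaining oscillatory terms.

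First I would write $b = \mathbb{E}_x\bigl[\mathbb{1}[w(x) \equiv 0 \bmod m]\bigr]$ and apply the identity $\mathbb{1}[j \equiv 0 \bmod m] = \frac{1}{m}\sum_{k=0}^{m-1} \omega^{kj}$ to obtain
\[
b \;=\; \frac{1}{m}\sum_{k=0}^{m-1} \mathbb{E}_x\bigl[\omega^{k w(x)}\bigr].
\]
Since the bits of $x$ are independent, $\mathbb{E}_x[\omega^{k w(x)}] = \bigl(\frac{1+\omega^k}{2}\bigr)^n$. The $k=0$ term contributes exactly $1/m$, so
\[
b - \frac{1}{m} \;=\; \frac{1}{m}\sum_{k=1}^{m-1} \left(\frac{1+\omega^k}{2}\right)^{\!n}.
\]

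Next I would use the elementary identity $\bigl|\frac{1+\omega^k}{2}\bigr| = \bigl|\cos(k\pi/m)\bigr|$ (which follows from $1 + e^{i\theta} = 2\cos(\theta/2)e^{i\theta/2}$ with $\theta = 2\pi k/m$). Since $m$ is odd, the terms for $k$ and $m-k$ are complex conjugates, so pairing them and applying the triangle inequality gives
\[
\left|b - \frac{1}{m}\right| \;\le\; \frac{2}{m}\sum_{k=1}^{(m-1)/2} \cos(k\pi/m)^{n}.
\]

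Finally, I would observe that on the range $1 \le k \le (m-1)/2$ the function $\cos(k\pi/m)$ is maximized at $k=1$, so each term is at most $\cos(\pi/m)^n$. Since there are exactly $(m-1)/2$ terms, the bound becomes
\[
\left|b - \frac{1}{m}\right| \;\le\; \frac{2}{m}\cdot\frac{m-1}{2}\cdot\cos(\pi/m)^n \;=\; \frac{m-1}{m}\cos(\pi/m)^n \;<\; \cos(\pi/m)^n,
\]
which is exactly the claim. No step is an obstacle; the only mild care needed is handling the cancellation from the complex phases, but the triangle inequality after conjugate-pairing is more than enough to absorb the $1/m$ prefactor into strict inequality.
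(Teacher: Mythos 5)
Your proof is correct. Note that the paper itself does not prove this claim at all: it is stated as a bound ``implicit in'' the cited work of Boppana, H{\aa}stad, Lee, and Viola, so there is no in-paper argument to compare against. Your roots-of-unity filter derivation is the standard way to establish it, and every step checks out: the filter identity gives $b-\frac{1}{m}=\frac{1}{m}\sum_{k=1}^{m-1}\bigl(\frac{1+\omega^{k}}{2}\bigr)^{n}$, the modulus computation $\bigl|\frac{1+\omega^{k}}{2}\bigr|=|\cos(k\pi/m)|$ is right, the conjugate pairing over $k$ and $m-k$ (valid since $m$ is odd, so there is no self-paired index) together with the triangle inequality yields $\frac{2}{m}\sum_{k=1}^{(m-1)/2}\cos(k\pi/m)^{n}$, and bounding each term by the $k=1$ term gives the factor $\frac{m-1}{m}<1$, which makes the inequality strict because $\cos(\pi/m)>0$ for $m\ge3$. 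As a sanity check, your bound specializes at $m=3$ to $|b-1/3|<2^{-n}$, exactly matching the footnote in the paper.
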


Now let $\ell_{1}\in\{\frac{m-1}{4},\frac{m+1}{4}\}$ denote the integer
closest to $\frac{m}{4}$. The next result suggests we should focus
on $Real(E_{\ell_{1}\phi}(p))$. 
\begin{claim}
\label{claim:other-angles-lose}Fix any odd $m\geq3$ and $k\in\{1...,(m-1)/2\}:k\neq\ell_{1}$.
Then for all large enough $n$ and any quadratic $p$,
\begin{align*}
\left|Real(E_{k\phi}(p))\right| & =o(\sqrt{v_{\ell_{1}\phi}}).
\end{align*}
\end{claim}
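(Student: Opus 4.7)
The plan is to use the trivial bound $|Real(E_{k\phi}(p))| \le |E_{k\phi}(p)| = C_{k\phi}(p)$, so it suffices to prove $C_{k\phi}(p) = o(\sqrt{v_{\ell_1\phi}})$ for any quadratic $p$. I will apply Theorem \ref{thm:main-general} after reducing the angle to $[0,\pi/2]$ using the symmetry identity $C_\psi(p) = C_{\pi-\psi}(p+e^1)$ recalled in Section \ref{sec:Derivatives}: if $k\phi > \pi/2$, let $\phi' := \pi - k\phi$; otherwise $\phi' := k\phi$. Either way $\sin\phi' = \sin(k\phi)$ and $\cos\phi' = |\cos(k\phi)|$, and $\phi' \in (0, \pi/2)$ (the boundary values $\phi' \in \{0, \pi/4, \pi/2\}$ are excluded because $m$ is odd, so $m/4$, $m/8$, and $m/4$ are never integers).

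Now invoke Theorem \ref{thm:main-general}. If $\phi' \in (\pi/4, \pi/2)$, Item 1 gives $C_{k\phi}(p) \le \sqrt{v_{\phi'} + (\cos\phi'/2)^n} = \sqrt{v_{\phi'}}(1 + o(1))$, since $\sin\phi' > \cos\phi'$ makes $v_{\phi'} \ge 2^{-n-1}(1+\sin\phi')^n$ exponentially dominate $(\cos\phi'/2)^n$. If $\phi' \in (0, \pi/4)$, Item 2 gives $C_{k\phi}(p) \le ((1+\cos\phi')/2)^{n/2}$. In either case the upper bound takes the form $((1+\alpha)/2)^{n/2}$ (up to a $1+o(1)$ factor), where $\alpha$ equals $\sin(k\phi)$ in the first regime and $|\cos(k\phi)|$ in the second.

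To finish, use $\sqrt{v_{\ell_1\phi}} \ge (1/\sqrt{2}) \cdot ((1+\sin(\ell_1\phi))/2)^{n/2}$; it then suffices to show $\alpha < \sin(\ell_1\phi)$ with a gap depending only on $m$, for then the ratio is $(1-\Omega_m(1))^{n/2}$. Writing $|\pi/2 - j\phi| = (\pi/m)|m-4j|/2$ and noting that $|m - 4j|$ is always odd (since $m$ is odd), we have $|m-4\ell_1| = 1$, and since $k \ne \ell_1$ we get $|m-4k| \ge 3$. Hence, in the first regime, $\sin(k\phi) = \cos(|\pi/2 - k\phi|) \le \cos(3\pi/(2m)) < \cos(\pi/(2m)) = \sin(\ell_1\phi)$. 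In the second regime, $k \in \{1,\dots,(m-1)/2\}$ forces $|m-4k| \le m-2$, so $|\cos(k\phi)| = \sin(|\pi/2 - k\phi|) \le \sin(\pi/2 - \pi/m) = \cos(\pi/m) < \cos(\pi/(2m)) = \sin(\ell_1\phi)$. Both gaps are $m$-dependent constants, so the ratio $C_{k\phi}(p)/\sqrt{v_{\ell_1\phi}}$ is $o(1)$.

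The main technical step is the trigonometric bookkeeping in the last paragraph; the underlying point is that $\ell_1$ is uniquely characterized among positive integers by $|m - 4\ell_1| = 1$, which separates $\ell_1\phi$ from every other $k\phi$ in the relevant sense by a positive $m$-dependent margin. There is no conceptual obstacle, only the need to verify that the angle reduction $k\phi \mapsto \pi - k\phi$ preserves the $\sin$/$\cos$ data needed to invoke Theorem \ref{thm:main-general} and to compare against $\sin(\ell_1\phi)$.
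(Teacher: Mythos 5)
Your proof is correct and follows essentially the same route as the paper's: bound $|Real(E_{k\phi}(p))|\le C_{k\phi}(p)$, invoke Theorem \ref{thm:main-general}, and show $\max\{|\sin(k\phi)|,|\cos(k\phi)|\}<\sin(\ell_{1}\phi)$ by an $m$-dependent margin. The only (harmless) differences are presentational: you make the reduction of $k\phi$ to $[0,\pi/2]$ via $C_{\psi}(p)=C_{\pi-\psi}(p+e^{1})$ explicit, and you run the trigonometric comparison through the observation that $|m-4k|$ is odd and $\ge 3$ for $k\ne\ell_{1}$, whereas the paper identifies the extremal cases $k=\ell_{2}$ and $k=(m-1)/2$ directly.
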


\begin{proof}
By Theorem \ref{thm:main-general}, for any $k$ it holds that
\[
C_{k\phi}(p)\leq\max_{s\in\{0,e^{1},e^{2},e^{2}+e^{1}\}}C_{k\phi}(s)\leq\max\left\{ O\left(\left(\frac{1+|\sin(k\phi)|}{2}\right)^{n/2}\right),\left(\frac{1+|\cos(k\phi)|}{2}\right)^{n/2}\right\} .
\]
Next we claim that if $k\in\{1...,(m-1)/2\},k\neq\ell_{1}$ then $\max\{|\sin(k\phi)|,|cos(k\phi)|\}<\sin(\ell_{1}\phi)$.
If this holds we can conclude since $\sqrt{v_{\ell_{1}\phi}}=\Omega((\frac{1+\sin(\ell_{1}\phi)}{2})^{n/2})$
and $|Real(E_{k\phi}(p))|\leq C_{k\phi}(p)$. 

To verify the claim, note for $k\neq\ell_{1}$, $|\sin(k\phi)|$ is
maximized when $k=\ell_{2}$, where $\ell_{2}$ denotes the second
closest integer to $m/4$. Since $m$ is odd, $\ell_{2}\in\{\frac{m-3}{4},\frac{m+3}{4}\}$
which implies $\sin(\ell_{2}\phi)<\sin(\ell_{1}\phi)$. 

And $|\cos(k\phi)|$ is maximized for $k=(m-1)/2$ and $|\cos(k\phi)|=|\cos(\pi-\pi/m)|=\cos(\pi/m)$.
We can now conclude as $\cos(\pi/m)<\sin(\ell_{1}\phi)=\sin(\pi/2\pm\pi/2m)=\cos(\pi/2m)$. 
\end{proof}
The next result, which combines Claim \ref{claim:mod-m-balanced},
\ref{claim:other-angles-lose} with Proposition \ref{prop:B-expr-general},
says we can approximate $B_{m}(p)$ using just $|Real(E_{\ell_{1}\phi}(p))|$.
\begin{lem}
\label{lem:b-bounds-easy} For all large enough $n$ and any quadratic
$p$,
\begin{align*}
\left|B_{m}(p)-\frac{2m}{m-1}\left|Real(E_{\ell_{1}\phi}(p))\right|\right|\le o(\sqrt{v_{\ell_{1}\phi}}).
\end{align*}
For $m=3$ this can be improved to
\[
\left|B_{3}(p)-3\left|Real(E_{2\pi/3}(p))\right|\right|\le O(2^{-n}).
\]
\end{lem}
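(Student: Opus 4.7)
The plan is to start from the exact identity in Proposition \ref{prop:B-expr-general}, single out the dominant $k = \ell_1$ term, and argue that everything else is lower order. Set $R := \mathrm{Real}(E_{\ell_1 \phi}(p))$ and $\delta := 1/m - b$, and define
\[
\epsilon := \frac{2}{m}\sum_{\substack{1\leq k \leq (m-1)/2 \\ k \neq \ell_1}} \mathrm{Real}(E_{k\phi}(p)) + \delta,
\]
so that Proposition \ref{prop:B-expr-general} reads $B_m(p) = \tfrac{1}{b(1-b)}\bigl|\tfrac{2}{m} R + \epsilon\bigr|$. The reverse triangle inequality gives
\[
\Bigl| B_m(p) - \tfrac{2}{m \, b(1-b)} |R| \Bigr| \leq \tfrac{|\epsilon|}{b(1-b)}.
\]
So the job reduces to (a) bounding $|\epsilon|$ and (b) replacing the prefactor $\tfrac{2}{m b(1-b)}$ with $\tfrac{2m}{m-1}$ up to an acceptable error.

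For (a), Claim \ref{claim:other-angles-lose} handles the $O(1)$ summands with $k \neq \ell_1$, each of magnitude $o(\sqrt{v_{\ell_1\phi}})$, and Claim \ref{claim:mod-m-balanced} gives $|\delta| \le \cos(\pi/m)^n$. The one nontrivial check is that $\cos(\pi/m)^n = o(\sqrt{v_{\ell_1\phi}})$. Since $\sqrt{v_{\ell_1 \phi}} = \Omega\bigl(((1+\sin(\ell_1 \phi))/2)^{n/2}\bigr)$, it suffices to verify $\cos(2\pi/m) < \sin(\ell_1 \phi)$; because $\ell_1$ is the integer nearest $m/4$, we have $|\ell_1 \phi - \pi/2| \leq \pi/(2m)$, so $\sin(\ell_1\phi) \geq \cos(\pi/(2m)) > \cos(2\pi/m)$ for every odd $m \geq 3$. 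This is the most error-prone piece but is routine trigonometry.

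For (b), Claim \ref{claim:mod-m-balanced} gives $b(1-b) = (m-1)/m^2 + O(\delta)$ and hence $\tfrac{2}{m \, b(1-b)} = \tfrac{2m}{m-1} + O(\delta)$. Theorem \ref{thm:main-general} gives $|R| \leq C_{\ell_1 \phi}(p) \leq (1+o(1))\sqrt{v_{\ell_1 \phi}}$, so the replacement error is $O(\delta) \cdot \sqrt{v_{\ell_1\phi}} = o(\sqrt{v_{\ell_1\phi}})$. Combining with the bound on $|\epsilon|$ from step (a) yields the first inequality of the lemma.

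The $m=3$ refinement is essentially free. When $m=3$, one has $(m-1)/2 = 1 = \ell_1$, so the sum over $k \neq \ell_1$ is empty and $\epsilon = \delta$ with $|\delta| \leq \cos(\pi/3)^n = 2^{-n}$. The prefactor approximation contributes at most $O(\delta) |R| = O(2^{-n})$ since $|R| \leq 1$ trivially. The overall bookkeeping obstacle, if any, is just tracking the three independent error sources ($\delta$, the non-$\ell_1$ terms, and the prefactor approximation) and verifying that $\cos(\pi/m)^n$ decays strictly faster than $\sqrt{v_{\ell_1\phi}}$; none of these is conceptually deep.
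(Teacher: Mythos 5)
Your proposal is correct and follows essentially the same route as the paper: both start from the exact expression in Proposition \ref{prop:B-expr-general}, isolate the $k=\ell_{1}$ term, bound the remaining real parts via Claim \ref{claim:other-angles-lose}, and use Claim \ref{claim:mod-m-balanced} (together with the check $\cos(\pi/m)^{n}=o(\sqrt{v_{\ell_{1}\phi}})$, which your double-angle computation verifies) to replace $1/(b(1-b))$ by $m^{2}/(m-1)$. Your explicit bookkeeping for the $m=3$ refinement, which the paper states but does not spell out, is also correct.
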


\begin{proof}
By Claim \ref{claim:mod-m-balanced} and noting that $\cos(\pi/m)^{n}=o(\sqrt{v_{\ell_{1}\phi}})$
we have 
\begin{align*}
\left|\frac{1}{b(1-b)}-\frac{m^{2}}{m-1}\right|=o(\sqrt{v_{\ell_{1}\phi}}).
\end{align*}
Applying the triangle inequality and Claim \ref{claim:other-angles-lose}
we also have
\[
\bigg|\big|\sum_{k=1}^{(m-1)/2}Real(E_{k\phi}(p))\big|-\big|Real(E_{\ell_{1}\phi}(p))\big|\bigg|\leq\sum_{k\neq\ell_{1}}\big|Real(E_{k\phi}(p))\big|\leq m\cdot o(\sqrt{v_{\ell_{1}\phi}}).
\]
Inserting the previous two inequalities into Lemma \ref{prop:B-expr-general}
implies
\[
\left|B_{m}(p)-2m/(m-1)|Real(E_{\ell_{1}\phi}(p))|\right|\leq O(m)o(\sqrt{v_{\ell_{1}\phi}}).
\]
We can now conclude since we consider $m$ fixed.
\end{proof}
We are naturally interested in computing $B_{m}(s)$ for $s=e^{2},e^{2}+e^{1}$
and the next lemma allows us to do so by giving an expression for
$E_{\ell_{1}\phi}(s)$. In Section \ref{sec:symmetry} we determined
$C_{\ell_{1}\phi}(s)=|E_{\ell_{1}\phi}(s)|$, but this no longer suffices
as we need to understand the angle of $E_{\ell_{1}\phi}(s)$ in order
to compute $|Real(E_{\ell_{1}\phi}(s))|$.
\begin{lem}
\label{lem:exact-expr-general} For any $k\in\{1,2,\dots,m-1\}$ we
have:
\begin{align*}
E_{k\phi}(e^{2}) & =2^{-(n+1)}\left[(1+i)(1-i\omega^{k})^{n}+(1-i)(1+i\omega^{k})^{n}\right],\\
E_{k\phi}(e^{2}+e^{1}) & =2^{-(n+1)}\left[(1-i)(1-i\omega^{k})^{n}+(1+i)(1+i\omega^{k})^{n}\right].
\end{align*}
\end{lem}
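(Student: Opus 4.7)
The key observation is that $e^2(x)=\sum_{i<j} x_ix_j$ counts the number of pairs of $1$'s in $x$, so $e^2(x)\equiv \binom{w(x)}{2}\bmod 2$. Hence $(-1)^{e^2(x)}$ depends only on $w(x)$, and writing $z:=\omega^k$ we have
\[
E_{k\phi}(e^2)=2^{-n}\sum_{w=0}^{n}\binom{n}{w}(-1)^{\binom{w}{2}}z^w,
\qquad
E_{k\phi}(e^2+e^1)=2^{-n}\sum_{w=0}^{n}\binom{n}{w}(-1)^{\binom{w}{2}+w}z^w.
\]
So the entire proof reduces to evaluating two binomial sums in closed form.

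The plan is to express the sign pattern $(-1)^{\binom{w}{2}}$, which is $1,1,-1,-1$ with period $4$ in $w$, as a complex-linear combination of $i^w$ and $(-i)^w$. I claim
\[
(-1)^{\binom{w}{2}}=\tfrac{1}{2}\bigl[(1+i)(-i)^w+(1-i)\,i^w\bigr].
\]
Since both sides are periodic in $w$ with period $4$, it suffices to check the identity for $w=0,1,2,3$, which is immediate. (Alternatively, one recognizes the right-hand side as a discrete inverse Fourier transform on $\Z/4\Z$ of the sequence $1,1,-1,-1$.)

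Plugging the identity into the sum and applying the binomial theorem to each of the two resulting geometric-type sums yields
\[
\sum_{w=0}^n\binom{n}{w}(-1)^{\binom{w}{2}}z^w=\tfrac{1}{2}\bigl[(1+i)(1-iz)^n+(1-i)(1+iz)^n\bigr],
\]
which after multiplying by $2^{-n}$ is exactly the claimed expression for $E_{k\phi}(e^2)$. For the second identity I observe that multiplying the summand by $(-1)^w$ is the same as replacing $z$ with $-z$ in the above formula, which sends $(1\mp iz)^n$ to $(1\pm iz)^n$ and therefore swaps the roles of the coefficients $(1+i)$ and $(1-i)$. This produces precisely the stated formula for $E_{k\phi}(e^2+e^1)$.

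There is no real obstacle here: once one spots the right four-term decomposition of $(-1)^{\binom{w}{2}}$, everything else is a routine application of the binomial theorem. The only step that requires any thought is finding that decomposition, and even this is forced by the period-$4$ structure of the sign pattern together with the need to match the binomial sum $\sum_w\binom{n}{w}\alpha^w z^w=(1+\alpha z)^n$.
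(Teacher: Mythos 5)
Your proof is correct and follows essentially the same route as the paper: both reduce $E_{k\phi}(e^2)$ to the binomial sum $\sum_w\binom{n}{w}(-1)^{\binom{w}{2}}z^w$ and resolve the period-$4$ sign pattern via fourth roots of unity before applying the binomial theorem, the paper doing this by splitting the sum over residues of $w\bmod 4$ with indicator weights $\frac{1\pm i^w}{2}\cdot\frac{1\pm(-1)^w}{2}$, while you package the same computation into the single identity $(-1)^{\binom{w}{2}}=\frac{1}{2}\left[(1+i)(-i)^w+(1-i)i^w\right]$. Your substitution $z\mapsto -z$ for the $e^2+e^1$ case is a clean shortcut for what the paper dismisses as ``similar,'' but it is the same argument in substance.
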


\begin{proof}
We prove Item 1. Since $(-1)^{e^{2}(x)}=(-1)^{\binom{w(x)}{2}}$ we
can write
\begin{align*}
E_{k\phi}(e^{2}) & =\sum_{j=0}^{n}\binom{n}{j}(-1)^{\binom{j}{2}}\omega^{kj}.
\end{align*}
We also have
\begin{align*}
\sum_{j=0\bmod4}\binom{n}{j}\omega^{kj} & =\sum_{j=0}^{n}\binom{n}{j}\omega^{kj}\left(\frac{1+i^{j}}{2}\right)\left(\frac{1+(-1)^{j}}{2}\right)\\
\sum_{j=2\bmod4}\binom{n}{j}\omega^{kj} & =\sum_{j=0}^{n}\binom{n}{j}\omega^{kj}\left(\frac{1-i^{j}}{2}\right)\left(\frac{1+(-1)^{j}}{2}\right).
\end{align*}
So this implies
\[
\sum_{j=0\bmod4}\binom{n}{j}\omega^{kj}-\sum_{j=2\bmod4}\binom{n}{j}\omega^{kj}=\frac{1}{2}\left[(1+\omega^{k}i)^{n}+(1+\omega^{k}(-i))^{n}\right].
\]
Doing the analogous for $j=1,3\mod4$ gives
\[
\sum_{j=1\bmod4}\binom{n}{j}\omega^{kj}-\sum_{j=3\bmod4}\binom{n}{j}\omega^{kj}=\frac{1}{2}\left[-i(1+\omega^{k}i)^{n}+i(1+\omega^{k}(-i))^{n}\right].
\]
The proof of Item 2 is similar.
\end{proof}
The next result reduces the problem of computing $|Real(E_{\ell_{1}\phi}(s)|$
to the problem of computing $|\cos(\chi\pm\pi/4)|$ for a certain
angle $\chi$. The angle $\chi\pm\pi/4$ arises because it is the
angle of the vector $(1\pm i)(1-i\omega^{\ell_{1}})^{n}$, which is
the dominant term in the previous expressions for $E_{\ell_{1}\phi}(s)$.
The last equality below then allows us to relate $|Real(E_{\ell_{1}\phi}(s)|$
to $\sqrt{v_{\ell_{1}\phi}}$.
\begin{cor}
\label{cor:e_phi_bounds} Let $\chi=\frac{n\pi}{4m},-\frac{n\pi}{4m}$
when $\ell_{1}=\frac{m+1}{4},\frac{m-1}{4}$ respectively. Let $\gamma=\sqrt{2}|1-i\omega^{\ell_{1}}|^{n}$.
For all large enough $n$, the following holds:
\begin{enumerate}
\item $\left|2^{n+1}|Real(E_{\ell_{1}\phi}(e^{2}))|-|\cos(\chi+\pi/4)|\gamma\right|=o(1)$
\item $\left|2^{n+1}|Real(E_{\ell_{1}\phi}(e^{2}+e^{1}))|-|\cos(\chi-\pi/4)|\gamma\right|=o(1),$
\item $\left|2^{n+1}\sqrt{v_{\ell_{1}\phi}}-\gamma\right|=o(1).$
\end{enumerate}
\end{cor}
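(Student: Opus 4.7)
The plan is to read off polar forms of $A := 1 - i\omega^{\ell_1}$ and $B := 1 + i\omega^{\ell_1}$, observe that $|B|$ is so much smaller than $|A|$ that the $B^n$ terms in Lemma \ref{lem:exact-expr-general} contribute only negligible error, and then compute the dominant $A^n$ terms essentially in closed form. Writing $\omega^{\ell_1} = \cos(\ell_1\phi) + i\sin(\ell_1\phi)$, one has $A = (1+\sin(\ell_1\phi)) - i\cos(\ell_1\phi)$ and $B = (1-\sin(\ell_1\phi)) + i\cos(\ell_1\phi)$, so $|A|^2 = 2(1+\sin(\ell_1\phi))$ and $|B|^2 = 2(1-\sin(\ell_1\phi))$. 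In particular, $\gamma = \sqrt{2}|A|^n = 2^{(n+1)/2}(1+\sin(\ell_1\phi))^{n/2}$.

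Next I would compute $\arg(A)$. By definition $\ell_1\phi = \pi/2 \pm \pi/(2m)$, so $\sin(\ell_1\phi) = \cos(\pi/(2m))$ and $\cos(\ell_1\phi) = \mp\sin(\pi/(2m))$. Using the half-angle identity $\sin\alpha/(1+\cos\alpha) = \tan(\alpha/2)$, this gives $\tan(\arg A) = -\cos(\ell_1\phi)/(1+\sin(\ell_1\phi)) = \pm\tan(\pi/(4m))$, and hence $\arg(A^n) = \pm n\pi/(4m) = \chi$ in exact agreement with the statement of the corollary. Substituting $A^n = (\gamma/\sqrt{2})\,e^{i\chi}$ and $(1\pm i) = \sqrt{2}\,e^{\pm i\pi/4}$ into Lemma \ref{lem:exact-expr-general} yields
\[
2^{n+1}E_{\ell_1\phi}(e^2) = \gamma\, e^{i(\chi+\pi/4)} + \sqrt{2}\,e^{-i\pi/4}B^n,
\qquad
2^{n+1}E_{\ell_1\phi}(e^2+e^1) = \gamma\, e^{i(\chi-\pi/4)} + \sqrt{2}\,e^{i\pi/4}B^n.
\]
Taking real parts, and applying the triangle inequality $\bigl||z+w|-|z|\bigr|\le|w|$ to absorb the $B^n$ contribution, the error in Items 1 and 2 is bounded by $\sqrt{2}|B|^n = \sqrt{2}(2(1-\sin(\ell_1\phi)))^{n/2}$. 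Since $\sin(\ell_1\phi)=\cos(\pi/(2m))\ge \cos(\pi/6)=\sqrt{3}/2$ for all odd $m\ge 3$, we have $2(1-\sin(\ell_1\phi)) \le 2-\sqrt{3} < 1$, so this error is $o(1)$.

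For Item 3, direct substitution of $v_{\ell_1\phi} = 2^{-(n+1)}[(1+s)^n+(1-s)^n]$ with $s=\sin(\ell_1\phi)$ gives
\[
2^{n+1}\sqrt{v_{\ell_1\phi}} - \gamma = 2^{(n+1)/2}\left(\sqrt{(1+s)^n+(1-s)^n} - (1+s)^{n/2}\right).
\]
Using $\sqrt{a+b}-\sqrt{a}\le b/(2\sqrt{a})$ bounds this by $2^{(n-1)/2}(1-s)^n(1+s)^{-n/2}$, which is $o(1)$ whenever $2(1-s)^2 < 1+s$. A one-line check on the quadratic $2s^2-5s+1$ shows this holds as soon as $s>(5-\sqrt{17})/4\approx 0.22$, which is comfortably satisfied by $s\ge\sqrt{3}/2$.

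The only real obstacle, aside from bookkeeping the two sign cases for $\ell_1$, is getting the argument of $A$ exactly right; the rest is a magnitude comparison combined with a triangle-inequality estimate on the subdominant $B^n$ term. I would therefore present the argument in the order: (i) polar decomposition of $A,B$; (ii) identification $\arg(A^n)=\chi$ via the half-angle identity and a case split on $\ell_1$; (iii) triangle-inequality absorption of the $B^n$ term for Items 1–2; (iv) the elementary $\sqrt{a+b}-\sqrt{a}$ bound for Item 3.
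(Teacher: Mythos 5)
Your proof is correct and, for Items 1 and 2, it follows essentially the same route as the paper: you write $1-i\omega^{\ell_{1}}$ in polar form (your half-angle computation of its argument is the Cartesian version of the paper's observation that $-i\omega^{\ell_{1}}=e^{\sqrt{-1}\pi/2m}$), plug into Lemma \ref{lem:exact-expr-general}, and absorb the subdominant $(1+i\omega^{\ell_{1}})^{n}$ term using $|1+i\omega^{\ell_{1}}|<1$ for odd $m$. The only divergence is Item 3, which the paper gets by combining $|E_{\ell_{1}\phi}(s)|=C_{\ell_{1}\phi}(s)$ with the correlation of symmetric polynomials from Lemma \ref{lem:=00005BContributions-of-symmetric}, whereas you compare $\gamma=2^{(n+1)/2}(1+\sin(\ell_{1}\phi))^{n/2}$ directly with $2^{n+1}\sqrt{v_{\ell_{1}\phi}}$ via the elementary bound $\sqrt{a+b}-\sqrt{a}\le b/(2\sqrt{a})$; this is a slightly more self-contained calculation and is equally valid, since $2(1-s)^{2}<1+s$ certainly holds for $s=\cos(\pi/2m)\ge\sqrt{3}/2$.
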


\begin{proof}
We show the first equality when $\ell_{1}=\frac{m+1}{4}$. The $\ell_{1}=\frac{m-1}{4}$
case is symmetrical.

By definition $\omega^{\ell_{1}}=e^{\sqrt{-1}(2\pi/m)(m+1)/4}=e^{\sqrt{-1}(\pi/2+\pi/2m)}$,
hence $-i\omega^{\ell_{1}}=e^{\sqrt{-1}(\pi/2m)}$. This implies $(1-i\omega^{\ell_{1}})=|1-i\omega^{\ell_{1}}|e^{\sqrt{-1}(\pi/4m)}$.
Additionally, $1+i=\sqrt{2}e^{\sqrt{-1}(\pi/4)}$. So then
\begin{align*}
(1+i)(1-i\omega^{\ell_{1}})^{n} & =\sqrt{2}e^{\sqrt{-1}(\pi/4)}\cdot|1-i\omega^{\ell_{1}}|^{n}e^{\sqrt{-1}(\pi/4m)n}\\
 & =\gamma e^{\sqrt{-1}(n\pi/4m+\pi/4)}.
\end{align*}
We can now conclude by Lemma \ref{lem:exact-expr-general}, the fact
$|Real(e^{\sqrt{-1}\phi})|=|\cos\phi|$ for any $\phi$, and noting
$|1+i\omega^{\ell_{1}}|^{n}=o(1)$ since $|1+i\omega^{\ell_{1}}|<1$
when $m$ is odd. The second inequality is done similarly. 

The third inequality follows by Lemma \ref{lem:exact-expr-general},
the facts $|E_{\ell_{1}\phi}(p)|=C_{\ell_{1}\phi}(p)$, $|1+i\omega^{\ell_{1}}|^{n}=o(1)$,
and since when $s=e^{2},e^{2}+e^{1}$, $\left|C_{\ell_{1}\phi}(s)-\sqrt{v_{\ell_{1}\phi}}\right|\leq o(1)$
by Lemma \ref{lem:=00005BContributions-of-symmetric}.
\end{proof}

\subsection{Proof of Theorem \ref{thm:boolean-theorem-modm}}

\subsubsection{Proof of Item 1}

First we prove the upper bound. Lemma \ref{lem:b-bounds-easy} implies
that 
\[
B_{m}(p)\leq2m/(m-1)\left|Real(E_{\ell_{1}\phi}(p))\right|+o(\sqrt{v_{\ell_{1}\phi}}).
\]
The upper bound now follows since $\left|Real(E_{\ell_{1}\phi}(p))\right|\leq\left|E_{\ell_{1}\phi}(p)\right|=C_{\ell_{1}\phi}(p)\leq(1+o(1))\sqrt{v_{\ell_{1}\phi}}$.
The last inequality holds by Theorem \ref{thm:main-general}.

Next we prove the lower bound by showing 
\begin{equation}
\max_{s\in\{e^{2},e^{2}+e^{1}\}}B_{m}(s)\geq(2m/(m-1)-o(1))\sqrt{\frac{v_{\ell_{1}\phi}}{2}}.\label{eq:boolean_lower_bound}
\end{equation}
Lemma \ref{lem:b-bounds-easy} implies that
\begin{align*}
B_{m}(s) & \geq2m/(m-1)\left|Real(E_{\ell_{1}\phi}(s))\right|-o(\sqrt{v_{\ell_{1}\phi}}).
\end{align*}
Then we claim that for either $s=e^{2}$ or $s=e^{2}+e^{1}$,
\begin{align*}
\left|Real(E_{\ell_{1}\phi}(s))\right| & \geq(1-o(1))\sqrt{\frac{v_{\ell_{1}\phi}}{2}}.
\end{align*}
The previous two inequalities imply Equation \ref{eq:boolean_lower_bound}.

To verify the claim, note that since $\cos(\pi/4)=1/\sqrt{2}$, at
least one of the next two inequalities hold for any angle $\chi$:
\begin{align*}
\cos(\chi+\pi/4) & \geq1/\sqrt{2},\\
\cos(\chi-\pi/4) & \geq1/\sqrt{2}.
\end{align*}
We then conclude by Corollary \ref{cor:e_phi_bounds}.

\subsubsection{Proof of Item 2}

We present the $n\equiv3m\bmod4m,\ell_{1}=\frac{m+1}{4}$ case. In
the proof we show that $E_{\ell_{1}}(e^{2})$ is essentially real,
which means $|Real(E_{\ell_{1}}(e^{2}))|$ equals $\sqrt{v_{\ell_{1}\phi}}$
by Corollary \ref{cor:e_phi_bounds}. On the other hand, for any non-symmetric
$p$, $C_{\ell_{1}\phi}(p)$ is a constant factor smaller than $\sqrt{v_{\ell_{1}\phi}}$
by Theorem \ref{thm:main-general}. This suffices as $|E_{\ell_{1}\phi}(p)|=C_{\ell_{1}\phi}(p)$,
and note the angle of $E_{\ell_{1}\phi}(p)$ does not even matter.

So first we show
\[
B_{m}(e^{2})\geq(2m/(m-1)-o(1))\sqrt{v_{\ell_{1}\phi}}.
\]
This follows by Lemma \ref{lem:b-bounds-easy} and the claim that
\begin{align*}
\left|Real(E_{\ell_{1}\phi}(e^{2}))\right| & \geq(1-o(1))\sqrt{v_{\ell_{1}\phi}}.
\end{align*}

To verify the claim, note when $n\equiv3m\bmod4m$, $n\pi/4m=(3m+k4m)\pi/4m\equiv3\pi/4+k\pi\bmod2\pi$
for some integer $k$. Hence $\cos(n\pi/4m+\pi/4)=\cos((k+1)\pi)=\pm1$.
We then conclude by Corollary \ref{cor:e_phi_bounds}. Note $\cos(n\pi/4m-\pi/4)=0$,
so $B_{m}(e^{2}+e^{1})<B_{m}(e^{2})$.

On the other hand, for any $p\neq e^{2},e^{2}+e^{1}$ we show 
\[
B_{m}(p)\leq2m/(m-1)\sqrt{1-\Omega(\sin(\ell_{1}\phi)-\cos(\ell_{1}\phi))}\cdot\sqrt{v_{\ell_{1}\phi}}.
\]
 This follows by Lemma \ref{lem:b-bounds-easy} and Theorem \ref{thm:main-general}
which states
\[
C_{\ell_{1}\phi}(p)\leq\sqrt{1-\Omega(\sin(\ell_{1}\phi)-\cos(\ell_{1}\phi))}\cdot\sqrt{v_{\ell_{1}\phi}}.
\]
This yields the desired inequality since $\left|Real(E_{\ell_{1}\phi}(p))\right|\leq C_{\ell_{1}\phi}(p)$.

If $p=e^{1},0$ we show 

\begin{equation}
\max_{s\in\{0,e^{1}\}}B_{m}(s)\leq(2m/(m-1))\cdot o(\sqrt{v_{\ell_{1}}}).\label{eq:degree_1_bounds}
\end{equation}
This follows by Lemma \ref{lem:b-bounds-easy} and noting for $s=e^{1},0$,
$C_{\ell_{1}\phi}(s)=(\frac{1+\cos(\ell_{1}\phi)}{2})^{n/2}=o(\sqrt{v_{\ell_{1}\phi}})$
since $\cos(\ell_{1}\phi)<\sin(\ell_{1}\phi).$

The $n\equiv3m,\ell_{1}=\frac{m-1}{4}$ case is similar except we
use $e^{2}+e^{1}$ instead of $e^{2}$. The $n\equiv m$ cases are
analogous. 

\subsubsection{Proof of Item 3}

We present the $n\equiv0\bmod4m,\ell_{1}=\frac{m+1}{4}$ case. First
note that Equations \ref{eq:boolean_lower_bound} and \ref{eq:degree_1_bounds}
imply it suffices to prove $\max_{s\in\{e^{2},e^{2}+e^{1}\}}B_{m}(s)<B_{m}(q)$
for some non-symmetric $q$. We will show that $E_{\ell_{1}\phi}(e^{2}),E_{\ell_{1}\phi}(e^{2}+e^{1})$
are both maximally imaginary as allowed by Equation \ref{eq:boolean_lower_bound}.
Next, consider $q:=x_{1}+e^{2}(x_{2,}\dots,x_{n})$. $C_{\ell_{1}\phi}(q)$
is close to, but less than $C_{\ell_{1}\phi}(s)$ for $s=e^{2},e^{2}+e^{1}$.
However, $E_{\ell_{1}\phi}(q)$ will be more real which is enough
to compensate for this difference and show that $\left|Real(E_{\ell_{1}\phi}(s))\right|<|Real(E_{\ell_{1}\phi}(q))|$.

So first we show that for either $s=e^{2},e^{2}+e^{1}$,

\[
B_{m}(s)\leq(2m/(m-1)+o(1))\cdot\sqrt{\frac{v_{\ell_{1}\phi}}{2}}.
\]
This follows by Lemma \ref{lem:b-bounds-easy} and the claim that
for either $s=e^{2},e^{2}+e^{1},$
\begin{align*}
\left|Real(E_{\ell_{1}\phi}(s))\right| & \leq(1+o(1))\sqrt{\frac{v_{\ell_{1}\phi}}{2}}.
\end{align*}
To verify the claim, since $n\equiv0\bmod4m$, then $n\pi/4m\equiv k\pi\bmod2\pi.$
Hence $\cos(n\pi/4m\pm\pi/4m)=\pm1/\sqrt{2}$. We then conclude by
Corollary \ref{cor:e_phi_bounds}.

On the other hand, we show that 
\[
B_{m}(q)>(2m/(m-1)-o(1))\cdot\frac{(1+\tan(\pi/4m))\sqrt{v_{\ell_{1}\phi}}}{\sqrt{2}}.
\]
 Note $1+\tan(\pi/4m)>1$ for $m\geq3$. The inequality holds by Lemma
\ref{lem:b-bounds-easy} and the claim

\[
\left|Real(E_{\ell_{1}\phi}(q))\right|\geq(1-o(1))\cdot\frac{(1+\tan(\pi/4m))\sqrt{v_{\ell_{1}\phi}}}{\sqrt{2}}.
\]

To show the claim, we start by rewriting $E_{\ell_{1}\phi}(q)$ by
conditioning on $x_{1}$ (below $e^{2}$ is on $n-1$ variables):
\begin{align*}
E_{\ell_{1}\phi}(q) & =\frac{(1-\omega^{\ell_{1}})}{2}E_{\ell_{1}\phi}(e^{2}).
\end{align*}
An analogous version of Corollary \ref{cor:e_phi_bounds} Item 1 holds
for $e^{2}$ on $n-1$ variables:
\begin{align*}
\left|2^{n}|Real(E_{\ell_{1}\phi}(e^{2}))|-\left|\cos\left(\frac{(n-1)\pi}{4m}+\frac{\pi}{4}\right)\right|\frac{\gamma}{|1-i\omega^{\ell_{1}}|}\right| & =o(1).
\end{align*}
Since $-\omega^{\ell_{1}}=e^{\sqrt{-1}(-\pi/2+\pi/2m)}$ we have $(1-\omega^{\ell_{1}})=|1-\omega^{\ell_{1}}|e^{\sqrt{-1}(-\pi/4+\pi/4m)}.$
Combining this with the previous equality implies that 
\begin{align*}
\left|2^{n+1}|Real(E_{\ell_{1}\phi}(q))|-\left|\cos\left(\frac{(n-1)\pi}{4m}+\frac{\pi}{4m}\right)\right|\frac{|1-\omega^{\ell_{1}}|}{|1-i\omega^{\ell_{1}}|}\gamma\right| & =o(1)\\
\iff\left|2^{n+1}|Real(E_{\ell_{1}\phi}(q))|-\frac{|1-\omega^{\ell_{1}}|}{|1-i\omega^{\ell_{1}}|}\gamma\right| & =o(1).
\end{align*}
The $\iff$ follows as $\cos(n\pi/4m)=\pm1$ when $n\equiv0\bmod4m$. 

To conclude, by Corollary \ref{cor:e_phi_bounds} it suffices to show
\[
\frac{1+\tan(\pi/4m)}{\sqrt{2}}=\frac{|1-\omega^{\ell_{1}}|}{|1-i\omega^{\ell_{1}}|}.
\]

Using the identity $|1+e^{\sqrt{-1}\phi}|=2|cos(\phi/2)|$, we have
$|1-i\omega^{\ell_{1}}|=2\cos(\pi/4m)$ and $|1-\omega^{\ell_{1}}|=2|\cos(-\pi/4+\pi/4m)|=2\cos(\pi/4-\pi/4m)=\sqrt{2}(\cos(\pi/4m)+\sin(\pi/4m))$
where the last step holds as $cos(a-b)=\cos a\cos b+\sin a\sin b$.
Hence the equality holds.

The $n\equiv0$, $\ell_{1}=\frac{m-1}{4}$ case is similar except
$q$ will be $e^{2}(x_{2},\dots,x_{n})$ instead. The $n\equiv2m$
cases are analogous. 

\section{\label{sec:symm_vs_mod}Symmetric correlates poorly with mod $m$}

For completeness, we show that symmetric polynomials mod 2 correlate
poorly with the complex mod $m$ function. To get a sense of the parameters
below, fix $m=3$ and apply the identities $\cos x\leq1-x^{2}/6$
and $(1-x)^{n}\leq e^{-xn}$. This yields $C_{\phi}(s)\leq O(d)2^{-\Omega(n/d^{2})}$,
so if Conjecture \ref{conj:sym-best} were true this would imply exponentially
small correlation bounds for any $O(\log n)$ degree polynomial -
a long-standing open problem.
\begin{thm}
\label{thm:symm_corr_mod3}Let $\phi=2\pi k/m$ for some odd $m$
and $k\in\{1,\dots m-1\}$. Then for any degree $d$ symmetric polynomial
$s$,
\[
C_{\phi}(s)\leq2md\cdot\cos\left(\frac{\pi}{2md}\right)^{n}.
\]
\end{thm}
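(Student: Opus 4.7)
The plan is to exploit the classical fact that a symmetric polynomial of degree $d$ in $\F_2[x_1,\ldots,x_n]$ depends on the input only through the Hamming weight $w(x)$ modulo a small power of $2$, and then reduce the correlation to a one-variable calculation via a short discrete Fourier expansion.

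First, $s$ is an $\F_2$-linear combination of the elementary symmetric polynomials $e^0,e^1,\ldots,e^d$, and $e^j(x)\equiv\binom{w(x)}{j}\pmod 2$. By Lucas's theorem, $\binom{w}{j}\bmod 2$ is determined by the lowest $\lceil\log_2(d+1)\rceil$ bits of $w$ for every $j\le d$; hence $(-1)^{s(x)}$ depends only on $w(x)\bmod P$, where $P:=2^{\lceil\log_2(d+1)\rceil}\le 2d$. Since $(-1)^{s(\cdot)}$ is a $\{\pm1\}$-valued function on $\Z/P\Z$, it admits a discrete Fourier expansion
\[
(-1)^{s(w)}=\sum_{j=0}^{P-1}a_j\,\zeta_P^{jw},\qquad\zeta_P:=e^{2\pi\sqrt{-1}/P},
\]
with each $|a_j|\le 1$ (since $a_j$ is an average of unit-length complex numbers).

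Substituting into the definition of $C_\phi$ and using independence of the coordinates to factor the expectation over $i$,
\[
C_\phi(s)\le\sum_{j=0}^{P-1}|a_j|\left|\frac{1+\zeta_P^j\omega}{2}\right|^n=\sum_{j=0}^{P-1}|a_j|\left|\cos\!\left(\frac{\pi j}{P}+\frac{\pi k}{m}\right)\right|^n,
\]
using the identity $|1+e^{\sqrt{-1}\theta}|/2=|\cos(\theta/2)|$. It remains to bound every phase $\pi j/P+\pi k/m$ away from multiples of $\pi$, uniformly in $j$. Writing this phase as $\pi(jm+kP)/(mP)$, note that because $m$ is odd and $P$ is a power of $2$ we have $\gcd(m,P)=1$, so $jm+kP\equiv kP\pmod m$ is nonzero modulo $m$ (as $k\in\{1,\ldots,m-1\}$). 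Hence $(jm+kP)\bmod mP$ lies in $\{1,\ldots,mP-1\}$, meaning the fraction $(jm+kP)/(mP)$ is at distance at least $1/(mP)$ from $\{0,1\}$ modulo $1$. Therefore $|\cos(\pi j/P+\pi k/m)|\le\cos(\pi/(mP))\le\cos(\pi/(2md))$, and summing over $j$ with $|a_j|\le 1$ yields $C_\phi(s)\le P\cos(\pi/(2md))^n\le 2md\cos(\pi/(2md))^n$, as required.

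The only nontrivial step is the arithmetic argument ensuring that no phase $\pi j/P+\pi k/m$ approaches a multiple of $\pi$ too closely; this is where the odd-$m$ hypothesis is essential, since otherwise $m$ and $P$ could share a common factor of $2$ and an appropriate choice of $j$ would place the phase exactly at a multiple of $\pi$, collapsing the bound to $1$. Everything else is a routine DFT-plus-binomial-theorem calculation, and the constant $2md$ records the combined periodicities of $(-1)^{s(w)}$ (period $P\le 2d$) and $\omega^w$ (period $m$).
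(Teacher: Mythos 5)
Your proof is correct, and it takes a genuinely somewhat different route through the middle of the argument, even though the skeleton is the same. Like the paper, you first reduce to the fact that a degree-$d$ symmetric polynomial over $\F_{2}$ depends only on $w(x)\bmod 2^{\delta}$ with $2^{\delta}\le 2d$ (the paper cites [BGL06]; your Lucas-theorem derivation is a fine self-contained substitute). The divergence is in how the periodic function is handled: the paper decomposes $(-1)^{s}$ into weight-class indicators mod $2^{\delta}$, refines each class modulo $m2^{\delta}$, and invokes a near-equidistribution claim for the Hamming weight (the generalization of Claim \ref{claim:mod-m-balanced}) together with the vanishing of $\sum_{j}\omega^{(i+j2^{\delta})k}$; you instead expand $(-1)^{s}$ directly in a discrete Fourier series over $\Z/P\Z$ and bound each twisted binomial sum by $\left|\cos\left(\pi j/P+\pi k/m\right)\right|^{n}$ via the binomial theorem, with an explicit $\gcd(m,P)=1$ argument showing every phase stays at distance at least $\pi/(mP)$ from multiples of $\pi$. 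Your route is more self-contained (it does not need the equidistribution claim, which the paper states without proof) and it isolates precisely where oddness of $m$ enters; it also gives the slightly sharper constant $P\le 2d$ in place of $m2^{\delta}\le 2md$, since the paper's indicator-by-indicator bound pays an extra factor of $m$ that your direct DFT avoids. The only steps worth double-checking are ones you handled correctly: $|a_{j}|\le1$ because each $a_{j}$ is an average of unit vectors, and the distance-to-integer argument is valid because a nonzero residue of $jm+kP$ modulo $mP$ forces distance at least $1/(mP)$, so $|\cos(\pi j/P+\pi k/m)|\le\cos(\pi/(mP))\le\cos(\pi/(2md))$.
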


\begin{proof}
Let $\delta$ be an integer such that $2^{\delta-1}\leq d<2^{\delta}$.
It is shown in \cite{BGL06} that $s(x)$ is determined by the weight
of $x$ mod $2^{\delta}$. Hence we can write
\[
(-1)^{s(x)}=\sum_{i=0}^{2^{\delta}-1}c_{i}\mathbf{1}_{w(x)\equiv i\bmod2^{\delta}}
\]
where $c_{i}\in\{-1,1\}$ for each $i$. Then we can write the correlation
as
\begin{align*}
C_{\phi}(s) & =\left|\E_{x}[Mod_{\phi}(x)\cdot\sum_{i=0}^{2^{\delta}-1}c_{i}\mathbf{1}_{w(x)\equiv i\bmod2^{\delta}}]\right|\\
 & =\left|\sum_{i=0}^{2^{\delta}-1}\E_{x}\left[Mod_{\phi}(x)\cdot c_{i}\mathbf{1}_{w(x)\equiv i\bmod2^{\delta}}\right]\right|.
\end{align*}
Letting $\omega=e^{\sqrt{-1}\cdot2\pi/m}$, for any $i$ we have 
\begin{align*}
\E_{x}\left[Mod_{\phi}(x)\cdot\mathbf{1}_{w(x)\equiv i\bmod2^{\delta}}\right] & =\sum_{j=0}^{m-1}\omega^{(i+j2^{\d})k}\P_{x}[w(x)\equiv i+j2^{\delta}\bmod m2^{\delta}].
\end{align*}
We next use a slightly generalized version of Claim \ref{claim:mod-m-balanced}:
\end{proof}
\begin{claim}
For any $k,m$, $\left|\P_{x}[w(x)\equiv k\bmod m]-1/m\right|\leq\cos(\pi/m){}^{n}.$
\end{claim}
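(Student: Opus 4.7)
The plan is to expand the indicator $\mathbf{1}_{w(x)\equiv k\bmod m}$ in terms of $m$-th roots of unity, use independence of the bits of $x$ to reduce the resulting expectation to a product, and then bound the nonzero Fourier terms by the triangle inequality. This is essentially the same character-sum method used implicitly in Claim~\ref{claim:mod-m-balanced}, just stated in a slightly more general form.

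Concretely, setting $\omega := e^{2\pi\sqrt{-1}/m}$, I would first use the standard identity
\[
\mathbf{1}_{w(x)\equiv k\bmod m} \;=\; \frac{1}{m}\sum_{j=0}^{m-1}\omega^{j(w(x)-k)}.
\]
Taking expectation over uniform $x\in\zo^{n}$ and separating the $j=0$ term gives
\[
\P_{x}[w(x)\equiv k\bmod m] - \frac{1}{m} \;=\; \frac{1}{m}\sum_{j=1}^{m-1}\omega^{-jk}\,\E_{x}\!\left[\omega^{jw(x)}\right].
\]
By independence of the coordinates of $x$,
\[
\E_{x}\!\left[\omega^{jw(x)}\right] \;=\; \prod_{i=1}^{n}\E_{x_{i}}\!\left[\omega^{jx_{i}}\right] \;=\; \left(\frac{1+\omega^{j}}{2}\right)^{\!n}.
\]

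To conclude, I would invoke the identity $\lvert 1+e^{\sqrt{-1}\theta}\rvert = 2\lvert\cos(\theta/2)\rvert$ to obtain $\lvert(1+\omega^{j})/2\rvert = \lvert\cos(\pi j/m)\rvert$, and observe that for $j\in\{1,\dots,m-1\}$ the quantity $\lvert\cos(\pi j/m)\rvert$ is maximized at the endpoints $j=1$ and $j=m-1$, both equal to $\cos(\pi/m)$, by monotonicity of $\lvert\cos\rvert$ about $\pi/2$ on $[0,\pi]$. The triangle inequality then yields
\[
\left|\P_{x}[w(x)\equiv k\bmod m] - \frac{1}{m}\right| \;\leq\; \frac{1}{m}\sum_{j=1}^{m-1}\cos(\pi/m)^{n} \;=\; \frac{m-1}{m}\cos(\pi/m)^{n} \;\leq\; \cos(\pi/m)^{n}.
\]
I do not expect any real obstacle here: the argument is a textbook discrete-Fourier calculation, and the only substantive observation is identifying $j=1,m-1$ as the dominant Fourier frequencies, which is immediate. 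Note that for even $m$ the bound is actually quite wasteful, since $\cos(\pi(m/2)/m)=0$, but the stated bound suffices for the application.
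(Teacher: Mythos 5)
Your proof is correct: the roots-of-unity filter, the factorization $\E_x[\omega^{jw(x)}]=((1+\omega^{j})/2)^{n}$ by independence, the identity $|1+\omega^{j}|/2=|\cos(\pi j/m)|$, and the observation that $j=1,m-1$ dominate with value $\cos(\pi/m)$ together give exactly the stated bound, with even a factor $(m-1)/m$ to spare. The paper itself supplies no proof of this claim — it is stated as a slight generalization of Claim \ref{claim:mod-m-balanced}, which is attributed as implicit in \cite{BoppanaHLV19} — and your character-sum argument is precisely the standard derivation the authors have in mind (the same device already appears in the proof of Proposition \ref{prop:B-expr-general}), so it fills the gap in the intended way.
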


\begin{proof}
Combining this with the fact $\sum_{j=0}^{m-1}\omega^{(i+j2^{\d})k}=0$
implies that 
\[
\left|\E_{x}\left[Mod_{\phi}(x)\cdot\mathbf{1}_{w(x)\equiv i\bmod2^{\delta}}\right]\right|\leq m(\cos(\pi/m2^{\delta}))^{n}.
\]
Hence
\[
C_{\phi}(s)\leq\sum_{i=0}^{2^{\delta}-1}\left|\E_{x}\left[Mod_{\phi}(x)\cdot c_{i}\mathbf{1}_{w(x)\equiv i\bmod2^{\delta}}\right]\right|\leq m2^{\delta}\cos(\pi/m2^{\delta}))^{n}.
\]
We can now conclude the proof since $2^{\d}\leq2d$.
\end{proof}

\section{\label{sec:structured-cubic} Structured cubic loses to quadratic}

In this section we show that any cubic polynomial with a symmetric
degree 3 part has correlation that is a constant factor worse than
the optimal achieved by quadratic polynomials. 
\begin{thm}
\label{thm:symm_deg3_part_loses}Suppose $t=e^{3}+q$ for some arbitrary
quadratic $q$. Then for any $\phi$,
\[
C_{\phi}(t)\leq(1-\Omega(1))\max_{s\in\{0,e^{1},e^{2},e^{2}+e^{1}\}}C_{\phi}(s).
\]
\end{thm}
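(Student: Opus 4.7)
The plan is to apply the derivative framework of Section \ref{sec:Derivatives} with the crucial observation that, because $t = e^{3} + q$ is cubic, the derivative $t_{y}(x) = (e^{3})_{y}(x) + q_{y}(x)$ is a \emph{quadratic} polynomial in $x$. I would analyze the structure of $(e^{3})_{y}$ to bound $c_{y}(t) = \E_{x}(-1)^{t_{y}(x)}\prod_{i \in S_{y}}\omega^{2x_{i}-1}$ for each direction $y$ and then sum to obtain $\E_{y} c_{y}(t) = C_{\phi}(t)^{2}$. A direct expansion shows that the coefficient of $x_{i}x_{j}$ in $(e^{3})_{y}$ is $w(y) - y_{i} - y_{j} \pmod 2$, which yields a sharp dichotomy by parity of $w(y)$: for $y \in O$, the quadratic part of $(e^{3})_{y}$ is $e^{2}(x^{S_{y}}) + e^{2}(x^{\overline{S_{y}}})$ with no cross terms between $S_{y}$ and $\overline{S_{y}}$; for $y \in E$, the quadratic part is the bilinear form $e^{1}(x^{S_{y}}) \cdot e^{1}(x^{\overline{S_{y}}})$. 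Because $q_{y}$ is only linear, $t_{y}$ inherits exactly this structure plus separable linear and constant parts.

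For $y \in O$, the variable separation lets $c_{y}(t)$ factor as $(-1)^{C} A(y) \cdot B(y)$, where $A(y)$ is the correlation of $e^{2} + L^{S}$ with a partial mod-like function on $|S_{y}|$ bits at angle $2\phi$, and $B(y) = \E_{x^{\overline{S_{y}}}}(-1)^{e^{2}(x^{\overline{S_{y}}}) + L^{\overline{S}}(x^{\overline{S_{y}}})}$ is a Gauss sum on $|\overline{S_{y}}|$ bits. Adapting the closed-form formulas in the spirit of Lemma \ref{lem:exact-expr-general}, one verifies $|B(y)| = O(2^{-|\overline{S_{y}}|/2})$ uniformly in the linear perturbation $L^{\overline{S}}$. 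Summing $|c_{y}(t)|$ over $y \in O$ thus produces a contribution exponentially smaller than $v_{\phi}$, and so it is negligible for the purpose of the target bound.

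For $y \in E$, I would expand $(-1)^{e^{1}(x^{S_{y}}) \cdot e^{1}(x^{\overline{S_{y}}})}$ via the identity $(-1)^{ab} = \tfrac{1}{2}[1 + (-1)^{a} + (-1)^{b} - (-1)^{a+b}]$, splitting $c_{y}(t)$ into four coordinate-separable terms. Each term vanishes unless the linear coefficients of $t_{y}$ on $\overline{S_{y}}$ are either all $0$ or all $1$, a rigid constraint that depends jointly on $y$, on $q$, and on the combinatorics of $(e^{3})_{y}$. Tracking these vanishing conditions and comparing the surviving terms to the symmetric benchmark $c_{y}(s) = \sigma^{w(y)}$ from Lemma \ref{lem:=00005BContributions-of-symmetric} yields $\sum_{y \in E}|c_{y}(t)| \le (1 - \Omega(1)) \sum_{y \in E}\sigma^{w(y)}$. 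Together with the odd-$y$ bound this gives $\E_{y} c_{y}(t) \le (1 - \Omega(1)) v_{\phi}$, hence $C_{\phi}(t) \le (1 - \Omega(1)) C_{\phi}^{*}$ for $\phi \in (\pi/4,\pi/2]$; a parallel but simpler argument, using Item 2 of Theorem \ref{thm:main-general} and the baseline correlation $((1+\cos\phi)/2)^{n/2}$, covers $\phi \in [0,\pi/4]$.

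The main obstacle is the even-weight case: while the four-term split makes the combinatorics manageable, the vanishing constraint is satisfied precisely by those directions in which the linear part of $q_{y}$ aligns with the linear part of $(e^{3})_{y}$ on $\overline{S_{y}}$. Showing that this alignment is necessarily lossy for every quadratic $q$ -- by a constant factor uniform in $\phi$ and $n$ -- is where the delicate accounting lies, and it appears to be the reason the full cubic conjecture remains open beyond this ``symmetric degree-$3$ part'' regime.
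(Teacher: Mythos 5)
Your overall route coincides with the paper's: the derivative framework of Section \ref{sec:Derivatives}, the exact structure of $(e^{3})_{y}$ split by the parity of $w(y)$ (your coefficient formula $w(y)-y_{i}-y_{j}$ giving $e^{2}(V_{1})+e^{2}(V_{0})$ for odd weight and the bilinear form $e^{1}(V_{1})\cdot e^{1}(V_{0})$ for even weight is exactly Proposition \ref{prop:derivative_e3}), the quadratic Gauss-sum/small-bias argument killing odd-weight directions (Fact \ref{fact:quadratic_symm_bias} together with Proposition \ref{prop:small_bias_small_corr}), and a final comparison against the symmetric benchmarks of Theorem \ref{thm:main-general}. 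The problem is that the step you yourself identify as ``the main obstacle'' -- the even-weight directions for an \emph{arbitrary} quadratic $q$ -- is only asserted, not carried out: you state that tracking the vanishing conditions ``yields'' $\sum_{y\in E}|c_{y}(t)|\le(1-\Omega(1))\sum_{y\in E}\sigma^{w(y)}$, and then suggest that completing this accounting may be as hard as the open general-cubic conjecture. That is a genuine gap, and the suggestion is also off the mark: this is precisely the content of Lemma \ref{lem:cubic_symmmetric_wins}, and it closes with a short case analysis inside the very four-term expansion you set up. Writing $q_{y}=u(V_{1})+v(V_{0})$, for even $w(y)$ at most one of the two alignment patterns $v\in\{0,e^{1}(V_{0})\}$ can occur; if neither occurs the contribution is $0$ (the restricted function on the $0$-variables is a nonconstant linear form, so its bias vanishes), and if one occurs then conditioning on the parity of $x^{V_{0}}$ gives $|c_{y}(t)|\le\tfrac{1}{2}\left(|\sigma|^{w(y)-k}|\gamma|^{k}+|\sigma|^{k}|\gamma|^{w(y)-k}\right)\le\tfrac{1}{2}\left(|\sigma|^{w(y)}+|\gamma|^{w(y)}\right)$, where $k$ is the number of variables of $u$. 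Summed over $y\in E$ this already loses a factor close to $2$ on the dominant term relative to $\sum_{y\in E}\sigma^{w(y)}$, which is where the $(1-\Omega(1))$ comes from; no delicate direction-by-direction accounting against the symmetric benchmark is needed.

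A secondary caveat: in your odd-weight step you bound $|c_{y}(t)|\le|B(y)|=O(2^{-|\overline{S_{y}}|/2})$ and declare the total odd contribution ``exponentially smaller than $v_{\phi}$.'' Summing $2^{-(n-w(y))/2}$ over odd $y$ gives $\Theta\bigl(((1+2^{-1/2})/2)^{n}\bigr)$, which matches the decay rate of $v_{\phi}$ as $\phi\to\pi/4$ (and your factor $A(y)$, a correlation at angle $2\phi$, is close to $1$ there, so it does not rescue the estimate). So as stated your odd-weight bound is not uniformly negligible over all $\phi$; you need either the sharper per-direction bound used in Lemma \ref{lem:e3_contribution_bounds} or an argument that restricts attention to the regime where the comparison still closes. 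This is a quantitative point rather than a conceptual one, but it has to be addressed for the theorem's ``for any $\phi$'' claim.
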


We first show that cubic symmetric polynomial $e^{3}$ has worse correlation
than the optimal quadratic symmetric. We prove this by applying the
derivative framework from Section \ref{sec:Derivatives}. We analyze
for every direction $y$ what the derivative $e_{y}^{3}$ will be
and use this to bound the contribution $|c_{y}(e^{3})|$ in Lemma
\ref{lem:e3_contribution_bounds}. 

Next we show that $t=e^{3}+q$ can only have worse correlation than
$e^{3}$ for any quadratic $q$. We do this in Lemma \ref{lem:cubic_symmmetric_wins}
by showing that for any direction $y$, adding the derivative $q_{y}$
(which will be linear) to $e_{y}^{3}$ can only decrease the contribution.
In other words, we show $|c_{y}(t)|\leq|c_{y}(e^{3})|$ for every
$y$. 
\begin{lem}
\label{lem:e3_contribution_bounds}For any $y$,
\begin{enumerate}
\item If $w(y)\in E$ then
\[
|c_{y}(e^{3})|\leq\frac{|\sigma|^{w(y)}+|\gamma|^{w(y)}}{2}.
\]
\item If $w(y)\in O$ then 
\[
|c_{y}(e^{3})|\leq\frac{2^{w(y)}}{2^{n-1}}.
\]
\end{enumerate}
\end{lem}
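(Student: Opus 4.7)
The plan is to compute the directional derivative $e^3_y$ explicitly and exploit the structure that emerges in each parity class of $w(y)$. Expanding $e^3(x+y)+e^3(x)$ monomial by monomial shows that $e^3_y$ is quadratic with coefficient of $x_ax_b$ equal to $(w(y)-y_a-y_b)\bmod 2$, coefficient of $x_a$ equal to $\binom{w(y)-y_a}{2}\bmod 2$, and constant $\binom{w(y)}{3}\bmod 2$; Lucas's theorem reduces each of these binomials to a simple function of $w(y)\bmod 4$. Write $S=\mathrm{supp}(y)$, $\bar S=[n]\setminus S$, and set $u=\sum_{i\in S}x_i$, $v=\sum_{j\in\bar S}x_j$ over $\F_2$.

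For item~1, with $w(y)$ even the coefficient $(w(y)-y_a-y_b)\bmod 2$ equals $1$ exactly when precisely one of $y_a,y_b$ is $1$, so the quadratic part of $e^3_y$ factors as the product $u\cdot v$ over $\F_2$. Combined with the linear and constant terms one obtains $e^3_y=u(v+1)+c$ or $(u+1)v+c$, depending on $w(y)\bmod 4$. In either subcase $(-1)^{e^3_y(x)}$ depends only on $(u,v)$, the character $\omega^{\sum_{i\in S}(2x_i-1)}$ depends only on $x_S$, and $v$ depends only on $x_{\bar S}$, so conditioning on $v$ separates the expectation into two independent factors. Using Proposition~\ref{prop:table}, the inner $x_S$-average equals $\gamma^{w(y)}$ when the sign collapses to $1$ and $(-i\sigma)^{w(y)}$ when the sign is $(-1)^u$; averaging these two contributions over $v$ and taking absolute values yields the claimed bound.

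For item~2, with $w(y)$ odd the coefficient equals $1$ iff $y_a=y_b$, so the quadratic part now \emph{splits additively} as $e^2(x_S)+e^2(x_{\bar S})$. Combined with the linear part, $e^3_y=q_S(x_S)+q_{\bar S}(x_{\bar S})+c$ for quadratics $q_S,q_{\bar S}$ determined by $w(y)\bmod 4$, and since the $\omega$-character involves only $x_S$,
\[
c_y(e^3)=(-1)^c\,A(y)\,B(y)
\]
with $A(y)=\E_{x_S}(-1)^{q_S(x_S)}\omega^{\sum_{i\in S}(2x_i-1)}$ and $B(y)=\E_{x_{\bar S}}(-1)^{q_{\bar S}(x_{\bar S})}$. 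I would bound $|B(y)|$ via the identity $|\E_x(-1)^{q(x)}|^2=\Pr_z[q_z\text{ is constant in }x]$: for $q$ of the form $e^2+\ell$ on $m$ variables the derivative $q_z$ is a constant polynomial only when $z\in\{0,1^m\}$, yielding a bound on $|B(y)|$ exponentially small in $n-w(y)$. A parallel analysis of $A(y)$, where the $\omega$-character shifts the relevant character sum but the rank-of-bilinear-form argument still applies, then combines with the bound on $|B(y)|$ to give the stated inequality.

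The main obstacle is the bookkeeping across the four residues of $w(y)\bmod 4$, which determine whether the linear and constant parts of $e^3_y$ vanish and thus the exact forms of $A(y)$ and $B(y)$. In the even case the product structure $u\cdot v$ makes the computation collapse cleanly, but the edge $w(y)=n$ (empty $\bar S$) has to be treated separately since $v$ is no longer a genuine free variable. In the odd case the additive splitting is itself clean, but reaching the stated exponent $2^{w(y)}/2^{n-1}$ requires combining the sharpest available bounds on both $|A(y)|$ and $|B(y)|$ rather than using the trivial estimate $|A(y)|\le 1$.
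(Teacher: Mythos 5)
Your explicit computation of $e^{3}_{y}$ is correct, and your treatment of Item 1 is essentially the paper's proof: the paper derives the same case analysis of $e^{3}_{y}$ by $w(y)\bmod 4$ (Proposition \ref{prop:derivative_e3}) and then, exactly as you do, conditions on the parity of the untouched block to get the two contributions $\gamma^{w(y)}$ and $\pm\sigma^{w(y)}$ with weight $1/2$ each. One caveat you correctly sense but cannot fix by ``separate treatment'': at $w(y)=n$ the variable $v$ disappears, the derivative degenerates to $e^{1}$ or $0$, and $|c_{y}(e^{3})|$ equals $\sigma^{n}$ or $|\gamma|^{n}$, which exceeds $\frac{\sigma^{n}+|\gamma|^{n}}{2}$ for generic $\phi$; this single direction is an edge case the paper's proof also implicitly excludes (it is harmless downstream since one direction carries weight $2^{-n}$).

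For Item 2 there is a genuine gap, and it is in the step you flag as needing ``the sharpest available bounds on both $|A(y)|$ and $|B(y)|$.'' Your factorization $c_{y}(e^{3})=\pm A(y)B(y)$ and your derivative argument giving $|B(y)|\le 2^{-(n-w(y)-1)/2}$ are exactly the paper's route (Fact \ref{fact:quadratic_symm_bias} combined with Proposition \ref{prop:small_bias_small_corr}). But the stated exponent $2^{w(y)}/2^{n-1}=2^{-(n-w(y)-1)}$ cannot be recovered by additionally squeezing $A(y)$: $A(y)$ is a function of the $w(y)$ coordinates in $S$ only, so it provides no decay in $n-w(y)$. Concretely, for $w(y)=1$ one has $A(y)=\gamma$ exactly and $B(y)=\E(-1)^{e^{2}}$ on $n-1$ variables, which equals $2^{-(n-2)/2}$ when $n\equiv 2\bmod 4$; then $|c_{y}(e^{3})|=|\gamma|\,2^{-(n-2)/2}$, strictly larger than the claimed $2^{-(n-2)}$ for all large $n$ whenever $\gamma\neq 0$. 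So the bound you are asked to prove is not reachable along this route --- what the argument actually yields is $|c_{y}(e^{3})|\le 2^{-(n-w(y)-1)/2}$, i.e.\ $\bigl(\sqrt{2}\bigr)^{w(y)-n+1}$. You are not missing a trick: the paper's own proof applies Fact \ref{fact:quadratic_symm_bias} to the $n-w(y)$ zero-variables, which gives bias at most $2^{-(n-w(y)-1)/2}$, not the $2^{-(n-w(y)-1)}$ written there, so it too only establishes the square-root version (with which the application in Theorem \ref{thm:symm_deg3_part_loses} still goes through except for $\phi$ near $\pi/4$, where the odd-weight error term has the same exponential rate as the main term).
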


\begin{lem}
\label{lem:cubic_symmmetric_wins} Suppose $t=e^{3}+q$ for some arbitrary
quadratic $q$. Then for any $y$, 
\[
|c_{y}(t)|\leq|c_{y}(e^{3})|.
\]
\end{lem}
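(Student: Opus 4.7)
\textit{Plan.} My plan is to work directly with the derivative expression $c_y(t) = \E_x (-1)^{e^3_y(x) + q_y(x)} M(x)$ for each $y$. Since $q$ is quadratic, $q_y$ is linear in $x$, so writing $q_y(x) = \sum_j a_j x_j + c$ turns $(-1)^{q_y(x)}$ into a character $(-1)^c \chi_a(x)$. Thus $c_y(t)$ equals $(-1)^c$ times a Fourier coefficient of $g_y(x) := (-1)^{e^3_y(x)} M(x)$, and the goal becomes $|\hat{g}_y(\chi_a)| \le |\hat{g}_y(0)|$ for every character $\chi_a$ that arises from a quadratic $q$. A short handshaking-style calculation shows that $a$ necessarily satisfies $\sum_{j \in B} a_j \equiv 0 \pmod 2$ (where $B = \{i : y_i = 1\}$), and conversely every such $a$ is realized by some quadratic $q$.

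Next I would split $x = (x_A, x_B)$ with $A, B$ the zero and one coordinates of $y$, noting that $M$ depends only on $x_B$, and expand $e^3_y$ as a polynomial in $x$. Setting $s := w(y)$, the key computation yields: for $s$ even, the degree-$2$ part of $e^3_y$ in $x$ is $X_A X_B$ (a cross term), whereas for $s$ odd it is $e^2(x_A) + e^2(x_B)$ with no cross term. In the odd case, $c_y(t)$ factors cleanly as a product of an $x_A$-expectation and an $x_B$-expectation, and the claimed inequality reduces to two independent Fourier-coefficient comparisons for signed symmetric functions of the form $(-1)^{\binom{X}{2}}$ (possibly multiplied by $\omega^{2X}$). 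In the even case I would instead condition on $x_B$: for fixed $x_B$ the cross term $X_A X_B$ becomes linear in $x_A$, so the $x_A$-expectation vanishes unless all coefficients $a_j$ with $j \in A$ agree on a common value $a^* \in \{0,1\}$ and $X_B \equiv a^* + s/2 \pmod 2$. If they do not all agree then $c_y(t) = 0$ and we are done; otherwise, the analysis reduces to a Fourier comparison on $\F_2^s$ restricted to a single parity layer.

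The main obstacle is the residual Fourier-analytic claim in each case: the trivial-character coefficient of the relevant signed symmetric function (a twisted $(-1)^{\binom{X}{2}} \omega^{2X}$ expression on a parity layer) has maximum magnitude among the constrained character set. I expect this to follow from a direct Krawtchouk-polynomial computation that crucially uses the constraint $\sum_{j \in B} a_j = 0$; without this constraint the bound would fail, since the all-ones character can match or exceed the trivial one. Thus the parity restriction inherited from $q_y$ being the derivative of a quadratic is what makes the lemma hold. Once the Fourier inequality is established in both parity cases, combining with the factorization (odd $s$) or the conditioning argument (even $s$) yields $|c_y(t)| \le |c_y(e^3)|$ for every $y$.
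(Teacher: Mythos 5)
Your reduction is set up correctly as far as it goes: the structural facts you assert match the paper's Proposition \ref{prop:derivative_e3} (cross term $X_AX_B$ for $w(y)$ even, $e^{2}(x_A)+e^{2}(x_B)$ for $w(y)$ odd), the handshaking observation that $a_B:=a|_{B}$ has even parity is exactly right, and the odd/even split into a product decomposition versus conditioning on $x_B$ parallels the paper's proof, which restricts the $1$-variables or $0$-variables and invokes Fact \ref{fact:quadratic_symm_bias} and Proposition \ref{prop:small_bias_small_corr}. But the proposal has a genuine gap, and it is not merely that the ``residual Fourier-analytic claim'' is deferred rather than proved. The target you reduce to --- $|\hat g_y(a)|\le|\hat g_y(0)|$ for every admissible $a$ --- is false in the odd case, and the parity constraint cannot rescue it, because that constraint restricts only $a_B$ and says nothing about $a_A$. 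Concretely, take $n\equiv 0\bmod 4$, $y=e_1$ (so $w(y)=1$, $|A|=n-1\equiv 3\bmod 4$) and $q=\bigl(\sum_{i\ge 2}x_i\bigr)x_1$, so that $q_y=\sum_{i\ge 2}x_i$ and $a=(0,1,\dots,1)$ is admissible. Since $e^3_y=e^{2}(x_A)$, the contribution factors, and
\[
c_y(e^{3})=\Bigl(\E_{x_A}(-1)^{e^{2}(x_A)}\Bigr)\cos\phi=0,\qquad
c_y(t)=\Bigl(\E_{x_A}(-1)^{(e^{2}+e^{1})(x_A)}\Bigr)\cos\phi=\pm 2^{-(n-2)/2}\cos\phi\neq 0
\]
for $\phi\neq\pi/2$, because $\E(-1)^{e^{2}}$ vanishes on $m\equiv 3\bmod 4$ variables while $\E(-1)^{e^{2}+e^{1}}$ does not. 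So the trivial coefficient does not dominate the admissible ones, and no Krawtchouk computation will make it do so.

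What this shows is that the pointwise statement has to be interpreted (and proved) the way the paper does: the proof of Lemma \ref{lem:cubic_symmmetric_wins} in the odd case does not compare $|c_y(t)|$ with the exact value of $|c_y(e^{3})|$; it shows that after any restriction of the $1$-variables the derivative is $e^{2}(V_0)+\ell$ for linear $\ell$, so by Fact \ref{fact:quadratic_symm_bias} and Proposition \ref{prop:small_bias_small_corr} the contribution of $t$ obeys the \emph{same explicit bound} as the one proved for $e^{3}$ in Lemma \ref{lem:e3_contribution_bounds}, and those explicit bounds are all that Theorem \ref{thm:symm_deg3_part_loses} uses. To repair your plan you would need to retarget the odd case at that bound --- i.e., bound the $A$-side factor by the \emph{maximum} Fourier coefficient $2^{-(|A|-1)/2}$ of $(-1)^{e^{2}}$ rather than by its trivial coefficient --- and only in the even case carry out the layer-restricted comparison (there the rearrangement inequality $\sigma^{w-k}\gamma^{k}+\sigma^{k}\gamma^{w-k}\le\sigma^{w}+\gamma^{w}$ does the work, as in the paper). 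As written, the central inequality is both unproven and, in the form you state it, false.
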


The previous two lemmas imply Theorem \ref{thm:symm_deg3_part_loses}.
\begin{proof}[Proof of Theorem \ref{thm:symm_deg3_part_loses} assuming Lemmas \ref{lem:e3_contribution_bounds},
\ref{lem:cubic_symmmetric_wins}. ]

By Lemmas \ref{lem:e3_contribution_bounds}, \ref{lem:cubic_symmmetric_wins}
we have
\begin{align*}
C_{\phi}^{2}(t) & \leq\sum_{y:w(y)\in E}\frac{|\sigma|^{w(y)}+|\gamma|^{w(y)}}{2}+\sum_{y:w(y)\in O}2^{-(n-w(y)-1)}\\
 & =\frac{(1+|\sigma|)^{n}+(1-|\sigma|)^{n}}{4}+\frac{(1+|\gamma|)^{n}+(1-|\gamma|)^{n}}{4}+\frac{3^{n}-1}{2^{n}}.
\end{align*}
The $=$ follows by Claim \ref{claim:=00005BOdd-even-sum-claim=00005D}.
Next note that for any $\phi$, $\max\{1+|\sigma|,1+|\gamma|\}\geq1+1/\sqrt{2}>3/2$.
Suppose $\phi$ is such that $|\sigma|>|\gamma|$. Then 
\[
C_{\phi}^{2}(t)\leq2^{-n}\frac{(1+o(1))(1+|\sigma|)^{n}}{4}.
\]
On the other hand by Theorem \ref{thm:main-general} we know that
\[
\max_{s\in\{e^{2},e^{2}+e^{1}\}}C_{\phi}^{2}(s)\geq2^{-n}\frac{(1+|\sigma|)^{n}}{2}.
\]
Now suppose $\phi$ is such that $|\sigma|\leq|\gamma|$. Then 
\[
C_{\phi}^{2}(t)\leq2^{-n}\frac{(2+o(1))(1+|\gamma|)^{n}}{4}.
\]
However by Theorem \ref{thm:main-general},
\[
\max_{s\in\{0,e^{1}\}}C_{\phi}^{2}(s)=2^{-n}(1+|\gamma|)^{n}.
\]
\end{proof}

\subsection{Proof of Lemma \ref{lem:e3_contribution_bounds}}

We first list some preliminary results we will need. The following
is a standard fact we state without proof.
\begin{fact}
\label{fact:quadratic_symm_bias} Let $s$ denote either $e^{2},e^{2}+e^{1}$
on $n$ variables, and let $\ell$ denote an arbitrary linear polynomial.
Then $|bias((-1)^{s+\ell})|\leq2^{-(n-1)/2}$.
\end{fact}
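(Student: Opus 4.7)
The plan is to use the classical ``squaring trick'' to reduce the bias to a sum over derivatives of $s+\ell$, which has a clean closed form because $s$ is quadratic. Since replacing $\ell$ by $\ell+e^1$ converts the $s=e^2$ case into the $s=e^2+e^1$ case, it suffices to handle $s=e^2$ with arbitrary linear $\ell$.

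Let $B:=\E_x(-1)^{e^2(x)+\ell(x)}$. The first step is to write
\[
B^2=\E_{x,z}(-1)^{(e^2+\ell)(x)+(e^2+\ell)(x+z)}=\E_z\E_x(-1)^{(e^2+\ell)_z(x)},
\]
where $(e^2+\ell)_z(x)$ denotes the derivative in direction $z$. The derivative of $\ell$ is a function of $z$ alone, so it pulls out of the inner expectation as a sign $\varepsilon(z)\in\{-1,+1\}$. For the quadratic piece, a direct expansion over $\F_2$ gives
\[
e^2_z(x)\;=\;e^2(z)\;+\;w(z)\cdot e^1(x)\;+\;\langle z,x\rangle,
\]
using that $z_i^2=z_i$ over $\F_2$. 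Therefore the inner expectation over $x$ equals $(-1)^{e^2(z)}$ if the linear form $w(z)\cdot e^1+\langle z,\cdot\rangle$ vanishes identically, and $0$ otherwise. Vanishing means $w(z)\cdot\mathbf 1+z=0$ in $\F_2^n$, i.e.\ $z=0$ when $w(z)$ is even, and $z=\mathbf 1$ (forcing $n$ odd) when $w(z)$ is odd.

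Collecting these contributions, one obtains
\[
B^2=\begin{cases}2^{-n}&\text{if }n\text{ is even},\\[2pt]2^{-n}\bigl(1\pm1\bigr)&\text{if }n\text{ is odd},\end{cases}
\]
so in all cases $|B|\le\sqrt{2\cdot2^{-n}}=2^{-(n-1)/2}$, as required. Finally, for $s=e^2+e^1$ we simply apply the same argument with $\ell$ replaced by $\ell+e^1$, which is again an arbitrary linear polynomial, so the bound transfers verbatim.

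The main (minor) obstacle is just being careful with the $\F_2$ expansion of $e^2(x+z)$: one must track the $w(z)\cdot e^1(x)$ term that arises from $\sum_{i\ne j}x_iz_j$, since it is the parity of $w(z)$ that governs which $z$ survive in the outer sum. Everything else is a routine parity count, and no new ingredient beyond the squaring trick is needed.
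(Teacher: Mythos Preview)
Your proof is correct. The paper actually states this as ``a standard fact we state without proof,'' so there is nothing to compare against; your squaring-trick argument is exactly the standard derivation one would expect, and the parity bookkeeping is accurate.
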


Below and for the remainder of the section, we let $V_{1},V_{0}\subseteq[n]$
denote the indices of the 1, 0-variables respectively with respect
to a fixed direction $y$. 

The next result says that if the bias of $p_{y}$ is small after an
arbitrary restriction to the 1-variables, then $|c_{y}(p)|$ must
be small.
\begin{prop}
\label{prop:small_bias_small_corr}Fix some polynomial $p$ and direction
$y\in\zo^{n}$. Suppose for any restriction $r\in\zo^{|V_{1}|}$ of
the 1-variables, 
\[
\left|\E_{x:x^{V_{1}}=r}(-1)^{p_{y}(x)}\right|\leq\delta.
\]
Then 
\[
|c_{y}(p)|\leq\delta.
\]
\end{prop}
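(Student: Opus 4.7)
\textbf{Proof proposal for Proposition \ref{prop:small_bias_small_corr}.} The plan is to exploit the fact that the $\omega$-factor in the expression for $c_y(p)$ depends only on the $1$-variables of $y$, so that conditioning on $x^{V_1}$ separates the oscillatory piece from the bias piece that we assume is small.

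First I would recall from Definition \ref{def:The-contribution-of} that
\[
c_y(p) = \E_x (-1)^{p_y(x)} \omega^{\sum_i x_i - \sum_i (x_i \oplus y_i)}.
\]
When $y_i = 0$ the factor $\omega^{x_i - (x_i \oplus y_i)}$ equals $1$, and when $y_i = 1$ it equals $\omega^{2x_i - 1}$. Hence the $\omega$-factor depends only on $x^{V_1}$, and we can write
\[
c_y(p) = \E_{r \in \zo^{|V_1|}}\!\left[\omega^{\sum_{i \in V_1}(2r_i - 1)} \cdot \E_{x : x^{V_1} = r}(-1)^{p_y(x)}\right].
\]

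Now applying the triangle inequality and using $|\omega^k| = 1$ for any integer $k$,
\[
|c_y(p)| \le \E_{r \in \zo^{|V_1|}} \left|\E_{x : x^{V_1} = r}(-1)^{p_y(x)}\right| \le \delta,
\]
where the last inequality uses the hypothesis that every such conditional expectation has absolute value at most $\delta$. This completes the proof.

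There is essentially no obstacle here: the only step requiring care is checking that the $\omega$-factor is indeed constant on the fibers $\{x : x^{V_1} = r\}$, which is immediate from the case analysis on $y_i \in \{0,1\}$ already carried out in Proposition \ref{prop:table}. The proposition is really a packaging lemma that will be used together with Fact \ref{fact:quadratic_symm_bias} in the proof of Lemma \ref{lem:e3_contribution_bounds}: one computes $p_y$ for $p = e^3$ under a restriction of the $1$-variables, observes that it has the form $s + \ell$ for a quadratic symmetric $s$ on the $0$-variables plus a linear term $\ell$, and then invokes the bias bound to obtain $\delta = 2^{-(n-w(y)-1)/2}$.
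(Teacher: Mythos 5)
Your proof is correct and is essentially identical to the paper's: both condition on the restriction of the $1$-variables, observe that the $\omega$-factor (i.e.\ $Mod_{\phi,y}$) depends only on those variables and has modulus $1$, and then apply the triangle inequality together with the bias hypothesis on the conditional expectations. No gap to report.
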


\begin{proof}
We have
\begin{align*}
c_{y}(p) & =\E_{x}[(-1)^{p_{y}(x)}Mod_{\phi,y}(x)]\\
 & =\E_{x^{V_{1}}}[Mod_{\phi,y}(x)\cdot\E_{x^{V_{0}}}[(-1)^{p_{y}(x)}]]\\
 & \leq\d.
\end{align*}
 The second $=$ follows since $Mod_{\phi,y}(x)$ only depends on
the 1-variables. The $\leq$ follows since $|Mod_{\phi,y}(x)|=1$
and by the hypothesis on $p_{y}$.
\end{proof}
Next we characterize the derivatives of $e^{3}$ which depend on the
weight of $y$ $\bmod4$. We abuse notation and let $e^{i}(V_{j})$
denote the polynomial $e^{i}$ defined on the variables indexed by
$V_{j}$. 
\begin{prop}
\label{prop:derivative_e3} Fix any direction $y\in\zo^{n}$ and consider
the derivative $e_{y}^{3}$.
\begin{enumerate}
\item If $w(y)\equiv0\bmod4$ then
\[
e_{y}^{3}=e^{1}(V_{1})+e^{1}(V_{1})e^{1}(V_{0}).
\]
\item If $w(y)\equiv2\bmod4$ then
\[
e_{y}^{3}=e^{1}(V_{1})e^{1}(V_{0})+e^{1}(V_{0}).
\]
\item If $w(y)\equiv1\bmod4$ then
\[
e_{y}^{3}=e^{2}(V_{1})+e^{2}(V_{0}).
\]
\item If $w(y)\equiv3\bmod4$ then
\[
e_{y}^{3}=(e^{2}+e^{1})(V_{1})+(e^{2}+e^{1})(V_{0})+1.
\]
\end{enumerate}
\end{prop}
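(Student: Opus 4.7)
The plan is to directly compute the derivative polynomial $e^{3}_{y}$ by expanding monomial-wise and then to reduce the resulting coefficients modulo 2 using the standard parity formulas $\binom{n}{2}\equiv 1\pmod 2$ iff $n\equiv 2,3\pmod 4$, and $\binom{n}{3}\equiv 1\pmod 2$ iff $n\equiv 3\pmod 4$ (a direct consequence of Lucas's theorem). First, for each triple $i<j<k$ I expand $(x_{i}+y_{i})(x_{j}+y_{j})(x_{k}+y_{k}) - x_{i}x_{j}x_{k}$ and group the seven surviving terms by how many $x$-factors they contain. Summing over all triples then gives closed forms for the three coefficient families of $e^{3}_{y}$ in terms of $w:=w(y)$ and the bits $y_{m}$: the constant is $\binom{w}{3}$, the coefficient of $x_{m}$ equals the number of $1$-pairs in $y$ avoiding position $m$, namely $\binom{w-y_{m}}{2}$, and the coefficient of $x_{m}x_{n}$ equals $w-y_{m}-y_{n}$.

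Next, I reduce each of these three quantities modulo 2 after substituting the four possible residues of $w\bmod 4$, distinguishing whether the index $m$ (or the pair $m,n$) lies in $V_{1}$ or $V_{0}$. For example, when $w\equiv 0\pmod 4$, the two parity facts give $\binom{w}{3}\equiv 0$, $\binom{w-1}{2}\equiv 1$ but $\binom{w}{2}\equiv 0$, while $w-y_{m}-y_{n}$ is odd precisely when exactly one of $m,n$ lies in $V_{1}$; assembling these pieces yields $e^{3}_{y}=e^{1}(V_{1})+e^{1}(V_{1})e^{1}(V_{0})$. The cases $w\equiv 1,2,3\pmod 4$ are handled identically: in each one the three parities assemble into exactly the claimed expression, with the $\binom{w}{3}$ term producing the extra constant $1$ in the $w\equiv 3\pmod 4$ case and the $\binom{w-y_{m}}{2}$ term producing the $e^{1}(V_{0})$ summand in the $w\equiv 2\pmod 4$ case.

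The main obstacle is purely bookkeeping: the parities of $\binom{w}{3},\binom{w-1}{2},\binom{w}{2},w-2,w-1,w$ all depend on $w\bmod 4$, and one must track them carefully and match them with the four target expressions. As a sanity check, one may verify the resulting polynomial against the identity $e^{3}_{y}(x)\equiv \binom{w-a+b}{3}+\binom{a+b}{3}\pmod 2$, where $a=\sum_{i\in V_{1}}x_{i}$ and $b=\sum_{i\in V_{0}}x_{i}$ as integers, which follows from $e^{3}(z)=\binom{w(z)}{3}$ together with $w(x\oplus y)=w-a+b$, on a few small values of $a,b$.
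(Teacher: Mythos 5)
Your proof is correct and follows essentially the same route as the paper's: both determine the constant, linear, and quadratic coefficients of $e_{y}^{3}$ as counts of completions by ones of $y$ (binomial coefficients in $w(y)$) and reduce them mod $2$ according to $w(y)\bmod 4$. The only difference is bookkeeping -- the paper organizes the count through the decomposition $e^{3}=e^{3}(V_{1})+e^{2}(V_{1})e^{1}(V_{0})+e^{1}(V_{1})e^{2}(V_{0})+e^{3}(V_{0})$, while you expand each cubic monomial directly and obtain the same coefficients $\binom{w}{3}$, $\binom{w-y_{m}}{2}$, and $w-y_{m}-y_{n}$.
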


\begin{proof}
We can write $e^{3}=e^{3}(V_{1})+e^{2}(V_{1})e^{1}(V_{0})+e^{1}(V_{1})e^{2}(V_{0})+e^{3}(V_{0})$.
Firstly note the term $e^{3}(V_{0})$ does not affect $e_{y}^{3}$.
Secondly, the term $e^{1}(V_{1})e^{2}(V_{0})$ only contributes $e^{2}(V_{0})$
to $e_{y}^{3}$ when $|V_{1}|=w(y)$ is odd.

Thirdly, we deal with $e^{2}(V_{1})e^{1}(V_{0})$. Note that $e^{1}(V_{0})$
has a coefficient of $\binom{w(y)}{2}$ in $e_{y}^{3}$, which is
odd when $w(y)\equiv2,3\bmod4$. Now let $x_{i}$ denote a 1-variable.
Then $x_{i}e^{1}(V_{0})$ has a coefficient of $\binom{w(y)-1}{1}$,
hence $e^{1}(V_{1})e^{1}(V_{0})$ appears when $w(y)$ is even. 

Lastly, we deal with $e^{3}(V_{1})$. Note $x_{i}$ has a coefficient
of $\binom{w(y)-1}{2}$, hence $e^{1}(V_{1})$ appears if $w(y)\equiv0,3\bmod4$.
Let $x_{j}$ denote a second 1-variable. Then $x_{i}x_{j}$ has a
coefficient of $\binom{w(y)-2}{1}$ hence $e^{2}(V_{1})$ appears
if $w(y)$ is odd. The constant 1 has a coefficient of $\binom{w(y)}{3}$
which is odd when $w(y)\equiv3\bmod4$.
\end{proof}

\paragraph{Proof of Lemma \ref{lem:e3_contribution_bounds} }

Suppose $w(y)\equiv0\bmod4$. By Proposition \ref{prop:derivative_e3},
if $x^{V_{0}}\in E$ then $e_{y}^{3}=e^{1}(V_{1})$. If $x^{V_{0}}\in O$
then $e_{y}^{3}=0$. Hence
\begin{align*}
c_{y}(e^{3}) & =2^{-n}(\sum_{x:x^{V_{0}}\in E}\sigma^{w(y)}+\sum_{x:x^{V_{0}}\in O}\gamma^{w(y)})\\
 & =\frac{\sigma^{w(y)}+\gamma^{w(y)}}{2}.
\end{align*}
The $w(y)\equiv2\bmod4$ case is similar. If $x^{V_{0}}\in E$ then
$e_{y}^{3}=0$. Otherwise, $e_{y}^{3}=e^{1}(V_{0})+1$. Hence $c_{y}(e^{3})=\frac{-\sigma^{w(y)}+\gamma^{w(y)}}{2}$.
This concludes the $w(y)\in E$ case. 

Now suppose $w(y)\in O$. Fact \ref{fact:quadratic_symm_bias} implies
that for $s=e^{2}(V_{0}),(e^{2}+e^{1})(V_{0})$, $|bias((-1)^{s})|\leq2^{-(n-w(y)-1)}$.
Since $e_{y}^{3}$ is disjoint on $V_{0},V_{1}$, Proposition \ref{prop:small_bias_small_corr}
implies that $|c_{y}(e^{3})|\leq2^{-(n-w(y)-1)}.$ 

\subsection{Proof of Lemma \ref{lem:cubic_symmmetric_wins}}

Suppose that $t=e^{3}+q$ for some quadratic $q$. Note that for any
direction $y$, $t_{y}$ has the same quadratic terms as $e_{y}^{3}$
and $q_{y}$ only affects the linear terms in $p_{y}$. Let us write
$q_{y}=u(V_{1})+v(V_{0})$, where $u(V_{1}),v(V_{0})$ are linear
polynomials over the $1,0$-variables respectively.

First suppose $y\equiv0\bmod4$. We now consider restricting the 1-variables.
If $x^{V_{1}}\in E$ then $t_{y}^{3}=c+v(V_{0})$ where $c$ is some
constant. If $x^{V_{1}}\in O$ then $t_{y}^{3}=c+(e^{1}+v)(V_{0})$.
Note that if $0\neq v(V_{0})\neq e^{1}(V_{0})$, then the bias of
the restricted function will be 0 for both cases. Hence by Proposition
\ref{prop:small_bias_small_corr}, $c_{y}(t)=0$ and we are done.
If $v(V_{0})=e^{1}(V_{0})$ then this is symmetrical to when $v(V_{0})=0$.
Hence we can assume that $v(V_{0})=0$.

From here, we switch back to restricting the 0-variables. If $x^{V_{0}}\in E$
then $e_{y}^{3}=(e^{1}+u)(V_{1})$, and if $x^{V_{0}}\in O$ then
$e_{y}^{3}=u(V_{1})$. Suppose $u(V_{1})$ contains $k$ variables.
Then $|c_{y}(t)|\leq|\sigma|^{w(y)-k}|\gamma|^{k}$ whenever $x^{V_{0}}\in E$
and $|c_{y}(t)|\leq|\sigma|^{k}|\gamma|^{w(y)-k}$ otherwise. Hence
\begin{align*}
|c_{y}(t)| & \leq\frac{|\sigma|^{w(y)-k}|\gamma^{k}|+|\sigma|^{k}|\gamma|^{w(y)-k}}{2}.
\end{align*}
Assume that $|\sigma|>|\gamma|$ (the other case is similar). We can
now conclude as
\begin{align*}
\frac{|\sigma|^{w(y)-k}|\gamma|^{k}+|\sigma|^{k}|\gamma|^{w(y)-k}}{2} & \leq\frac{|\sigma|^{w(y)}+|\gamma|^{w(y)}}{2}\\
\iff\frac{|\gamma|^{w(y)-k}(|\sigma|^{k}-|\gamma|^{k})}{2} & \leq\frac{|\sigma|^{w(y)-k}(|\sigma|^{k}-|\gamma|^{k})}{2}\\
\iff|\gamma| & \leq|\sigma|.
\end{align*}
The $w(y)\equiv2\bmod4$ case is analogous. 

Now suppose $w(y)\equiv1\bmod4$. After an arbitrary restriction to
$x^{V_{1}}$, we have $e_{y}^{3}=e^{2}(V_{0})+v(V_{0})+c$ for some
constant $c$. Fact \ref{fact:quadratic_symm_bias} implies that $|bias((-1)^{e_{y}^{3}})|\leq2^{-(n-w(y)-1)}$
after any restriction to $x^{V_{1}}.$ We can now conclude by applying
Proposition \ref{prop:small_bias_small_corr}. The $w(y)\equiv3\bmod4$
case is analogous.

\paragraph{Acknowledgment.}

We are grateful to Brenden Collins for collaborating during the initial
stages of this project.

\bibliographystyle{alpha}
\newcommand{\etalchar}[1]{$^{#1}$}
\def\cprime{$'$}

\end{document}

\section{Scrap}

\subsection{Correlation bounds for quadratic polynomials based on rank}

...
\subsection{New Intro/ Motivation}

Exhibiting explicit functions that have small \emph{correlation} with
low-degree polynomials modulo 2 is a fundamental challenge in complexity
theory. This challenge is generally referred to as ``proving correlation
bounds'' and progress on it is a prerequisite for progress on a striking
variety of other long-standing problems: circuit lower bounds \cite{viola-FTTCS09,Viola-map},
Valiant's rigidity challenge \cite{Viola-diago}, number-on-forehead
communication complexity \cite{Viola-diago,Viola-map}, and even recently-made
conjectures on the Fourier spectrum of low-degree polynomials \cite{Viola-L12requiresCor}.

After many years, the state-of-the-art on this challenge has not changed
much since seminal works from at least thirty years ago. Two bounds
are known for degree $d$ polynomials. First, the results by Razborov
and Smolensky from the 80's give correlation $O(d/\sqrt{n})$ \cite{Raz87,Smo87,Smolensky93};
second, the result by Babai, Nisan, and Szegedy \cite{BNS92} on number-on-forehead
communication protocols yields correlation $\exp(-\Omega(n/d2^{d}))$.
A slight improvement to $\exp(-\Omega(n/2^{d}))$ appears in \cite{ViolaGF2}.
Thus, the first bound applies to large degrees but yields weak correlation,
while the second bound yields exponentially small correlation, but
only applies to degrees less than $\log n$.

Achieving correlation less than $1/\sqrt{n}$ for polynomials of degree
$\log n$ remains open, for any explicit function. Remarkably, solving
this specific setting of parameters is required for long-sought progress
on any of the challenges mentioned in the previous paragraph. This
problem is particularly interesting since in spite of its apparent
difficulty, traditional complexity barriers do not apply, nor would
it imply any dramatic separations between complexity classes (see
\cite{viola-SIGACT09} for more discussion).

In light of this, the work of Bhowmick and Lovett \cite{BhowmickL15}
proposes a new barrier to proving correlation bounds. First, they
show certain existing approaches also imply correlation bounds for
the stronger class of \emph{nonclassical polynomials}. Next, they
show nonclassical polynomials of degree $O(\log n)$ do in fact correlate
with the candidate hard function (one such candidate being the mod
3 function). For example, this rules out the possibility of the techniques
in \cite{ViW-GF2} yielding nontrivial correlation bounds for $\log n$
degree polynomials.

With all this in mind, we propose a new approach for proving correlation
bounds:

It's not hard to show that symmetric polynomials of degree $O(n^{0.49})$
have exponentially small correlation with mod 3 (see Theorem \ref{thm:symm_corr_mod3}),
hence the above would imply new correlation bounds. Moreover, this
approach avoids the aforementioned barrier of \cite{BhowmickL15}.

Our main technical contribution is showing the conjecture holds for
$d=2$. We view this work as proof-of-concept, and hope it will spur
further research.

We note that there are other papers proposing new approaches to proving
correlation bounds. In STOC 2020, Chattopadhyay, Hatami, Hosseini,
Lovett, and Zuckerman introduced a novel technique with which they
established exciting new correlation bounds for constant-degree polynomials.
Moreover, they conjectured that their technique generalizes to higher
degree polynomials as well. Another contribution of this paper is
a counterexample to their conjecture, in fact ruling out even much
weaker parameters.

\paragraph{Mod functions.}

A natural candidate for achieving small correlation are the $Mod_{\phi}$
functions which map inputs of Hamming weight $w$ to the complex point
on the unit circle with angle $w\phi$....

\section{Boolean correlation when $m=3$}

In this section we provide more information on $B_{3}$. For this
special case, we can classify all values of $n$. Some of the results
listed below restate Theorem \ref{thm:boolean-theorem-modm}, except
for the $n\equiv2,4,8,10\bmod12$ cases in Item 2 and Item 4, which
shows that correlation of symmetric matches the correlation of a non-symmetric.
\begin{thm}
\label{thm:b-theorem-mod-3-1}Set $\Psi:=3\sqrt{v_{2\pi/3}}$. The
following holds for large enough $n$:
\end{thm}

\begin{enumerate}
\item For any $n$,
\[
1/\sqrt{2}-o(1)\leq\Psi^{-1}\max_{p}B_{3}(p)\leq1+o(1).
\]

\begin{enumerate}
\item If $n\equiv2,3,4,8,9,10\bmod12$ then 
\[
\max_{s\in\{e^{2},e^{2}+e^{1}\}}B_{3}(s)=\max_{p}B_{3}(p)=(1-o(1))\Psi
\]
\item If $n\equiv0,6\bmod12$ then
\[
(1+\Omega(1))\max_{s\in\{0,e^{1},e^{2},e^{2}+e^{1}\}}B_{3}(s)<\max_{s'\in\{e^{2},e^{2}+e^{1}\}}B_{3}(x_{1}+s'(x_{2},\dots,x_{n})).
\]
\item If $n\equiv1,5,7,11\bmod12$ then 
\[
\max_{s\in\{0,e^{1},e^{2},e^{2}+e^{1}\}}B_{3}(s)=\max_{s'\in\{e^{2},e^{2}+e^{1}\}}B_{3}(x_{1}+s'(x_{2},\dots,x_{n})).
\]
\end{enumerate}
\end{enumerate}
\begin{thm}
We let $\max_{p}B_{m}(p)$ denote the maximum $B_{m}$ over all quadratic
$p$. 
\end{thm}

We conjecture the optimal symmetric and non-symmetric in Item 4 maximize
$B_{3}$ for those values of $n$.

We next prove the two new results. When $m=3$ then $\ell_{1}=1$
so we write $\omega$ instead of $\omega^{\ell_{1}}$ and $E_{2\pi/3}$
instead of $E_{\ell_{1}\phi}$. And note $\chi$ in Corollary \ref{cor:e_phi_bounds}
will equal $n\pi/12$.

\subsection{Proof of Theorem \ref{thm:b-theorem-mod-3-1}}

\subsubsection{Proof of Item 2 when $n\equiv2,4,8,10\bmod12$}

We present the $n\equiv8\bmod12$ case. In this case, $E_{2\pi/3}(e^{2})$
is not essentially real like $E_{\ell_{1}\phi}(e^{2})$ was in the
$n\equiv m,3m\bmod4m$ case. However, $E_{2\pi/3}(e^{2})$ will be
close to real and the constant $\sqrt{1-\Omega(\sigma-\gamma)}$ in
Theorem \ref{thm:main-general} will be small enough to compensate
for that:
\begin{claim}
\label{claim:b3-constant}When $\phi=2\pi/3$, the term $\sqrt{1-\Omega(\sigma-\gamma})$
in Theorem \ref{thm:main-general} can be improved to \textup{$\leq0.9588$.} 
\end{claim}

The proof of this claim is by setting $t=1$, $n$ large enough, and
then inspecting the proof of Lemma \ref{lem:buffer-general}, Theorem
\ref{thm:c(p)-upper-bound} from the lemmas, and direct calculation.
We omit it. 

First we show that

\[
B_{3}(e^{2})\geq(3-o(1))\cdot0.965\sqrt{v_{2\pi/3}}.
\]
This follows by Lemma \ref{lem:b-bounds-easy} and the following claim:
\[
\left|Real(E_{2\pi/3}(e^{2}))\right|\geq(1-o(1))\cdot0.965\sqrt{v_{2\pi/3}}.
\]
To verify the claim, when $n\equiv8\bmod12$, $n\pi/12\equiv2\pi/3+k\pi\bmod2\pi$
for some integer $k$, so $|\cos(n\pi/12+\pi/4)|=|\cos(13\pi/12+k\pi)|=\cos(\pi/12)\geq0.965$.
We now conclude by Corollary \ref{cor:e_phi_bounds}. Note $B_{3}(e^{2}+e^{1})<B_{3}(e^{2})$
since $|\cos(n\pi/12-\pi/4)|=\cos(5\pi/12)<\cos(\pi/12)$.

On the other hand, if $p\neq e^{2},e^{2}+e^{1}$ then we show 
\[
B_{m}(p)\leq0.9588\sqrt{v_{\ell_{1}\phi}}+O(2^{-n}).
\]
 This follows by Lemma \ref{lem:b-bounds-easy} and Claim \ref{claim:b3-constant}
which states
\[
C_{2\pi/3}(p)\leq0.9588\cdot\sqrt{v_{2\pi/3}}.
\]
This yields the desired inequality since $\left|Real(E_{2\pi/3}(p))\right|\leq C_{2\pi/3}(p)$.
The remaining cases are similar.

\subsubsection{Proof of Item 4}

We present the $n\equiv1\bmod12$ case. But for simplicity, we will
assume $n\equiv1\bmod24$. When $n\equiv13\bmod24$ the proof is symmetrical. 

As was justified in the $n\equiv0,2m\bmod4m$ case, it suffices to
consider $\max_{s\in\{e^{2},e^{2}+e^{1}\}}B_{3}(s)$. Let $q:=x_{1}+e^{2}(x_{2},\dots,x_{n})$.
We will show that 
\[
B_{3}(e^{2}+e^{1})=B_{3}(q).
\]
By Lemma \ref{prop:B-expr-general}, for the case of $m=3$ it suffices
to show 
\[
|Real(E_{2\pi/3}(e^{2}+e^{1}))|=|Real(E_{2\pi/3}(q))|.
\]
Since we want to show equality, we need to be more precise than before.
Corollary \ref{cor:e_phi_bounds} can be sharpened to 
\begin{equation}
2^{n+1}Real(E_{2\pi/3}(e^{2}+e^{1}))=\cos(n\pi/12-\pi/4)\gamma+\cos(-5n\pi/12+\pi/4)\e\label{eq:sharper-version}
\end{equation}
where $\e=\sqrt{2}|1+i\omega|^{n}$. When $n\equiv1\bmod24$, then
$n\pi/12,-5n\pi/12$ are congruent to $\pi/12,-5\pi/12$ respectively.
This implies $\cos(n\pi/12-\pi/4),\cos(-5n\pi/12+\pi/4)$ both equal
$\cos(-\pi/6)=\cos(\pi/6)$. Hence
\[
2^{n+1}Real(E_{2\pi/3}(e^{2}+e^{1}))=\cos(\pi/6)\gamma+\cos(\pi/6)\e.
\]

Next we deal with $Real(E_{2\pi/3}(q))$. After conditioning on $x_{1}$
and applying Lemma \ref{lem:exact-expr-general} we get (below $e^{2}$
is on $n-1$ variables):
\begin{align*}
E_{2\pi/3}(q) & =\frac{(1-\omega)}{2}E_{2\pi/3}(e^{2}).
\end{align*}
We can write an analogous version of Equation \ref{eq:sharper-version}
for $e^{2}$ on $n-1$ variables:
\[
2^{n}Real(E_{2\pi/3}(e^{2}))=\cos\left(\frac{(n-1)\pi}{12}+\frac{\pi}{4}\right)\frac{\gamma}{|1-i\omega|}+\cos\left(\frac{-5(n-1)\pi}{12}-\frac{\pi}{4}\right)\frac{\e}{|1+i\omega|}.
\]
Since $-\omega=e^{\sqrt{-1}(-\pi/3)}$ , $(1-\omega)=|1-\omega|e^{\sqrt{-1}(-\pi/6)}.$
Plugging this into the previous equation implies 
\begin{align*}
2^{n+1}Real(E_{2\pi/3}(q)) & =\cos\left(\frac{(n-1)\pi}{12}+\frac{\pi}{4}-\frac{\pi}{6}\right)\frac{|1-\omega|}{|1-i\omega|}\gamma+\cos\left(\frac{-5(n-1)\pi}{12}-\frac{\pi}{4}-\frac{\pi}{6}\right)\frac{|1-\omega|}{|1+i\omega|}\e\\
 & =\cos(\pi/12)\frac{|1-\omega|}{|1-i\omega|}\gamma+\cos(5\pi/12)\frac{|1-\omega|}{|1+i\omega|}\e.
\end{align*}
The last equality follows since $\pm\pi/4-\pi/6=\pm\pi/12$, and $n\pi/12,-5n\pi/12$
are congruent to $\pi/12,-5\pi/12$ respectively. Hence to show $|Real(E_{2\pi/3}(e^{2}+e^{1}))|=|Real(E_{2\pi/3}(q))|$
it remains to show
\begin{align*}
\cos(\pi/6) & =\cos(\pi/12)\frac{|1-\omega|}{|1-i\omega|},\\
\cos(\pi/6) & =\cos(5\pi/12)\frac{|1-\omega|}{|1+i\omega|}.
\end{align*}
Both inequalities now follow by the identity $|1+e^{\sqrt{-1}\phi}|=2|\cos(\phi/2)|$
which implies $|1-\omega|=2|\cos(-\pi/6)|=2\cos(\pi/6)$, $|1-i\omega|=2\cos(\pi/12)$,
and $|1+i\omega|=2|\cos7\pi/12|=2\cos(5\pi/12)$. The remaining cases
are similar.

\subsection{Correlation bounds for degree 3 from degree 2}

We first give a trivial correlation bound for cubic polynomials that's
based on our bound for quadratic. Then we follow the spirit of the
proof for degree 2, and show that when a degree 2 term appears in
the set of 0 variables for some restriction, the bias of the function
over the 0 variables is always $\leq1/2$ by Schwartz-Lipton, so we
get 1/2 off the trivial bound for this restriction. We then apply
this to show a (small) constant factor improvement over the trivial
bound. 
\begin{thm}
{[}Trivial degree 3 correlation bound{]} For any cubic polynomial
$p$,
\[
C_{\phi}^{2}(p)\leq(1+o(1))\mu_{n}
\]
where 
\[
\mu_{n}:=\frac{1}{\sqrt{2}}\left(\frac{1+\sqrt{\frac{1+\sigma}{2}}}{2}\right)^{n}
\]
\end{thm}

\begin{proof}
As before, we can express correlation of a cubic polynomial $p$ in
terms of the contribution of its derivatives. Since every derivative
is a quadratic polynomial, we can apply Theorem \ref{thm:main-general}
which implies $c_{y}(p)\leq(1+o(1))\sqrt{\frac{1}{2}\frac{(1+\sigma)^{w(y)}+(1-\sigma)^{w(y)}}{2^{w(y)}}}$.
Hence we have
\begin{align*}
C_{\phi}^{2}(p) & =\E_{y}\E_{x}(-1)^{p(x)+p(x\oplus y)}\omega^{\sum_{i}x_{i}-\sum_{i}(x_{i}\oplus y_{i})}\\
 & =\E_{y}c_{y}(p)\\
 & \leq\E_{y}\sqrt{\frac{1}{2}\frac{(1+\sigma)^{w(y)}+(1-\sigma)^{w(y)}}{2^{w(y)}}}\\
 & \leq\frac{1}{\sqrt{2}}\E_{y}\sqrt{\frac{(1+\sigma)^{w(y)}}{2^{w(y)}}}+\sqrt{\frac{(1-\sigma)^{w(y)}}{2^{w(y)}}}\\
 & =\frac{1}{\sqrt{2}2^{n}}\sum_{k=0}^{n}\binom{n}{k}\left(\sqrt{\frac{(1+\sigma)^{k}}{2^{k}}}+\sqrt{\frac{(1-\sigma)^{k}}{2^{k}}}\right)\\
 & =\frac{1}{\sqrt{2}}\left(\frac{1+\sqrt{\frac{1+\sigma}{2}}}{2}\right)^{n}+\left(\frac{1+\sqrt{\frac{1-\sigma}{2}}}{2}\right)^{n}\\
 & =\frac{1+o(1)}{\sqrt{2}}\left(\frac{1+\sqrt{\frac{1+\sigma}{2}}}{2}\right)^{n}
\end{align*}
\end{proof}
\begin{prop}
Fix any $r\in\zo^{n}$. Let $S_{0},S_{1}$ denote the set of 0 variables,
1 variables respectively. Suppose for any restriction $t\in\zo^{|S_{1}|}$
of the 1 variables, 
\[
\left|\E_{x^{S_{0}},x^{S_{1}}=t}(-1)^{p_{r}(x)}\right|\leq\delta.
\]
Then 
\[
c_{\phi}(p,r)\leq\delta\cdot v_{\phi,w(r)}
\]
\end{prop}

\begin{proof}
Since $p$ is cubic, we can write 
\[
p_{r}(x)=t(x^{S_{0}})+u(x^{S_{0}},x^{S_{1}})+v(x^{S_{1}}).
\]
Then we have \footnote{I got confused whether its $v_{2\phi,w(r)}$ or $v_{\phi,w(r)}$.
This would affect trivial bound for degree 3.}
\begin{align*}
c_{\phi}(p,r) & =\E_{x}(-1)^{p_{r}(x)}\omega^{\sum_{i\in S^{1}}(2x_{i}-1)}\\
 & =\omega^{-|S^{1}|}\E_{x}(-1)^{p_{r}(x)}\omega^{2\sum_{i\in S^{1}}x_{i}}\\
 & \leq\E_{x^{S_{1}}}(-1)^{v(x^{S_{1}})}\omega^{2\sum_{i\in S^{1}}x_{i}}\E_{x^{S_{0}}}(-1)^{t(x^{S_{0}})+u(x^{S_{0}},x^{S_{1}})}\\
 & \leq\delta\E_{x^{S_{1}}}(-1)^{v(x^{S_{1}})}\omega^{2\sum_{i\in S^{1}}x_{i}}\\
 & \leq\delta\cdot v_{2\phi,w(r)}.
\end{align*}
\end{proof}
\begin{prop}
Fix any $r\in\zo^{n}$. Let $S_{0},S_{1}$ denote the set of 0 variables,
1 variables respectively and let us write
\[
p_{r}(x)=t(x^{S_{0}})+u(x^{S_{0}},x^{S_{1}})+v(x^{S_{1}}).
\]
Suppose a degree two term appears in $t(x^{S_{0}})$. Then for any
restriction $t\in\zo^{|S_{1}|}$ of the 1 variables, 
\[
\left|\E_{x^{S_{0}},x^{S_{1}}=t}(-1)^{p_{r}(x)}\right|\leq1/2.
\]
\end{prop}

\begin{proof}
Since $p$ is cubic, $t(x^{S_{0}}),u(x^{S_{0}},x^{S_{1}}),v(x^{S_{1}})$
all have degree at most two. A restriction to $x^{S_{1}}$ causes
$v$ to be constant and $u$ to be a linear polynomial in $x^{S_{0}}$.
Hence for any restriction to $x^{S_{1}}$, $t$ will be a degree two
polynomial and by Schwartz-Lipton, 
\begin{align*}
\left|\E_{x^{S_{0}},x^{S_{1}}=t}(-1)^{p_{r}(x)}\right| & =\left|\E_{x^{S_{0}},x^{S_{1}}=t}(-1)^{t(x^{S_{0}})+u(x^{S_{0}},x^{S_{1}}=t)}\right|\\
 & \leq1/2.
\end{align*}
\end{proof}
\begin{thm}
{[}Small constant better than trivial bound{]} For any cubic polynomial
$p$
\[
C_{\phi}^{2}(p)\leq(1+o(1))\mu_{n}-\frac{(1+o(1))}{16}\mu_{n-2}
\]
\end{thm}

\begin{proof}
Let us assume $p$ contains $x_{1}x_{2}x_{3}$ for ease. We show 
\[
c(p,00*^{n-2})\leq\frac{3+o(1)}{4}\mu_{n-2}
\]
which beats the trivial upper bound
\[
c(p,00*^{n-2})\leq(1+o(1))\mu_{n-2}.
\]
Let $q(x_{4},\dots,x_{n})$ denote the linear polynomial such that
$x_{1}x_{2}(x_{3}+q)$ expresses all the monomials in $p$ that contain
$x_{1}x_{2}$. Whenever a derivative sets $x_{3}+q=1$, $x_{1}x_{2}$
appears. So when $x_{1}x_{2}$ are both 0 variables we can apply the
previous two propositions. Letting $T=x_{3}+q$, and $U\subseteq\{x_{3},\dots,x_{n}\}-T$
we have 
\begin{align*}
c(p,00*^{n-2}) & \leq2^{-(n-2)}\left(\sum_{y^{T}\in E,y^{U}}|c_{00y}(p)|+\sum_{y^{T}\in O,y^{U}}|c_{00y}(p)|\right)\\
 & \leq2^{-(n-2)}\left(\sum_{y^{T}\in E,y^{U}}v_{2\phi,w(y)}+\frac{1}{2}\sum_{y^{T}\in O,y^{U}}v_{2\phi,w(y)}\right)\\
 & \leq\frac{1+o(1)}{2}\mu_{n-2}+\frac{1+o(1)}{4}\mu_{n-2}\\
 & =\frac{3+o(1)}{4}\mu_{n-2}.
\end{align*}
\end{proof}
One intriguing work by Chattopadhyay, Hatami, Hosseini, Lovett, and
Zuckerman \cite{DBLP:conf/stoc/ChattopadhyayHH20} conjectures that
low-degree polynomials do not correlate well with the class of \emph{resilient
functions}, which for example contains the Majority function. They
prove their conjecture for constant-degree polynomials. Moreover,
their techniques yield correlation bounds between constant-degree
polynomials and the XOR of multiple resilient functions, which was
not known previously.

They conjecture a certain structural property holds for low-degree
polynomials, which if true would imply correlation bounds between
log-degree polynomials and the XOR of multiple Majority functions
- a long-standing open problem. However, we show their conjecture
strongly fails for $O(\log^{2}n)$ degree polynomials and hence one
needs to consider other approaches for proving correlation bounds.

\subsection{Proof overview OLDishWe begin by rewriting the correlation squared
in a convenient form, involving derivatives of the polynomial and
of the mod function. Bounding the correlation in terms of derivatives
is natural and done in several previous works, see e.g. discussion
of the `squaring trick' in \cite[Chapter 1]{viola-SIGACT09}. }

However, a key difference is that these previous works typically take
repeated derivatives until the polynomial becomes constant, use the
Cauchy-Schwartz inequality, and hence are lossy. By contrast, we take
a single derivative, avoid Cauchy-Schwartz, and give an exact expression. 

Fix some quadratic polynomial $p$, and for concreteness consider
the complex mod 3 function $Mod_{\phi}:=e^{\phi\sqrt{-1}\sum_{i}x_{i}}$
where $\phi:=2\pi/3$. Letting $p_{y},Mod_{\phi,y}$ denote the derivatives
of $p,Mod_{\phi}$ in the direction $y\in\zo^{n}$ respectively, the
exact expression of the correlation squared is 
\begin{align*}
C_{\phi}^{2}(p) & =\E_{y}\E_{x}(-1)^{p_{y}(x)}Mod_{\phi,y}(x):=\E_{y}c_{y}(p).
\end{align*}
This derivation is done in Section \ref{sec:Derivatives}. Next we
show in Section \ref{sec:symmetry} that $C_{\phi}^{2}(s)$ admits
a particularly nice expression for $s=e^{2},e^{2}+e^{1}$. That is,
we show (supposing $n$ is even for simplicity):
\begin{align*}
\E_{y}c_{y}(s) & =2^{-n}\sum_{y\in E}(\sin\phi){}^{w(y)}
\end{align*}

We informally sketch how this expression arises. Consider some direction
$y$ of odd weight. Then there is some $i$ such that $y_{i}=0$ but
the variable $x_{i}$ in $p_{y}(x)$ will have coefficient 1. On the
other hand, $Mod_{\phi,y}$ only depends on the variables corresponding
to the bits set to 1 by $y$ and hence it does not depend on $x_{i}$.
Since $p_{y}$ is a linear polynomial, we can write the correlation
as $\E_{x_{i}}(-1)^{x_{i}}\cdot\E[\cdot]=0$. 

On the other hand, if $y$ has even weight then when $y_{i}=0,1$,
$x_{i}$ in $p_{y}(x)$ has coefficient $0,1$ respectively. Some
simple calculations show $c_{y}(s)=\left(\E_{x_{i}\in\zo}[(-1)^{x_{i}}\omega^{x_{i}-(x_{i}\oplus1)}]\right)^{w(y)}=(\sin\phi)^{w(y)}.$
This implies the expression.

A crucial fact we will use repeatedly is the last equality is tight,
meaning for any quadratic $p$ and any $y$, 
\begin{equation}
|c_{y}(p)|\leq(\sin\phi)^{w(y)}.\label{eq:discussion-upper-bound-1}
\end{equation}

Recall our goal is to show $\E_{y}c_{y}(p)\leq\E_{y}c_{y}(s)$ for
any quadratic $p$. We in fact show 
\[
\E_{y}|c_{y}(p)|\leq\E_{y}c_{y}(s).
\]
 To gain intuition on how we prove this, consider the expression $\E_{y}c_{y}(s)=2^{-n}\sum_{y\in E}(\sin\phi){}^{w(y)}$.
It suggests we somehow reason on the space of directions. Indeed,
the first step of our proof shows the existence of some direction
bit $y_{i}$ such that
\[
\E_{y:y_{i}=0}|c_{y}(p)|<\E_{y:y_{i}=0}c_{y}(s).
\]
If we were to show 
\[
\E_{y:y_{i}=1}|c_{y}(p)|<\E_{y:y_{i}=1}c_{y}(s)
\]
then we could conclude the proof. It's not clear how to do this directly,
but we continue by finding a second bit $y_{j}$ such that 
\[
\E_{y:y_{i}=1,y_{j}=0}|c_{y}(p)|<\E_{y:y_{i}=1,y_{j}=0}c_{y}(s).
\]

We continue in this manner, ``opening up'' appropriate direction
bits one at a time, where appropriateness is decided by the structure
of the polynomial. Jumping ahead, we will view quadratic polynomials
as graphs where nodes and edges represent variables and quadratic
terms respectively. The structure we are interested in will be the
size and parity of the degrees. Note the degree of a variable corresponds
to the number of quadratic terms it appears in. 

In the proceeding discussion, we will refer to the variables indexed
by the restricted, unrestricted direction bits as 1-variables, $*$-variables
respectively. However, we emphasize that only the bits of the direction
are being restricted, and at no point do we restrict any of the input
variables. 

We deem a previously unrestricted $y_{k}$ appropriate to open if
either:
\begin{enumerate}
\item $x_{k}$ is connected to an odd number of 1-variables (Lemma \ref{lem:odd-even-general}).
\item $x_{k}$ is connected to an even number of 1-variables \emph{and}
at least a few $*$-variables (Lemma \ref{lem:gap-general}).
\end{enumerate}
The proof of the first item is simple and we sketch it out below.
The same argument breaks down when $x_{k}$ is connected to an even
number of 1-variables, but we can recover it when $x_{k}$ is connected
to a few $*$-variables. To do so, we crucially rely on the \emph{handshaking
lemma} which allows us to improve Equation \ref{eq:discussion-upper-bound-1}
for $y$ of odd weight.

Now let us sketch out the first item. Fix some $p$ and suppose for
simplicity we have already opened $x_{1}$ and that $x_{1}x_{2}$
appears in $p$. Note that using our notation, $x_{2}$ satisfies
condition 1 since $x_{2}$ is connected to the one 1-variable $x_{1}$.
We will show that 
\[
\sum_{y:y_{1}=1,y_{2}=0}|c_{y}(p)|\leq\sum_{y:y_{1}=1,y_{2}=0}c_{y}(s).
\]

We first upper bound the left term. We can express all the terms in
$p$ that contains $x_{2}$ by $x_{2}(x_{1}+\ell)$ for some some
linear polynomial on $x_{3},\dots x_{n}$. If a direction $y$ satisfies
$y_{1}=1,y_{2}=0,\ell(y)=0$, then $x_{2}$ has a coefficient of 1
in the derivative $p_{y}(x)$. When this happens, then the same reasoning
that showed $c_{y}(s)=0$ when $y$ has odd weight implies that $c_{y}(p)=0.$
And if a $y$ satisfies $y_{1}=1,y_{2}=0,\ell(y)=1$, then we apply
the upper bound from Equation \ref{eq:discussion-upper-bound-1}.
Combining the two yields the following: 
\begin{align*}
\sum_{y:y_{1}=1,y_{2}=0}|c_{y}(p)| & \leq\sum_{y:y_{1}=1,y_{2}=0,\ell(y)=1}(\sin\phi){}^{w(y)}\\
 & =\sum_{y:y_{1}=1,y_{2}=0,y^{L}\in O,y^{\overline{L}}\in E}(\sin\phi){}^{w(y)}+\sum_{y:y_{1}=1,y_{2}=0,y^{L}\in O,y^{\overline{L}}\in O}(\sin\phi){}^{w(y)}.
\end{align*}
where $L\subseteq\{3,\dots,n\}$ denotes the indices of the variables
in $\ell$, and $y^{L}\in O,E$ denotes the set of strings that have
odd, even weight on the substring indexed by $L$.

On the other hand, we know $c_{y}(s)=(\sin\phi)^{w(y)}$ if $y$ is
even and 0 if $y$ is odd. Under the restriction $y_{1}=1,y_{2}=0$,
the weight of the unrestricted direction bits must be odd. Hence
\begin{align*}
\sum_{y:y_{1}=1,y_{2}=0}c_{y}(s) & =\sum_{y:y_{1}=1,y_{2}=0,y^{L}\in O,y^{\overline{L}}\in E}(\sin\phi){}^{w(y)}+\sum_{y:y_{1}=1,y_{2}=0,y^{L}\in E,y^{\overline{L}}\in O}(\sin\phi){}^{w(y)}.
\end{align*}
Since the left hand terms in both expressions are equal, to prove
$\sum_{y:y_{1}=1,y_{2}=0}|c_{y}(p)|\leq\sum_{y:y_{1}=1,y_{2}=0}c_{y}(s)$
it suffices to show 
\[
\sum_{y:y_{1}=1,y_{2}=0,y^{L}\in O,y^{\overline{L}}\in O}(\sin\phi){}^{w(y)}\leq\sum_{y:y_{1}=1,y_{2}=0,y^{L}\in E,y^{\overline{L}}\in O}(\sin\phi){}^{w(y)}.
\]
This inequality now follows by the binomial theorem, which says the
sum over even weight strings is slightly greater then the sum over
odd weight. 

Next we sketch out Lemma \ref{lem:gap-general}.

The argument above is the heart of the proof. For an arbitrary quadratic
$p$, as long as some variable satisfies either condition 1 or 2 we
can mimic the previous reasoning. We iterate in this fashion until
we can no longer find any more direction bits to open. 

At this point in the proof, suppose we have opened $j$ derivative
bits which we denote for simplicity $y_{1},y_{2},\dots y_{j}$. By
the previous steps we have for any $1\leq k\leq j$, 
\[
\E_{y\in1^{k-1}0*^{n-k}}|c_{y}(p)|\leq\E_{y\in1^{k-1}0*^{n-k}}c_{y}(s).
\]
So to conclude the proof it remains to show
\[
\E_{y\in1^{j}*^{n-j}}|c_{y}(p)|\leq\E_{y\in1^{j}*^{n-j}}c_{y}(s).
\]
We consider two cases depending on the value of $j$. 

If $j<n/2$ (Lemma \ref{lemma:low-degree-loses}), then this and the
fact that none of $x_{j+1},x_{j+2},\dots,x_{n}$ satisfy conditions
1 or 2 implies the existence of large independent set in the graph
of $p$, which we in turn use to bound $\E_{y\sim1^{j}*^{n-j}}|c_{y}(p)|$.
The intuition behind this is that the large independent set says $p$
has a large linear sub-polynomial, and then we use the simple fact
that linear polynomials have worse correlation with $Mod_{\phi}$
compared to $e^{2},e^{2}+e^{1}$.

If $j\geq n/2$ (Lemma \ref{lemma:done-if-most-open}), we naively
bound $\E_{y\in1^{j}*^{n-j}}|c_{y}(p)|$ by applying the bound $|c_{y}(p)|\leq(\sin\phi)^{w(y)}$
from Equation \ref{eq:discussion-upper-bound-1} for every $y\in1^{j}*^{n-j}$.
This suffices for our purposes since $\sin\phi<1$ and $w(y)\geq n/2$
for any $y\in1^{j}*^{n-j}$. This upper bound will in fact be larger
then $\E_{y\sim1^{j}*^{n-j}}c_{y}(s)$ but we compensate for this
by gaining a large advantage in the first step of the iterative argument
above (Lemma \ref{lem:buffer-general}). We remark this is the only
lossy step of the proof. All other steps show a strict inequality.
\begin{proof}
Next we show 
\[
c(p,100*^{n-3})\leq
\]
Now let $q(x_{4},\dots,x_{n})$ denote the linear polynomial such
that $x_{2}x_{3}(x_{1}+q)$ expresses all the monomials in $p$ that
contain $x_{2}x_{3}$. Since $x_{1}$ is already a 1 variable, whenever
a derivative sets $q=0$ then $x_{1}x_{2}$ appears. Since $x_{2}x_{3}$
are both 0 variables we can apply the previous two propositions. Letting
$T=q$, and $U\subseteq\{x_{4},\dots,x_{n}\}-T$ we have 
\begin{align*}
2^{n-3}c(p,100*^{n-3}) & \leq\sum_{y^{T}\in E,y^{U}}|c_{100y}(p)|+\sum_{y^{T}\in O,y^{U}}|c_{100y}(p)|\\
 & \leq\frac{1}{2}\sum_{y^{T}\in E,y^{U}}v_{2\phi,w(y)+1}+\sum_{y^{T}\in O,y^{U}}v_{2\phi,w(y)+1}\\
 & \leq\frac{1+o(1)}{4}\cdot\sqrt{\frac{1+\sigma}{2}}\mu_{n-3}+\frac{1+o(1)}{2}\sqrt{\frac{1+\sigma}{2}}\mu_{n-3}\\
 & =\frac{3}{4}\sqrt{\frac{1+\sigma}{2}}\mu_{n-3}.
\end{align*}

Next we have

\begin{align*}
\frac{1}{4}c(p,10*^{n-2}) & =\frac{1}{8}\left(c(p,100*^{n-3})+c(p,101*^{n-3})\right)\\
 & \leq\frac{1}{8}\left(\frac{3}{4}\sqrt{\frac{1+\sigma}{2}}\mu_{n-3}+\frac{1+\sigma}{2}\mu_{n-3}\right).\\
\end{align*}
Similarly, 
\[
\frac{1}{4}c(p,01*^{n-2})\leq\frac{1}{4}\sqrt{\frac{1+\sigma}{2}}\mu_{n-2}
\]

Putting this all together gives us 
\begin{align*}
4c(p,*^{n}) & =c(p,00*^{n-2})+c(p,10*^{n-2})+c(p,01*^{n-2})+c(p,11*^{n-2})\\
 & \leq\frac{3}{4}\mu_{n-2}+2\sqrt{\frac{1+\sigma}{2}}\mu_{n-2}+\frac{1+\sigma}{2}\mu_{n-2}.
\end{align*}
Where we used the trivial upper bound for the remaining terms. For
example, 
\[
c(p,10*^{n-2})\leq\sqrt{\frac{1+\sigma}{2}}\mu_{n-2}.
\]
\end{proof}
First, since $-\omega^{\ell_{1}}=e^{\sqrt{-1}(-\pi/2+\pi/2m)}$, $(1-\omega^{\ell_{1}})$
points in the same direction as $e^{\sqrt{-1}(-\pi/4+\pi/4m)}$. Second,
using the identities $|1+e^{\sqrt{-1}\phi}|=2|cos(\phi/2)|$ and $\sqrt{2}\cos(\pi/4-\phi/2)=\sqrt{1+\sin\phi}$
we get 
\begin{align*}
|1-\omega^{\ell_{1}}| & =2|\cos(-\pi/4+\pi/4m)|\\
 & =2\cos(\pi/4-\pi/4m)\\
 & =\sqrt{2}\sqrt{1+\sin(\pi/2m).}
\end{align*}
The vector $i(1-\omega^{\ell_{1}})$ is a rotation of $(1-\omega^{\ell_{1}})$
so we get

\[
(1-\omega^{\ell_{1}})(1+i)=\sqrt{2}\sqrt{1+\sin(\pi/2m)}\cdot(e^{\sqrt{-1}(-\pi/4+\pi/4m)}+e^{\sqrt{-1}(\pi/4+\pi/4m)}).
\]
Then, since $n=0\bmod4m$ and $\ell_{1}=\frac{m+1}{4}$, $(1-i\omega^{\ell_{1}})^{n-1}$
points in the same direction as $e^{\sqrt{-1}(\pi/4m)(-1+k\cdot4m)}$
for some $k$. If $k$ is even this equals $e^{\sqrt{-1}(-\pi/4m)}$
and if $k$ is odd this equals $-e^{\sqrt{-1}(-\pi/4m)}$. By the
previous equality, $(1-\omega^{\ell_{1}})(1+i)(1-i\omega^{\ell_{1}})^{n-1}$
will be completely real. Combining all this we get
\begin{align*}
|Real((1-\omega^{\ell_{1}})(1+i)(1-i\omega^{\ell_{1}})^{n-1})| & =|(1-\omega^{\ell_{1}})(1+i)(1-i\omega^{\ell_{1}})^{n-1}|\\
 & =2\sqrt{1+\sin(\pi/2m)}\cdot|(1-i\omega^{\ell_{1}})^{n-1}|.
\end{align*}
This implies that
\begin{align*}
2^{n+1}|Real(E_{\ell_{1}\phi}(q))| & \geq2\sqrt{1+\sin(\pi/2m)}\cdot|(1-i\omega^{\ell_{1}})^{n-1}|-o(1).
\end{align*}
We can now conclude by setting 
\[
\alpha=\frac{2\sqrt{1+\sin(\pi/2m)}}{|1-i\omega^{\ell_{1}}|}
\]
 and applying Corollary \ref{cor:e_phi_bounds}. Note that $\alpha>1$
since $\sqrt{1+\sin(\pi/2m)}>1$ and $|1-i\omega^{\ell_{1}}|<2$ for
any odd $m\geq3$.

Note that $\omega^{\ell_{1}}$ is close to $i$, so $i\omega^{\ell_{1}}$
is close to $-1$ (these quantities are the limit for $m\to\infty$).
With this in mind one can verify that $|1+i\omega^{\ell_{1}}|<1$
and $|1-i\omega^{\ell_{1}}|>1$. This implies two things. First, to
approximate $|Real(E_{\ell_{1}\phi}(s))|$ it suffices to understand
the angle of $(1\pm i)(1-i\omega^{\ell_{1}})^{n}$.Second, $\sqrt{v_{\ell_{1}\phi}}$
is approximately $\gamma/2^{n+1}$ where $\gamma:=\sqrt{2}\left|1-i\omega^{\ell_{1}}\right|^{n}$.
We formally state these two things next:

To verify the claim, we use Proposition \ref{lem:b-bounds-easy} and
Example \ref{exa:cor} which imply for $s=0,e^{1}$,
\begin{align*}
B_{m}(s) & \leq2m/(m-1)\left|Real(E_{\ell_{1}\phi}(p))\right|+o(\sqrt{v_{\ell_{1}\phi}})\\
 & \leq2m/(m-1)C_{\ell_{1}\phi}(p)+o(\sqrt{v_{\ell_{1}\phi}})\\
 & =2m(m-1)\left(\frac{1+\cos(\ell_{1}\phi)}{2}\right)^{n/2}+o(\sqrt{v_{\ell_{1}\phi}}).
\end{align*}
We are now done as $\sqrt{v_{\ell_{1}\phi}}=\Omega\left(\left(\frac{1+\sin(\ell_{1}\phi)}{2}\right)^{n/2}\right)$
and $\cos(\ell_{1}\phi)<\sin(\ell_{1}\phi$). 

\begin{claim}
\label{claim:c_phi_upper_bound}Fix any odd $m\geq3$. Then for all
large enough $n$, letting $\mu_{1}:=\arg\max_{k\in\{1,\dots(m-1)/2\}}C_{k\phi}(p)$
and $\mu_{2}:=\arg\max_{k\in\{1,\dots,(m-1)/2\}:k\neq\mu_{1}}C_{k\phi}(p)$,
the following holds for any quadratic $p$: 
\begin{align*}
C_{\mu_{1}\phi}(p) & \leq C_{\ell_{1}\phi}^{*}\leq(1+o(1))\sqrt{v_{\ell_{1}\phi}},\\
C_{\mu_{2}\phi}(p) & \leq C_{\frac{m-1}{2}\phi}^{*}=\left(\frac{1+\cos(\pi/m)}{2}\right)^{n/2}.
\end{align*}
\end{claim}

\begin{proof}
We prove the first inequality. Note if $\mu_{1}=\ell_{1}$ we can
immediately conclude by Theorem \ref{thm:main-general} and that $C_{\ell_{1}\phi}^{*}\geq\sqrt{v_{\ell_{1}\phi}}=\Omega\left(\left(\frac{1+\sin(\ell_{1}\phi)}{2}\right)^{n/2}\right)$.
Hence it suffices to show that when $\mu_{1}\neq\ell_{1}$,
\[
C_{\mu_{1}\phi}(p)=o\left(\left(\frac{1+\sin(\ell_{1}\phi)}{2}\right)^{n/2}\right).
\]
Next we note that $2\pi/m\leq k\phi\leq\pi-\pi/m$ since $k\in\{1,...,(m-1)/2\}$.

Now suppose that $\mu_{1}\phi\in[2\pi/m,\pi/4]\cup[3\pi/4,\pi-\pi/m]$.
In this range we know $0,e^{1}$ maximize $C_{\mu_{1}\phi}$ by Theorem
\ref{thm:main-general}. By their expressions from Example \ref{exa:cor},
$C_{\mu_{1}\phi}(0),C_{\mu_{1}\phi}(e^{1})$ are maximized when $|\cos(\mu_{1}\phi)|$
is maximized, and $|\cos(\mu_{1}\phi)|$ is maximized when $\mu_{1}=(m-1)/2$
and $\mu_{1}\phi=\pi-\pi/m$. Since $|\cos(\pi-\pi/m)|=\cos(\pi/m)$
we get 
\[
C_{\mu_{1}\phi}(p)\leq\left(\frac{1+\cos(\pi/m)}{2}\right)^{n/2}.
\]
We can now conclude as $\cos(\pi/m)<\sin(\ell_{1}\phi)=\sin(\pi/2\pm\pi/2m)=\cos(\pi/2m)$. 

Next, suppose $\mu_{1}\phi\in(\pi/4,3\pi/4)$ for some $\mu_{1}\neq\ell_{1}$.
Since $\mu_{1}\phi\in(\pi/4,3\pi/4)$ we know $e^{2},e^{2}+e^{1}$
maximize $C_{\mu_{1}\phi}$ by Theorem \ref{thm:main-general}, and
$C_{\mu_{1}\phi}(e^{2}),C_{\mu_{1}\phi}(e^{2}+e^{1})$ are maximized
when $|\sin(\mu_{1}\phi)|$ is maximized. Since $\mu_{1}\neq\ell_{1}$,
this happens when $\mu_{1}=\ell_{2}$, where $\ell_{2}$ denotes the
second closest integer to $m/4$. Hence,
\begin{align*}
C_{\mu_{1}\phi}(p) & \leq(1+o(1))\sqrt{v_{\ell_{2}\phi}}\leq O\left(\left(\frac{1+\sin(\ell_{2}\phi)}{2}\right)^{n/2}\right).
\end{align*}
Note that $\ell_{2}\in\{(m-3)/4,(m+3)/4)$ when $m$ is odd. We can
now conclude since $\sin(\ell_{2}\phi)=\sin(\pi/2\pm3\pi/2m)=\cos(3\pi/2m)<\cos(\pi/2m)=\sin(\ell_{1}\phi)$.
This con

This concludes the proof of the first inequality. The second inequality
is similar. It suffices to show if $\mu_{2}\notin\{\ell_{1},(m-1)/2\}$
then
\[
C_{\mu_{2}\phi}(p)=o\left(\frac{1+\cos(\pi/m)}{2}\right)^{n/2}.
\]

This follows by what we have already said. If $\mu_{2}\phi\in[2\pi/m,\pi/4]\cup[3\pi/4,\pi-\pi/m]$
for some $\mu_{2}\neq(m-1)/2$ we are done since $|\cos(\mu_{2}\phi)|$
is maximized when $\mu_{2}=1$ and $|\cos(\phi)|=\cos(2\pi/m)<\cos(\pi/m)$.
If $\mu_{2}\phi\in(\pi/4,3\pi/4)$ for some $\mu_{2}\neq\ell_{1}$
we are done since $\sin(\mu_{2}\phi)$ is maximized when $\mu_{2}=\ell_{2}$
and $\sin(\ell_{2}\phi)=\cos(3\pi/2m)<\cos(\pi/m)$.
\end{proof}
Finally, we can show the following inequality because $|Real(E_{\mu_{2}\phi}(p))|\leq C_{\mu_{2}\phi}(p)$,
Claim \ref{claim:c_phi_upper_bound}, and since
\begin{align*}
2m\left(\left|Real(E_{\mu_{2}\phi}(p))\right|+O\left(\cos(\pi/m)^{n}\right)\right) & \leq3m\left(\frac{1+\cos(\pi/m)}{2}\right)^{n/2}\\
 & =o(\sqrt{v_{\ell_{1}\phi}}).
\end{align*}

The $=$ follows because $\sqrt{v_{\ell_{1}\phi}}=\Omega\left(\left(\frac{1+\sin(\ell_{1}\phi)}{2}\right)^{n/2}\right)$,
$\cos(\pi/m)<\sin(\ell_{1}\phi)=\sin(\frac{m\pm1}{4}\frac{2\pi}{m})=\sin(\frac{\pi}{2}\pm\frac{\pi}{2m})=\cos(\pi/2m)$,
and we consider $m$ fixed. 

We omit the proof of the lower bound as it is proven similarly.
\[
2m/(m-1)\left|Real(E_{\ell_{1}\phi}(p))\right|-o(\sqrt{v_{\ell_{1}\phi}})\leq B_{m}(p)
\]

\begin{prop}
\label{prop:b-corr-general-bounds} For all large enough $n$,
\begin{align*}
3\cdot|Real(E_{2\pi/3}(p))|-O\left(2^{-n}\right)\leq B_{3}(p) & \leq3\cdot C_{2\pi/3}(p)+O\left(2^{-n}\right).
\end{align*}
More generally, fix any odd $m\geq3$. Then for all large enough $n$,
\[
B_{m}(p)\leq\frac{2m}{m-1}\max_{k\in\{1,2,\dots,(m-1)/2\}}C_{k\phi}(p)+2m\left(\max_{j\in\{1,2,\dots,(m-1)/2\}:j\neq k}C_{j\phi}(p)+O\left(\cos(\pi/m)^{n}\right)\right),
\]
\begin{align*}
\frac{2m}{m-1}\max_{k\in\{1,\dots,(m-1)/2\}}\left|Real(E_{k\phi}(p))\right|-m\left(\max_{j\in\{1,\dots(m-1)/2\}:j\neq k}C_{j\phi}(p)+O\left(cos(\pi/m)^{n}\right)\right)\leq B_{m}(p).
\end{align*}
\end{prop}

\begin{proof}
We show the general upper bound and omit the proof of the lower bound
as it is similar. The $m=3$ case is analogous. 

By Lemma \ref{prop:B-expr-general}, Claim \ref{claim:mod-m-balanced},
the triangle inequality, and the fact that $|Real(E_{\phi}(p))|\leq C_{\phi}(p)$:
\begin{align*}
B_{m}(p) & =\frac{1}{b(1-b)}\left|\frac{2}{m}\cdot\sum_{k=1}^{(m-1)/2}Real(E_{k\phi}(p))+(1/m-b)\right|\\
 & \leq\left(\frac{m^{2}}{m-1}+O(\cos(\pi/m)^{n})\right)\\
 & \qquad\qquad\left(\frac{2}{m}\max_{k\in\{1,2,\dots,(m-1)/2\}}\left|Real(E_{k\phi}(p))\right|+\max_{j\in\{1,2,\dots,(m-1)/2\}:j\neq k}\left|Real(E_{j\phi}(p))\right|+\cos(\pi/m)^{n}\right)\\
 & \leq\frac{2m}{m-1}\max_{k\in\{1,2,\dots,(m-1)/2\}}C_{k\phi}(p)+2m\left(\max_{j\in\{1,2,\dots,(m-1)/2\}:j\neq k}C_{j\phi}(p)+O\left(\cos(\pi/m)\right)^{n}\right).
\end{align*}
\end{proof}
\begin{thm}
\label{thm:main-general-1} Fix any angle $\phi\in[0,\pi/2).$ Then
for all large enough $n$, the maximum $C_{\phi}(p)$ over quadratic
polynomials $p$ is attained by a symmetric polynomial.
\end{thm}

\begin{enumerate}
\item Suppose $\phi\in(\pi/4,3\pi/4)\cup(5\pi/4,7\pi/4)$.
\begin{enumerate}
\item For $n$ even we have $C_{\phi}(e^{2})=C_{\phi}(e^{2}+e^{1})=\sqrt{v_{\phi}}$.
\item For $n\equiv1\bmod4$ we have $C_{\phi}(e^{2})=\sqrt{v_{\phi}+(\cos\phi/2)^{n}}$,
$C_{\phi}(e^{2}+e^{1})=\sqrt{v_{\phi}-(\cos\phi/2)^{n}}$.
\item For $n\equiv3\bmod4$ we have $C_{\phi}(e^{2})=\sqrt{v_{\phi}-(\cos\phi/2)^{n}}$,
$C_{\phi}(e^{2}+e^{1})=\sqrt{v_{\phi}+(\cos\phi/2)^{n}}$.
\item For any quadratic polynomial $p$ besides $e^{2}$, $e^{2}+e^{1}$
we have \\
 $C_{\phi}(p)\le\sqrt{1-\Omega(\sin\phi-\cos\phi)}\cdot\sqrt{v_{\phi}}.$
\end{enumerate}
\item Suppose $\phi\in[7\pi/4,2\pi)\cup[0,\pi/4]$. Then $C_{\phi}(0)=\left(\frac{1+|\cos\phi|}{2}\right)^{n/2}$
and for any quadratic $p$ besides $0$ we have $C_{\phi}(p)\le(1-\Omega(1))\cdot C_{\phi}(0).$
\item Suppose $\phi\in[3\pi/4,5\pi/4]$. Then $C_{\phi}(e^{1})=\left(\frac{1+|\cos\phi|}{2}\right)^{n/2}$
and for any quadratic $p$ besides $e^{1}$ we have $C_{\phi}(p)\le(1-\Omega(1))\cdot C_{\phi}(e^{1}).$
\end{enumerate}
For the case of $m=3$ we get
\[
B_{3}(p)=\frac{1}{b(1-b)}\left|\frac{2}{3}\cdot Real(E_{2\pi/3}(p))+1/3-b\right|.
\]

Old inequality

\begin{align*}
2^{n-1}\left(v_{\phi}(0*^{n-1})-c_{\phi}(p,0*^{n-1})\right) & \geq\sum_{y^{T}\in O,y^{U}\in O}\sigma{}^{w(y)}-\frac{\gamma}{\sigma}\sum_{y^{T}\in E,y^{U}\in O}\sigma^{w(y)}.\\
 & =\left(\sum_{y^{U}\in O}\sigma^{w(y^{U})}\right)\left(\sum_{y^{T}\in O}\sigma{}^{w(y^{T})}-\delta\sum_{y^{T}\in E}\sigma{}^{w(y^{T})}\right)\\
 & =\left(\frac{a^{|U|}-b^{|U|}}{2}\right)\left(\frac{(1-\delta)a^{|T|}-(1+\delta)b^{|T|}}{2}\right)\\
 & =\frac{(1-\delta)a^{n-1}-(1-\delta)a^{|T|}b^{|U|}-(1+\delta)a^{|U|}b^{|T|}+(1+\delta)b^{n-1}}{4}\\
 & \geq\frac{(1-\delta)a^{n-1}-(1-\delta)a^{t}b^{n-t-1}-(1+\delta)a^{n-t-1}b^{t}+(1+\delta)b^{n-1}}{4}\\
 & =\frac{\left((1-\delta)-(1-\delta)\frac{b^{n-t-1}}{a^{n-t-1}}-(1+\delta)\frac{b^{t}}{a^{t}}\right)a^{n-1}+(1+\delta)b^{n-1}}{4}.
\end{align*}

The last $\geq$ follows since $(1+\delta)>(1-\delta)$, $a>b$, and
$|T|\geq t$. Next we argue that 
\[
\left(1-\delta\right)\frac{b^{n-t-1}}{a^{n-t-1}}+\left(1+\delta\right)\frac{b^{t}}{a^{t}}\leq\frac{1}{2}(1-\delta)
\]

First we show that
\begin{align*}
\left(1+\delta\right)\frac{b^{t}}{a^{t}} & \leq\frac{1}{4}\left(1-\delta\right)\\
\iff\frac{\sigma+\gamma}{\sigma-\gamma} & \leq\frac{1}{4}\left(\frac{1+\sigma}{1-\sigma}\right)^{|T|}.
\end{align*}

The last inequality holds for $t$ large enough. Second, since $\frac{b}{a}<1,$
and since $n/2>t$ for $n$ large enough,

\[
(1-\delta)\frac{b^{n-t-1}}{a^{n-t-1}}\leq(1-\delta)\frac{b^{t}}{a^{t}}\leq(1+\delta)\frac{b^{t}}{a^{t}}.
\]

Therefore, 
\begin{align*}
2^{n-1}\left(v_{\phi}(0*^{n-1})-c_{\phi}(p,0*^{n-1})\right) & \geq\frac{\frac{1}{2}(1-\delta)a^{n-1}+(1+\delta)b^{n-1}}{4}\\
 & \geq\frac{\frac{1}{2}(1-\delta)a^{n}+\frac{1}{2}(1-\delta)b^{n}}{4a}\\
 & =\frac{(1-\delta)}{4a}\cdot2^{n}v_{\phi}.
\end{align*}
 Dividing both sides by $2^{n-1},$
\begin{align*}
v_{\phi}(0*^{n-1})-c_{\phi}(p,0*^{n-1}) & \geq\frac{(1-\delta)}{2a}\cdot v_{\phi}\\
 & \geq\frac{\sigma-\gamma}{4}\cdot v_{\phi}.
\end{align*}
where the last $\geq$ follows since $a=1+\sigma\leq2$ and $(1-\delta)=\frac{\sigma-\gamma}{\sigma}\geq\sigma-\gamma$
since $\sigma\leq1$.

We need $t$ to be large enough for the below statements to hold.
The below constraints arise for technical reasons, and one should
think of these as saying that
\begin{align*}
4\cdot\frac{\sigma+\gamma}{\sigma-\gamma} & \leq\left(\frac{1+\sigma}{1-\sigma}\right)^{t},\\
2t\log\left(\frac{4}{1-(\sigma-\delta)}\right) & \leq2^{t}\log\left(\frac{1+\sigma}{1+\gamma}\right),\\
2000\left(\frac{\sigma}{1+\sigma}\right)^{2^{t}/2} & \leq\sigma-\gamma.
\end{align*}

\begin{defn}
where $|z|$ is the absolute value (or modulus) of the complex number
$z$. We can generalize this definition to 
\[
CMod_{m}^{k}(x_{1},x_{2},\ldots,x_{n}):=\omega^{k\sum_{i}x_{i}}
\]
\end{defn}

and consider $C_{k\phi}(p)$ for $k\in\{1,\dots,m-1\}$. 

The problem of estimating $C_{\phi}$ when the angle $\phi$ corresponds
to a Mod $m$ function arises often in theoretical computer science.
Bounding the correlation with Mod $m$ is a standard proof technique
in \emph{discrete Fourier analysis }to bound total-variation or Kolmogorov
distance between integer-valued distributions (see, e.g., \cite{GopalanKM15},
Section 9). 

The \emph{boolean mod m} function is $BMod_{m}:\zo^{n}\to\zo$ defined
as $BMod_{m}(x_{1},x_{2},\ldots,x_{n}):=0$ iff $\sum_{i}x_{i}$ is
divisible by $3$. The correlation between a function $p:\zo^{n}\to\zo$
and $BMod_{m}$ is:

Fix any integer $m\geq3$, set $\phi=2\pi/m$ and $\omega=e^{\phi\sqrt{-1}}.$
Note we can alternatively express $\omega$ as $\cos\phi+\sqrt{-1}\sin\phi$.
\begin{defn}
The \emph{complex mod} $m$ function is $CMod_{m}:\zo^{n}\to\C$ defined
as 
\[
CMod_{m}(x_{1},x_{2},\ldots,x_{n}):=\omega^{\sum_{i}x_{i}}.
\]
\end{defn}

first we show for $e^{2}$ on $n$ variables that the lower bound
from Corollary \ref{lem:b-corr-lower-bound} is tight. That is, we
show
\[
B_{m}(e^{2})\leq
\]
we can upper bound the real part by

\begin{align*}
2^{n+1}\left|Real\left[E_{\ell\phi}(e^{2})\right]\right| & =\left|Real\left[(1+i)(1-i\cdot e^{\sqrt{-1}\phi})^{n}+(1-i)(1+i\cdot e^{\sqrt{-1}\phi})^{n}\right]\right|\\
 & \leq\left|Real\left[(1+i)(1-i\cdot e^{\sqrt{-1}\phi})^{n}\right]\right|+o(1)\\
 & =|1-i\cdot\omega^{\ell}|^{n}+o(1).
\end{align*}

where the $\leq$ holds since $|1+i\cdot e^{\sqrt{-1}\phi}|^{n}=o(1)$
and the last $=$ holds since the vector $(1-i\cdot\omega^{\ell})^{n}$
is real. 
\begin{proof}
\begin{align*}
B_{m}(p) & =\frac{1}{b(1-b)}\left|\frac{2}{m}\cdot\sum_{k=1}^{(m-1)/2}Real(E_{k\phi}(p))+(1/m-b)\right|\\
 & \geq\left(m-O(\cos(\pi/m)^{n})\right)\left(\frac{2}{m}\max_{k\in\{1,\dots,m-1\}}\left|Real(E_{k\phi}(p))\right|-\max_{j\in\{1,\dots m-1\}:\ell\neq k}\left|Real(E_{j\phi}(p))\right|-(\cos(\pi/m))^{n}\right)\\
 & \geq\max_{k\in\{1,\dots,m-1\}}\left|Real(E_{k\phi}(p))\right|-m\cdot\max_{j\in\{1,\dots m-1\}:j\neq k}\left|Real(E_{j\phi}(p))\right|-mO(cos(\pi/m))^{n}.
\end{align*}
\end{proof}
Define $E$ to be the argument to $|.|$ in $C$.

Lemma 14 on Overleaf says that
\[
C(e^{2})=\left|(1+I)(1+\zeta)^{n}+(1-I)(1-\zeta)^{n}\right|/2.
\]

Indeed, the above equals $\sqrt{v}$ for $n$ even.
\begin{defn}
$E2:=\frac{(1+I)(1+\zeta)^{n}+(1-I)(1-\zeta)^{n}}{2}$, $E21:=\frac{(1-I)(1+\zeta)^{n}+(1+I)(1-\zeta)^{n}}{2}$

And indeed we have $|E2(n)|=sqrt(v)$ for $n$ even, and $|E21(n)|=sqrt(v+1/4^{n})$
for $n$ odd.
\end{defn}

\begin{verbatim}
d=sqrt(3)/2;
a=1+d;
b=1-d;
w=exp(2*Pi*I/3);
z=exp(2*Pi*I/12);
E2(n) = 2^(-n)*0.5*((1+I)*(1+z)^n + (1-I)*(1-z)^n);
E21(n) = 2^(-n)*0.5*((1-I)*(1+z)^n + (1+I)*(1-z)^n);
v(n) = 2^(-n)*0.5* (a^n + b^n);

angle(z) = atan(imag(z)/real(z));

bufferExp(n,t)=0.25*(a^(n-t)-b^(n-t))*(a^t-b^t - (a^t+b^t)/sqrt(3) );

s=(1+w)/(2*(1+z));

\end{verbatim}
I now would like to know the angles.

I think the $(1-\zeta)$ doesn't matter. So I just need to know what
is the angle of $1+\zeta$.

The angle of $\z$ is $2\pi/12$.

Then angle of $1+\z$ is $2\pi/24$.

The angle of $1+I=\pi/4$.

The angles work modulo $24$, $24$ is all around the circle..

So the angle of $E2(n)=(\pi/4)(n/3+1)$.

And the angle of $E21(n)=(\pi/4)(n/3-1).$

I.e., angles are $\pi n/12\pm\pi/4$

Angles differ by $\pi/2$.

We only care about $n$ mod $12$, since the absolute value of the
real part works mod 12.

Note we can pick $E2$ or $E21$ at pleasure.

When $n=3$, the angle of $E21$ is real.

When $n=9$, the angle of $E2$ is real.

Next case is when $n=2$. Here I am moving $\pi/12$ away from real.
$\cos(\pi/12)=0.966$.

\paragraph{The bound in buffer.}

Expression is maximized when $|T|=|U|=n/2$ it seems. We can ignore
the $b$ terms. Seems we would gain over $v$ $(a/2)^{n}(1/2)(1/2-1/2\sqrt{3})=v\cdot0.25(1-1/\sqrt{3})>v\cdot0.1$.

With the real part you only lose $v(1-\cos(\pi/12))\le v\cdot0.03$.

This appears to do it for $n=2,4$.

When $n=1$ it shouldn't work. We now have $1-\cos(\pi/6)=0.133$
which is too much.

\textbf{{[}I made a mistake, you should {*}minimize{*} the expression.{]}}

The minimum is for $|T|=1$. This yields the gain over $v$:
\[
(a/2)^{n}0.25\cdot(1+d-(1-d)-((1+d)+(1-d))/\sqrt{3})/a.
\]
 This is
\[
(a/2)^{n}0.25\cdot(\sqrt{3}-2/\sqrt{3})/a.
\]

This appears to give a gain of just $v\cdot0.07$, which seems to
be still enough, but needs to be checked carefully.

\paragraph{Where we stand.}

So it seems we can prove symmetric is optimal for $B$ in any case
in which it's uniquely optimal. The easiest next case would be understanding
why for $n=0$ block homo wins. So we need to bound the sum of block
homo.

We can use the same formula as before.

Let $e_{n-1}^{2}$ be $e^{2}$ on $n-1$ variables $x_{2},\ldots,x_{n}$,
and let $p=x_{1}+e_{n-1}^{2}$ (possibly $+e_{n-1}^{1}$, this gives
the $\pm$ below). We have
\[
\E_{x}(-1)^{p(x)}\omega^{w(x)}=(\frac{1-\omega}{2})E_{x_{-1}\in\zo^{n-1}}(-1)^{e^{2}(x_{-1})}\omega^{w(x_{-1})}.
\]

Asymptotically, the inner expectation is
\[
(1\pm I)(1+\zeta)^{n}/2
\]
hence
\[
\E_{x}(-1)^{p(x)}\omega^{w(x)}=(\frac{1-\omega}{2})(1\pm I)(1+\zeta)^{n-1}/2.
\]

This is the same sum as before but multiplied by
\[
s:=\frac{(1-\omega)}{2(1+\z)}.
\]

So what we are saying is that for $n\equiv0\mod12$,
\[
Real(s(1\pm I)(1+\zeta)^{n})>Real((1\pm I)(1+\zeta)^{n})
\]

Actually this doesn't seem true, but it might be just a calculation
mistake.

\section{Trash}

Note for We can write 
\begin{align*}
(1+i)(1-i\omega)^{n-1} & =\frac{\gamma}{|1-i\omega|}e^{\sqrt{-1}((\pi/12)(n-1)+\pi/4)}=\frac{\gamma}{|1-i\omega|}e^{\sqrt{-1}(\pi/4)},\\
(1-i)(1+i\omega)^{n-1} & =\frac{\e}{|1+i\omega|}e^{\sqrt{-1}((-5\pi/12)(n-1)-\pi/4)}=\frac{\e}{|1+i\omega|}e^{\sqrt{-1}(-\pi/4)}
\end{align*}
Since $-\omega=e^{\sqrt{-1}(-\pi/3)}$ we have 
\[
(1-\omega)=|1-\omega|e^{\sqrt{-1}(-\pi/6)}.
\]
Hence 
\begin{align*}
(1-\omega)\left[(1+i)(1-i\omega)^{n-1}+(1-i)(1+i\omega)^{n-1}\right] & =\frac{|1-\omega|}{|1-i\omega|}\gamma\cdot e^{\sqrt{-1}(\pi/12)}+\frac{|1-\omega|}{|1+i\omega|}\e\cdot e^{\sqrt{-1}(-5\pi/12)}
\end{align*}
Hence 
\[
2^{n+1}Real(E_{2\pi/3})=\cos(\pi/12)(\frac{|1-\omega|}{|1-i\omega|}\gamma+\cos(5\pi/12)\frac{|1-\omega|}{|1+i\omega|}\e)
\]
where the last equality follows because $n\equiv13\bmod24$. It suffices
to showThis follows We then set 
\[
\alpha=\frac{|1-\omega^{\ell_{1}}|}{|1-i\omega^{\ell_{1}}|}
\]
and conclude by applying Corollary \ref{cor:e_phi_bounds}. 

First we derive an expression for $|Real(E_{2\pi/3}(e^{2}+e^{1}))$.
By Lemma \ref{lem:exact-expr-general},
\[
E_{2\pi/3}(e^{2}+e^{1})=2^{-(n+1)}\left[(1-i)(1-i\omega)^{n}+(1+i)(1+i\omega)^{n}\right].
\]
Since $n\equiv1\bmod12$, then $(1-i\omega)^{n},(1+i\omega)^{n}$
either have the angles of $+e^{\sqrt{-1}(\pi/12)},-e^{\sqrt{-1}(\pi/2+\pi/12)}$
or $-e^{\sqrt{-1}(\pi/12)},+e^{\sqrt{-1}(\pi/2+\pi/12)}$ respectively.
This implies
\begin{align*}
Real((1-i)(1-i\omega)^{n}) & =\pm(cos(\pi/12)+cos(5\pi/12))|1-i\omega|^{n},\\
Real((1+i)(1+i\omega)^{n}) & =\pm(cos(\pi/12)+cos(5\pi/12))|1+i\omega|^{n},
\end{align*}
and the leading $\pm$ is always the same in both expressions. Combining
this with the fact $\cos(5\pi/12)+\cos(\pi/12)=\sqrt{3/2}$ implies
that 
\[
2^{n+1}\left|Real(E_{2\pi/3}(e^{2}+e^{1}))\right|=\sqrt{3/2}\left(|1-i\omega|^{n}+|1+i\omega|^{n}\right).
\]

Now we analyze $|Real(E_{2\pi/3}(q))|$. After conditioning on $x_{1}$
and applying Lemma \ref{lem:exact-expr-general} we get (below $e^{2}$
is on $n-1$ variables):
\begin{align*}
E_{2\pi/3}(q) & =\frac{(1-\omega)}{2}E_{2\pi/3}(e^{2})\\
 & =\frac{(1-\omega)}{2^{n+1}}\left[(1+i)(1-i\omega)^{n-1}+(1-i)(1+i\omega)^{n-1}\right].
\end{align*}
Now we rewrite $(1-\omega)(1\pm i)$. Note that $1-\omega$ points
in the same direction as $e^{\sqrt{-1}(-\pi/6)}$ and by the identity
$|1+e^{\sqrt{-1}\phi}|=2|cos(\phi/2)|$, $|1-\omega|=\sqrt{3}$. Repeating
this argument gives
\begin{align*}
(1-\omega)(1+i) & =\sqrt{3}\cdot(e^{\sqrt{-1}(-\pi/6)}+e^{\sqrt{-1}(\pi/3)}),\\
(1-\omega)(1-i) & =\sqrt{3}\cdot(e^{\sqrt{-1}(-\pi/6)}-e^{\sqrt{-1}(\pi/3)}).
\end{align*}
Since $n=1\mod12$, $(1-i\omega)^{n-1}$ , $(1+i\omega)^{n-1}$ are
both real and furthermore they always point in the same direction.
By the previous equality, 
\begin{align*}
Real((1-\omega)(1+i)(1-i\omega^{\ell_{1}}) & =\pm\sqrt{3}(cos(-\pi/6)+cos(\pi/3))|1-i\omega|^{n-1},\\
Real((1-\omega)(1-i)(1-i\omega^{\ell_{1}}) & =\pm\sqrt{3}(cos(-\pi/6)-cos(\pi/3))|1+i\omega|^{n-1},
\end{align*}
and the leading $\pm$ is the same for both expressions. All together,
and since $\cos(-\pi/6)\pm\cos(\pi/3)=(\sqrt{3}\pm1)/2$, we get
\begin{align*}
2^{n+1}|Real(E_{2\pi/3}(q))| & =((3+\sqrt{3})/2)\cdot|1-i\omega|^{n-1}+((3-\sqrt{3})/2)\cdot|1+i\omega|^{n-1}.
\end{align*}
So to prove $|Real(E_{2\pi/3}(e^{2}+e^{1}))|=|Real(E_{2\pi/3}(q))|$
it remains to show
\begin{align*}
\sqrt{3/2}\cdot|1-i\omega|^{n} & =((3+\sqrt{3})/2)\cdot|1-i\omega|^{n-1}\\
\iff|1-i\omega| & =(\sqrt{3}+1)/\sqrt{2}
\end{align*}
and
\begin{align*}
\sqrt{3/2}\cdot|1+i\omega|^{n} & =((3-\sqrt{3})/2)\cdot|1+i\omega|^{n-1}\\
\iff|1+i\omega| & =(\sqrt{3}-1)/\sqrt{2}.
\end{align*}

\section{Mod 5}

Let $\mu=e^{2\pi\sqrt{-1}/5}=\cos(2\pi/5)+\sqrt{-1}\sin(2\pi/5)$.
We have
\begin{align*}
\cos(2\pi/5) & =\frac{-1+\sqrt{5}}{4}\approx0.309\\
\sin(2\pi/5) & =\sqrt{\frac{5}{8}+\frac{\sqrt{5}}{8}\approx0.951}\\
|\frac{1}{2}(1+\mu)| & \approx0.809\\
|\frac{1}{2}(1-\mu)| & \approx0.587\\
\frac{1}{2}(\mu^{-1}+\mu) & =\cos(2\pi/5)\\
\frac{1}{2}(\mu^{-1}-\mu) & =-\sqrt{-1}\sin(2\pi/5)\\
v_{5} & :=2^{-n}\sum_{y\in E}\sin(2\pi/5)^{w(y)}=\Theta\left(\left(\frac{1+\sin(2\pi/5)}{2}\right)^{n}\right)\\
\sqrt{\frac{1+\sin(2\pi/5)}{2}} & \approx0.987
\end{align*}

Define $C_{5}$ to be the correlation with the $CMod_{5}$ function
defined as $CMod_{5}(x):=\mu^{\sum_{i}x_{i}}$. We have
\begin{align*}
C_{5}(0) & =|\frac{1}{2}(1+\mu)|^{n}\approx0.809^{n}\\
C_{5}(e^{1}) & =|\frac{1}{2}(1-\mu)|^{n}\approx0.587^{n}\\
C_{5}(e^{2}) & =\sqrt{v_{5}}\approx0.987^{n}.
\end{align*}

There remains to show that non-symmetric polynomials do not achieve
the maximum. The proof strategy is the same. We now analyze the corresponding
lemmas. We claim that they hold as stated with $v$ replaced by $v_{5}$.

Lemma \ref{lem:odd-even}: The verification is immediate.

Lemma \ref{lem:gap}: The bound in Claim \ref{claim:contribution-is-even}
now becomes
\[
\sin(2\pi/5)^{e}\cdot\cos(2\pi/5)^{t}
\]
where $e+t=w(y)$ and $e$ is even. In the proof of the lemma, this
allows us to bound the max contribution of $ry$, when $w(r)$ is
even and and $y$ is odd, by
\[
\le\sin(2\pi/5)^{w(ry)}\cdot\frac{\cos(2\pi/5)}{\sin(2\pi/5)}.
\]
Note that $\frac{\cos(2\pi/5)}{\sin(2\pi/5)}\approx0.324<1/\sqrt{3}$.
And this allows the rest of the proof to go through as stated.

The same logic applies to the Buffer Lemma \ref{lemma:degree-1-partition-loses}.

Regarding Lemma \ref{lemma:degree-1-partition-loses}. We have the
bound
\[
|c_{ry}(p)|\le\sin(2\pi/5)^{j}\cos(2\pi/5)^{w(y)}.
\]

And so
\[
c(p,r)\le2^{-S(r)}\sin(2\pi/5)^{j}(1+\cos(2\pi/5))^{S(r)}.
\]

Hence it suffices to prove that
\[
(1+\cos(2\pi/5))^{S(r)}\le\left((1+\sin(2\pi/5)^{S(r)}-(1-\sin(2\pi/5)^{S(r)})\right)/2.
\]

This holds for $S(r)$ large enough because $\cos(2\pi/5)<\sin(2\pi/5)$.
In fact, it holds for $S(r)\ge4$.

Finally, the verification of Lemma \ref{lem:e2-upper-bound} is also
immediate.

\section{Deleted scenes}

To build intuition, let us consider $CMod_{3}$. The correlation of
a polynomial $p$ with it is $C_{2\pi/3}(p)$. Letting $\omega=e^{2\pi/3\sqrt{-1}}$,
we can write $C_{2\pi/3}(p)=|p_{0}\omega^{0}+p_{1}\omega^{1}+p_{2}\omega^{2}|$
where $p_{i}$ is the average of $p$ on inputs with Hamming weight
congruent to $i$ modulo $3$. If these averages are roughly the same,
then the corresponding complex numbers $p_{i}\omega^{i}$ will mostly
cancel each other, resulting in a number close to $0$ and hence with
small absolute value. If on the other hand some $p_{i}$ is much bigger
than the others, then the complex numbers will not cancel each other
and the absolute value will be large.

---

The first part of our results on boolean correlations allows us to
obtain new circuit lower bounds.

Let us illustrate the case of $m=3$. One can show that $B_{3}(p)\leq O(C_{2\pi/3}(p))$
for any $p$ (see Proposition \ref{prop:b-corr-general-bounds}).
By Theorem \ref{thm:main-general}, we have that $C_{2\pi/3}(p)\leq\sqrt{v_{2\pi/3}+1/4^{n}}=O(\alpha^{n})$
where $\alpha=\sqrt{(1+\sqrt{3}/2)/2}$. This implies a lower bound
of $\Omega(1/\a^{n})$ to compute $BMod_{3}$ using a circuit consisting
of a majority of degree-two polynomials. Previous techniques could
only establish $\Omega(1/\b^{n})$ for $1/\b<1/\a$.

\section{20220123 Computing $E$}

This computes E for ae\textasciicircum 1 + be\textasciicircum 2
on n bits with mod\_2pi/3
\begin{verbatim}
w=exp(2*Pi*I/3);
v(n) = ((1+sin(2*Pi/3))^n - (1-sin(2*Pi/3))^n)*2^(-n-1);
E(a,b,n) = sum(i=0,n,binomial(n,i)*(-1)^(a*i+b*binomial(i,2))*w^i)/2^n;

zx	

\end{verbatim}
Verification:
\begin{verbatim}
abs(E(1,1,10)) is about sqrt(v(10))

For m = 3 now I want to know which polynomial maximizes
|Real(E)|

First I verify that for n = 9+12k, E = Real(E)
This is true

Now consider n = 12k.

(10:05) gp > abs(real(E(0,1,12)))
%56 = 0.3298339844

I want to know by how much I can beat this with non-symmetric.

The value for the poly q in the paper is given by

(10:05) gp > abs(real(E(1,0,1)*E(0,1,11)))
%55 = 0.4182128906

Now I want to know if having larger blocks helps.
abs(real(E(1,0,2)*E(0,1,10)))

The answer seems no, larger blocks don't seem to help.

Here's a verification:

for(i=0,120,print(abs(real(E(1,1,i)*E(0,1,120-i))) ))

I tried several combinations, and the best seems to be i=1

Now I want to know what is the maximum gain by these polynomials.

g(k)=abs(real(E(1,0,1)*E(0,1,12*k-1)))/abs(real(E(0,1,12*k)))

The gain seems to approach 1.267.

We have

(10:18) gp > 1.267/sqrt(2)

%87 = 0.8959042918

So in this case we could improve 1/sqrt(2) ~ 0.7 with the above.
\end{verbatim}

\end{document}